\newtheorem{theorem}{Theorem}[section]
\newtheorem{lemma}[theorem]{Lemma}
\newtheorem{proposition}[theorem]{Proposition}
\theoremstyle{definition}
\newtheorem{definition}[theorem]{Definition}
\newtheorem{example}[theorem]{Example}
\theoremstyle{remark}
\newtheorem{remark}[theorem]{Remark}
\newcounter{num}
\renewcommand{\thenum}{\roman{num}}
\newenvironment{hlist}
{\begin{list}{}{
\setlength{\leftmargin}{2em}
\setlength{\labelwidth}{1.6em}
\setlength{\itemsep}{0.1ex}
}}{\end{list}}
\newtheorem{corollary}[theorem]{Corollary}
\numberwithin{equation}{section}
\newcommand\la{\lambda}
\newcommand\al{\alpha}
\newcommand\de{\delta}
\newcommand\CL{{\mathcal L}}
\newcommand\CH{{\mathcal H}}
\newcommand\CA{{\mathcal A}}
\newcommand\CB{{\mathcal B}}
\newcommand\CC{{\mathcal C}}
\newcommand\CD{{\mathcal D}}
\newcommand\CQ{{\mathcal Q}}
\newcommand\CR{{\mathcal R}}
\newcommand\CS{{\mathcal S}}
\newcommand\CT{{\mathcal T}}
\newcommand\CW{{\mathcal W}}
\newcommand{\efrac}[2]{\genfrac{}{}{0pt}{}{#1}{#2}}
\renewcommand{\matrix}[2]{\left( \!\! \begin{array}{#1} #2 \end{array} \!\! \right)}
\newcommand{\ov}[1]{\overline{#1}}
\renewcommand\d{\,{\rm d}}
\newcommand{\N}{{\mathbb N}}
\newcommand{\R}{{\mathbb R}}
\newcommand{\C}{{\mathbb C}}
\newcommand\eps{\varepsilon}
\newcommand{\ds}{\displaystyle}
\renewcommand\Re{{\rm Re\,}}
\renewcommand\Im{{\rm Im\,}}
\newcommand\dist{{\rm dist\,}}
\newcounter{marke}
\newcommand{\bl}{\begin{list}{\roman{marke})}{\usecounter{marke}
\topsep 0 cm \itemsep 0cm}}
\newcommand{\el}{\end{list}}
\begin{document}

\title[Spectrum of magnetohydrodynamic mean-field dynamo operators]
{On the spectrum of the magnetohydrodynamic mean-field {\boldmath $\alpha^2$}-dynamo operator}

\author{Uwe G\"unther}
\address{Research Center Dresden-Rossendorf, P.O. Box 510119, D-01314 Dresden, Germany}
\email{u.guenther@fzd.de}
\author{Heinz Langer}
\address{Institut f\"ur Analysis und Scientific Computing, Technische Universit\"at
Wien, Wiedner Hauptstr.\ 8--10, A--1040 Wien, Austria}
\email{hlanger@email.tuwien.ac.at}
\author{Christiane Tretter}
\address{Mathematisches Institut, Universit\"at Bern, Sidlerstr.\ 5, 3012 Bern, Switzerland}
\email{tretter@math.unibe.ch}
\thanks{The first and the last author gratefully acknowledge support by the Deutsche Forschungsgemeinschaft, DFG, within the Collaborative Research Center (Sonderforschungsbereich) 609 and under Grant No.\ TR368/6-1, respectively.}

\subjclass{}
\date{01.04.2010}

\begin{abstract}
The existence of magnetohydrodynamic mean-field $\al^2$-dynamos
with spherically symmetric, isotropic helical turbulence function $\al$ is related to 
a non-self-adjoint spectral problem for a coupled system of two singular second order 
ordinary differential equations. We establish global estimates for the 
eigenvalues of this system in terms of the turbulence function $\al$ and its derivative~$\al'$.
They allow us to formulate an anti-dynamo theorem and a non-oscillation theorem.
The conditions of these theorems, which again involve $\al$ and $\al'$, must be violated 
in order to reach supercritical or oscillatory regimes.
\end{abstract}

\maketitle

\section{Introduction}
\label{intro}

One of the simplest models of a magnetohydrodynamic dynamo is a mean-field $\alpha^2$-dynamo with spherically symmetric, isotropic turbulence
function ($\alpha$-profile) $\alpha(r)$. Models of this type were
among the first that were capable to explain the turbulence-based
inverse energy cascade leading to non-decaying dynamo regimes with
self-sustaining magnetic fields (see \cite{MR668520}). Recently, $\alpha^2$-dynamo models~were used for detailed studies of polarity reversal processes of
magnetic fields (see,~e.g., \cite{stefani:184506}, \cite{2006E&PSL.243..828S}),
as suggested by paleomagnetic data of the Earth's magnetic field.
Deep insight into the reversal dynamics could be gained by analyzing the highly non-linear
back-reaction-based induction processes by a time foliation method built
over a series of instantaneously linearized (kinematic) auxiliary setups. For the
latter, the $\alpha$-profiles were assumed to be fixed radial functions
so that the well-known decomposition of the magnetic field into poloidal and toroidal components 
with subsequent expansion in spherical harmonics could be used. As a result, one arrives at a set 
of eigenvalue problems, indexed by the degree $l\in\N=\{1,2,\dots\}$ of the spherical harmonics, 
for pairs of coupled linear ordinary differential~equations 
\begin{align}
\label{diffsyst}
 \matrix{cc}{ \partial_r^2 - \displaystyle{\frac{l(l+1)}{r^2}} & \alpha(r) \\ 
              - \partial_r \alpha(r) \partial_r + \alpha(r)\displaystyle{\frac{l(l+1)}{r^2}} & \partial_r^2 - \displaystyle{\frac{l(l+1)}{r^2}}}
 \binom{y_1}{y_2} = \lambda \binom{y_1}{y_2}, \quad r\in (0,1],
\end{align}
in $L_2(0,1) \oplus L_2(0,1)$ subject to the boundary condition 
\begin{align}
\label{boundcond}
 \quad \binom{(\partial_r + l ) y_1}{y_2}(1) = 0
\end{align}
(see \cite[Section~14.2]{MR668520}, \cite[(6)--(8)]{stefani-2003-67}). 
In the following, for brevity, we call the eigenvalue problem \eqref{diffsyst}, \eqref{boundcond} dynamo problem. 

In view of the time separation ansatz ${\bf \bar B}(x,t) = \exp(\lambda t) {\bf \bar b}(x)$ of the corresponding magnetic field modes, 
one naturally distinguishes between decaying or subcritical modes ($\Re \lambda<0$) and amplifying or supercritical modes ($\Re \lambda>0$), as well as between
oscillatory modes ($\Im \lambda\neq 0$) and non-oscillatory modes ($\Im\lambda=0$). The physically relevant self-sustaining dynamo configurations are
mainly defined by a few supercritical modes, whereas possible
polarity reversals of the magnetic fields are closely related to the
existence of oscillatory modes close to criticality ($\Re \lambda \approx 0$, $\Im \la \ne 0$) (see~\cite{stefani:184506}, \cite{2006E&PSL.243..828S}). 
For a deeper understanding of the physical interplay between the plasma- (or conducting fluid-) based $\alpha$-profile and the dynamo
dynamics, knowledge of the spectral properties of the dynamo problem \eqref{diffsyst}, \eqref{boundcond} 
is of utmost~interest. 

{
\renewcommand{\i}{{\rm i}}


In contrast to the large number of general results for kinematic fast dynamo problems (see, e.g., \cite[Chapter~V]{MR1612569} as well as \cite{MR1355630}) and despite the simplicity of the physical model, no proven analytic information for the spectrum of the kinematic mean-field dynamo problem  \eqref{diffsyst}, \eqref{boundcond} seems to be available, except for the case of constant $\alpha$ where the spectrum is real. For non-constant functions $\alpha$, only numerical calculations for eigenvalues were performed (see, e.g., \cite{stefani-2003-67}). However, for a non-symmetric spectral problem like \eqref{diffsyst}, \eqref{boundcond}, numerical computations are prone to be unreliable; a con\-vincing example for this is a non-normal $7\times 7$ matrix due to S.K.~Godunov for which numerical algorithms yield results far away from the true eigenvalues (see, e.g., \cite{MR1620203}, \cite[Example~5.2.5]{MR2116013}).

Other attempts to attack the above dynamo problem include replacing the physical boundary conditions by the so-called idealized boundary condition
\begin{align}
\label{boundcond-id}
 \binom{y_1}{y_2}(1) = 0,
 \end{align}
assuming that the helical turbulence function $\alpha$ is strictly positive, and/or analyzing the simpler monopole modes, that is, the case $l=0$, 
(see, e.g., \cite{MR1982771}, \cite{MR2252699}, \cite{kirillov:016205}); all these restrictions are not desirable from the physical point of view.

The aim of this paper is to establish analytic enclosures for the spectrum of the dynamo problem \eqref{diffsyst}, \eqref{boundcond} 
in terms of the function $\alpha$, not making any of the above simplifying assumptions. We show, in particular, that the eigenvalues lie in a horizontal strip around the real axis and we establish an upper bound for their real parts. More precisely, every eigenvalue $\la$ of \eqref{diffsyst}, \eqref{boundcond} satisfies
\begin{equation}
\label{optimal}
  |\Im \la| \le \|\al'\|, \quad \Re \la \le \rho_\theta
\end{equation}
where the constant $\rho_\theta$ depends on $\|\al\|$ and $\|\al'\|$, as well as  
on the smallest eigenvalues $\la_1(l)$ and $\la_1(\infty)$ of the Bessel operator $-\partial_r^2 + l(l+1)/r^2$ with boundary condition
$y'(1)+ly(1)=0$ and $y(1)=0$, respectively (compare \eqref{boundcond}). Here, for a continuous function $q$  on $[0,1]$, we denote by $\|q\|:=\max_{r\in[0,1]}|q(r)|$ the maximum norm of $q$.
The estimate for the real part in \eqref{optimal} yields, in particular, a so-called anti-dynamo theorem: if 
\[
 \|\alpha\|^2 + \dfrac{\|\al\|\|\alpha'\|}{\sqrt{\lambda_1(l)}} < \lambda_1(\infty), 
\]
then the dynamo operator has no eigenvalues in the (supercritical) right half-plane (compare the more geometrical anti-dynamo theorems for kinematic fast dynamos listed, e.g., in \cite[Chapter V, \S 3]{MR1612569}).
In addition, we establish conditions 
ensuring that a particular eigenvalue remains on the real axis when $\alpha \not\equiv 0$, which could be called a local non-oscillation theorem.
Our results also cover other boundary conditions such as the idealized case \eqref{boundcond-id}, and they do not require $\alpha$ to be sign definite. 
The methods we employ stem from the perturbation theory of linear operators and from the spectral theory of block operator matrices 
(see, e.g., \cite{MR0407617},~\cite{book}). 

In the following we give a brief outline of the paper.
In Section \ref{section2} we introduce a family of linear operators $\CA_\theta$ in $L_2(0,1) \oplus L_2(0,1)$
so that we can write the boundary eigenvalue problems \eqref{diffsyst}, \eqref{boundcond} and \eqref{diffsyst}, \eqref{boundcond-id} as 
spectral problems $(\CA_\theta-\la) y = 0$ with $\theta=l$ and $\theta=\infty$, respectively. The operators $\CA_\theta$ are block operator matrices of the form
\[
 \CA_\theta := \matrix{cc}{ -A_\theta & \alpha \\ \hspace{4.5mm} A_{\theta,\alpha} & - A_\infty}, \quad 
 \CD(\CA_\theta) := \CD(A_\theta) \oplus \CD(A_\infty).
\]
Here $A_\theta$ and $A_{\theta,\al}$ are linear operators in $L_2(0,1)$ given by the Bessel (type) differential expressions $-\partial_r^2 + l(l+1)/r^2$ 
and $-\partial_r \al \partial_r +\al  l(l+1)/r^2$, respectively, and the boundary condition $x'(1)+\theta x(1)=0$ for $\theta \in [0,\infty]$ 
(note that $\theta=\infty$ corresponds to $x(1)=0$).
In Section \ref{sec-aux} we investigate the entries of the block operator matrices $\CA_\theta$ in detail 
and collect all the properties needed in the subsequent sections, e.g., to show that $\CA_\theta$ defines a closed linear operator and that its spectrum consists only of eigenvalues 
of finite algebraic multiplicities with no finite accumulation point.

Section \ref{section3} contains our first main result, the eigenvalue enclosure in Theorem~\ref{!!!}. 
This first estimate is based on decomposing $\CA_\theta$ into a lower triangular block operator matrix 
plus a bounded part containing only the right upper corner $\al$:
\[
   \CA_\theta := \matrix{cc}{ -A_\theta & 0 \\ \hspace{4.5mm} A_{\theta,\alpha} & - A_\infty}  + \matrix{cc}{0 & \al \\ 0 & 0}=\CQ_\theta + \CR.
\]
This decomposition, together with a Neumann series argument, enables us to identify a region in the complex plane in which the eigenvalues must lie. In addition,
we establish bounds on $\|\alpha\|$ and $\|\alpha'\|$ such that an eigenvalue of one of the diagonal elements of $\CA_\theta$, that is, an eigenvalue when $\alpha\equiv 0$, remains real 
for $\alpha \not\equiv 0$ (see Propositions \ref{local} and \ref{question}). 
Physically speaking, this means that a particular mode remains non-oscillating.

Section \ref{section4} contains our second main result, the eigenvalue enclosure in Theorem~\ref{strip}. 
Here we use a quasi-similarity transformation of $\CA_\theta$ (with the unbounded operator $\CW_\theta = {\rm diag}\,(A_\theta^{1/2},I)$) 
such that the transformed operator $\CB_\theta$ is a bounded perturbation of a self-adjoint operator: 
\[
 \CB_\theta = \matrix{cc}{-A_\theta & A_\theta^{1/2} \alpha  \\  \alpha A_\theta^{1/2} & - A_\infty}
              + \matrix{cc}{0 & 0 \\ -\alpha' D A_\theta^{-1/2} & 0} = \CS_\theta + \CT_\theta. 
\]
This allows us to conclude that the eigenvalues of $\CA_\theta$ lie in discs of radius $\|\al'\|$ (the norm of the perturbation $\CT_\theta$) 
around the eigenvalues of the self-adjoint operator~$\CS_\theta$; in particular, the imaginary parts of the eigenvalues of $\CA_\theta$ are bounded by $\|\al'\|$ 
and their real parts are bounded by $\max \sigma(\CS_\theta) + \|\al'\|$.

In Section \ref{section5} we compare the two different eigenvalue enclosures of Sections~\ref{section3} and~\ref{section4}. It turns out that, generically, a combination of 
both estimates yields the best result: it consists of the uniform estimate for the imaginary parts of the eigenvalues from Section \ref{section4} and of the upper bound for the real parts of the eigenvalues from Section \ref{section3} (see \eqref{optimal} and Figure~\ref{stab}). At the end of Section~\ref{section5} we summarize the physically most relevant result in the form of an anti-dynamo~theorem (see \eqref{stable1}).  

In Section \ref{section-last} we illustrate our results by some examples. They include the case of constant $\alpha$ where the eigenvalues of the physical dynamo problem are only given  implicitly as solutions of an equation involving four different Bessel functions. We also consider the particular non-constant function $\alpha$ for which dipole-dominated oscillatory criticality was first found numerically by F.\ Stefani and G. Gerbeth (see \cite{stefani-2003-67});
in their computations they obtained special $\alpha$-profiles such that the right-most eigenvalues pass from $\Re \lambda<0$ to $\Re \lambda>0$ with $\Im \lambda\neq 0$
first for the dipole modes ($l=1)$ and only afterwards for quadrupole and higher-degree modes ($l>1$).

The following notation is used throughout the paper. In the Hilbert space~$L_2(0,1)$ the scalar product and norm 
are denoted by $(\cdot,\cdot)$ and $\|\cdot\|$, respectively. By ${\rm AC_{loc}}(I)$ we denote 
the space of locally absolutely continuous functions on a (sub-)\,interval $I\subset[0,1]$ (that is, the space
of functions that are differentiable (Lebesgue-) almost every\-where on~$I$ and have locally Lebesgue-integrable derivative on $I$). 
For a linear operator $T$  we denote by $\CD(T)$ its domain, by $\sigma(T), \,\sigma_{\rm p}(T),\, \rho(T)$ its spectrum, 
the set of its eigenvalues, and its resolvent set, respectively; if $T$ is bounded, we denote by $\|T\|$ its operator norm.
For a continuous function $q \in C([0,1])$, we use the same symbol $q$ to denote the (bounded) multiplication operator by the function $q$ in $L_2(0,1)$; in this case,
the operator norm is given by the maximum norm $\|q\|=\max_{r\in [0,1]} |q(r)|$. For details on linear operators in Hilbert spaces we refer the reader to 
\cite[Chapter~III, \S\ 2, 5, 6]{MR0407617}, \cite[Chapters~VI, VIII]{MR751959}, and~\cite{MR0493421}.

Finally, we would like to thank Frank Stefani and Gunter Gerbeth (Research Center Dresden-Rossendorf) 
for useful discussions on the physics of dynamos and for sharing their numerical data with us.
Special thanks also go to Markus Wagenhofer (now Psylock, Regensburg) who performed various numerical calculations
while this manuscript was being prepared.


\section{Operator model}
\label{section2}

In this section we introduce  a family of linear operators $\CA_\theta$ in the product Hilbert space $L_2(0,1)\oplus L_2(0,1)$ 
which describe the system of differential equations \eqref{diffsyst} with physical boundary conditions \eqref{boundcond} and 
idealized boundary conditions \eqref{boundcond-id}.

To this end, we first associate linear operators in the Hilbert space $L_2(0,1)$ with the two differential expressions 
\begin{align} 
\label{taus}
\tau := -\partial_r^2 + \displaystyle{\frac{l(l+1)}{r^2}}, \quad 
\tau_\alpha:= - \partial_r \alpha(r) \partial_r + \alpha(r)\displaystyle{\frac{l(l+1)}{r^2}},
\quad \ \partial_r:=\dfrac{{\rm d}}{{\rm d}r},
\end{align}
occurring in \eqref{diffsyst}. Here and in the following, we always suppose that $l\in\N=\{1,2,\dots\}$ is fixed and that $\al\!:[0,1]\to\R$ is a continuously differentiable real-valued function, $\alpha\in C^1([0,1])$. 

The classical Bessel differential expression $\tau$ appears twice on the diagonal in \eqref{diffsyst}, but it is subject to two different boundary conditions at $r=1$ in \eqref{boundcond}. Therefore we define a family of operators $A_\theta$, $\theta \in [0,\infty]$, by means of the boundary condition $y'(1) + \theta \, y(1) = 0$, so that $\theta=l$ yields the first order boundary condition for $y_1$ and $\theta=\infty$ yields the Dirichlet boundary condition for $y_2$ in \eqref{boundcond}.

\begin{definition}
\label{A_theta-def}
Let $\theta\in[0,\infty]$. Define the linear operator $A_\theta$ in $L_2(0,1)$ by
\begin{align*}
  \CD(A_\theta ) &:=\big \{ x \in L_2(0,1): x, x' \!\in {\rm AC_{loc}}((0,1]), \ \tau x \in L_2(0,1),  
                                        \ x'(1) + \theta \, x(1) = 0\big\},\\
      (A_\theta x )(r)&:= (\tau x) (r)= - x''(r) + \frac{l(l+1)}{r^2} x(r) \quad \text{for almost every }r\in (0,1].
\end{align*}
\end{definition}

\begin{remark}
\label{0!}
Every $x\in \CD(A_\theta)$ automatically satisfies $\lim_{r\searrow 0} x(r) =0$. This follows easily if one calculates the inverse $A_\theta^{-1}$ 
and checks that $\lim_{r\searrow 0}(A_\theta^{-1} f)(r)=0$ for $f\in L_2(0,1)$ (see, e.g., \cite[Lemma 2.6 and its proof]{MR2356213}); note that a fundamental system of $\tau y = 0$ is given by
the simple functions $x_1(r)=r^{-l}\!$, $x_2(r)=r^{l+1}\!$,~$r\!\in\!(0,1]$.
\end{remark}

The differential expression $\tau_\alpha$ in \eqref{taus} is not classical since its leading coefficient $\alpha$ may change sign and may hence also have singularities in the interior of the interval~$[0,1]$.
Since $\tau_\alpha$ is only located in an off-diagonal corner in \eqref{diffsyst}, the following definition is sufficient for us.

\begin{definition}
\label{A_theta-alpha-def}
Let $\theta\in[0,\infty]$. Introduce the linear operator $A_{\theta,\alpha}$ in $L_2(0,1)$~by
\begin{align*}
  \CD(A_{\theta,\alpha} ) &:= \CD(A_\theta), \\ 
  (A_{\theta,\alpha} x)(r) &:= (\tau_\alpha x)(r) = - (\alpha x')'(r) + \alpha(r) \frac{l(l+1)}{r^2} x(r) \ \, \text{for  almost every }r\in (0,1].
\end{align*}
\end{definition}

\begin{remark}
In Proposition \ref{A_theta-alpha} below we show that $A_{\theta,\alpha} x \in L_2(0,1)$ for $x\in \CD(A_\theta)$, that is, $A_{\theta,\alpha}$ is well-defined.
\end{remark}

Now we are able to formulate the dynamo problems \eqref{diffsyst}, \eqref{boundcond} and \eqref{diffsyst}, \eqref{boundcond-id} 
as spectral problems for a linear operator $\CA_\theta$ acting in the product Hilbert space $L_2(0,1) \oplus L_2(0,1)$. Here
the case $\theta=l$ corresponds to the physical boundary condition \eqref{boundcond}, while $\theta=\infty$ corresponds 
to the idealized boundary condition~\eqref{boundcond-id}.

\begin{proposition}
\label{equiv}
Let $\theta\in[0,\infty]$. Define a linear operator  $\CA_\theta$ in $L_2(0,1) \oplus L_2(0,1)$ by the block operator matrix 
\begin{align}
\label{bomA_l}
 \CA_\theta := \matrix{cc}{ -A_\theta & \alpha \\ \hspace{4.5mm} A_{\theta,\alpha} & - A_\infty}, \quad 
 \CD(\CA_\theta) := \CD(A_\theta) \oplus \CD(A_\infty).
\end{align}
Then the boundary eigenvalue problems \eqref{diffsyst}, \eqref{boundcond} and \eqref{diffsyst}, \eqref{boundcond-id} 
can be written equivalently as 
$$
  (\CA_\theta-\la) y = 0, \quad y\in \CD(\CA_\theta),
$$ 
for $\theta=l$ and $\theta=\infty$, respectively.
\end{proposition}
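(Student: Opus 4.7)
The proposition is essentially a matching-of-definitions statement, so my plan is to unpack $(\CA_\theta - \la) y = 0$ componentwise and then check that the domain condition $y \in \CD(A_\theta) \oplus \CD(A_\infty)$ recovers exactly the required boundary conditions, paying attention to the singular endpoint $r=0$.

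First, I would write $y = \binom{y_1}{y_2}$ with $y_1 \in \CD(A_\theta)$, $y_2 \in \CD(A_\infty)$, and compute $\CA_\theta y$ row by row using Definitions \ref{A_theta-def} and \ref{A_theta-alpha-def}. The first row gives
\[
 -A_\theta y_1 + \al y_2 = y_1'' - \frac{l(l+1)}{r^2} y_1 + \al(r) y_2,
\]
and the second row gives
\[
 A_{\theta,\al} y_1 - A_\infty y_2 = -(\al y_1')' + \al(r)\frac{l(l+1)}{r^2} y_1 + y_2'' - \frac{l(l+1)}{r^2} y_2,
\]
both understood for almost every $r \in (0,1]$. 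Setting these equal to $\la y_1$ and $\la y_2$, respectively, is precisely the system \eqref{diffsyst}.

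Next, I would translate the condition $y \in \CD(\CA_\theta) = \CD(A_\theta) \oplus \CD(A_\infty)$ into pointwise statements. Membership of $y_1$ in $\CD(A_\theta)$ encodes $y_1, y_1' \in \mathrm{AC}_{\mathrm{loc}}((0,1])$, $\tau y_1 \in L_2(0,1)$, and the boundary condition $y_1'(1) + \theta\, y_1(1) = 0$, while $y_2 \in \CD(A_\infty)$ encodes the same regularity together with $y_2(1)=0$. For $\theta = l$ these combine to $\binom{(\partial_r + l)y_1}{y_2}(1)=0$, which is \eqref{boundcond}; for $\theta = \infty$, the first condition degenerates to $y_1(1)=0$, yielding \eqref{boundcond-id}. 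The implicit condition that $y_1, y_2$ be regular at the singular endpoint $r=0$ is captured by Remark \ref{0!}, which ensures $\lim_{r \searrow 0} y_j(r) = 0$ for $j=1,2$, so no additional endpoint condition needs to be imposed.

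The one point that is not completely routine is that the off-diagonal entry $A_{\theta,\al}$ must send $y_1 \in \CD(A_\theta)$ into $L_2(0,1)$; otherwise the block matrix action is not well-defined. Here I would simply cite the forthcoming Proposition \ref{A_theta-alpha}, which the authors announce immediately after Definition \ref{A_theta-alpha-def}. With that in hand, the equivalence of the two formulations reduces to the componentwise identification above, and the proposition follows. I expect no genuine obstacle beyond the bookkeeping between the two boundary-condition regimes $\theta = l$ and $\theta = \infty$.
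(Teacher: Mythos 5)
Your proposal is correct and follows the same route as the paper, which simply declares the claim ``evident from the definitions'' of $\CA_\theta$, $A_\theta$, and $A_{\theta,\alpha}$; you have merely spelled out the componentwise unpacking that the authors leave implicit. Your observation that well-definedness of $A_{\theta,\alpha}$ rests on the forthcoming Proposition~\ref{A_theta-alpha} is a sensible bit of bookkeeping that the paper also defers to that result.
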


\begin{proof}
The claim is evident from the definitions of the operator $\CA_\theta$ and of its entries $A_\theta$ and $A_{\theta,\al}$ for $\theta=l$ and $\theta=\infty$.
\end{proof}

\section{Auxiliary results}
\label{sec-aux}

In this section we study the properties of the entries $A_\theta$ and $A_{\theta,\al}$ of the block operator matrix~$\CA_\theta$ introduced in the previous section. They are used in the next two sections to derive estimates for the eigenvalues of the dynamo problem \eqref{diffsyst},~\eqref{boundcond}.

\begin{proposition}
\label{A_theta}
Let $\theta\in[0,\infty]$. The linear operator $A_\theta$ in Definition {\rm \ref{A_theta-def}} has the following properties: \vspace{1mm}
\begin{enumerate}
\item[{\rm i)}] $\!A_\theta$ is self-adjoint and positive; the domain of its square root is given by
\begin{align}\label{ob1}
\quad \CD\big(A_\theta^{1/2}\big)\!&=\!\Big\{ x\in L_2(0,1)\!:x \!\in\!{\rm AC_{loc}}((0,1]),\,  x', \dfrac xr\!\in\! L_2(0,1) \Big\}, \quad \theta \in [0,\infty),\\[2mm]
\quad \CD\big(A_\infty^{1/2}\big) \!&= \!\Big\{ x\in L_2(0,1)\!:x \!\in\!{\rm AC_{loc}}((0,1]),\,  x', \dfrac xr\!\in\! L_2(0,1),\,x(1) = 0 \Big\}.\label{ob2}
\end{align}
\item[{\rm ii)}] $\!A_\theta$ has compact resolvent, and the spectrum $\sigma(A_\theta)$ consists of a sequence of simple eigenvalues 
$0< \lambda_1(\theta) < \lambda_2(\theta) < \cdots$ tending to $\infty$.  \vspace{0.5mm}
\item[{\rm iii)}] $\!\la_k(\theta)$ is strictly increasing in $\theta$ for every $k=1,2,\dots$, and for different values $\theta_1$, $\theta_2$
the sequences $\big(\la_k(\theta_1)\big)_{k=1}^\infty$, $\big(\la_k(\theta_2)\big)_{k=1}^\infty$ interlace; in~particular,
\begin{equation}
\label{interlace}
  0 < \la_1(\theta) < \la_1(\infty) < \la_2(\theta) < \la_2(\infty) < \cdots, \quad \theta \in [0,\infty).
\end{equation}
\item[{\rm iv)}] For $0\le \theta_1 \le \theta_2 \le \infty$ and $\la\ge 0$, we have
\[
  (A_{\theta_2}+\la)^{-1} \le (A_{\theta_1}+\la)^{-1}.
\]
\end{enumerate}
\end{proposition}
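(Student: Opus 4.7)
The plan is to organize all four assertions around the closed nonnegative sesquilinear form naturally associated with $\tau$. Namely, I would introduce
\[
 \mathfrak a_\theta[x] := \int_0^1 |x'(r)|^2\,{\rm d}r + l(l+1)\int_0^1 \frac{|x(r)|^2}{r^2}\,{\rm d}r + \theta\,|x(1)|^2,
\]
on the domain stated for $\CD(A_\theta^{1/2})$ in (i), with the convention that for $\theta=\infty$ the boundary term is replaced by the constraint $x(1)=0$. This form dominates the $H^1(0,1)$-norm on its domain: since every such $x$ is forced to satisfy $x(0)=0$ (compare Remark~\ref{0!}; this is the $l\ge 1$ limit-point behavior of $\tau$ at $r=0$), the Hardy inequality $\|x/r\|\le 2\|x'\|$ yields a bound $\|x\|_{H^1}^2 \le C\,(\mathfrak a_\theta[x] + \|x\|^2)$. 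Closedness of $\mathfrak a_\theta$ then follows from completeness of $H^1(0,1)$ together with continuity of the trace at $r=1$, and Kato's first representation theorem associates a unique nonnegative self-adjoint operator $T_\theta$ with $\mathfrak a_\theta$. An integration by parts against $x\in\CD(A_\theta)$---in which the boundary contribution at $r=0$ vanishes by Remark~\ref{0!} while the one at $r=1$ produces exactly $\theta|x(1)|^2$ (or enforces $x(1)=0$ if $\theta=\infty$)---identifies $T_\theta = A_\theta$, proving (i); positivity follows because $\mathfrak a_\theta[x]=0$ together with $x(0)=0$ forces $x\equiv 0$, so $0\notin\sigma(A_\theta)$.

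For (ii), the bound above shows that the form domain embeds compactly into $L_2(0,1)$ by the Rellich--Kondrachov theorem, hence $A_\theta$ has compact resolvent and its spectrum is a sequence of eigenvalues accumulating only at $+\infty$. Simplicity is the classical Sturm--Liouville argument: the equation $\tau y = \lambda y$ has a two-dimensional solution space, the limit-point condition at $r=0$ (only $r^{l+1}$, and not $r^{-l}$, lies in $L_2$ near zero, because $l\ge 1$) restricts the $L_2$-solutions to a one-dimensional subspace, and the single boundary condition at $r=1$ then selects at most a one-dimensional eigenspace.

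For (iii) and (iv) I would rely on form monotonicity. One has $\mathfrak a_{\theta_1}[x]\le \mathfrak a_{\theta_2}[x]$ for $0\le \theta_1 \le \theta_2 \le \infty$ whenever $x$ lies in the smaller of the two form domains, with $\theta|x(1)|^2$ interpreted as $+\infty$ when $\theta=\infty$ and $x(1)\ne 0$. The min-max principle then yields $\lambda_k(\theta_1)\le \lambda_k(\theta_2)$, with strict inequality because no nontrivial eigenfunction of one operator can satisfy the boundary condition of the other. The interlacing in \eqref{interlace} follows from the fact that $\CD(A_\infty^{1/2})$ is precisely the codimension-one subspace of $\CD(A_\theta^{1/2})$ cut out by the continuous functional $x\mapsto x(1)$, and on this subspace $\mathfrak a_\theta$ and $\mathfrak a_\infty$ agree; the Cauchy interlacing inequality for a rank-one form perturbation then delivers $\lambda_k(\theta) < \lambda_k(\infty) < \lambda_{k+1}(\theta)$. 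Finally, (iv) is L\"owner monotonicity: the form inequality $\mathfrak a_{\theta_1}\le \mathfrak a_{\theta_2}$ translates into $A_{\theta_1}+\lambda \le A_{\theta_2}+\lambda$ in the form sense for every $\lambda \ge 0$, which is equivalent to the stated resolvent inequality.

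The main technical obstacle I expect is the precise identification of the form domain in (i): one must verify that every function in the proposed set genuinely has a vanishing limit at $r=0$ (so that the Hardy inequality applies), that the trace at $r=1$ is a continuous linear functional on this domain (so that $\theta|x(1)|^2$ makes sense and is relatively form-bounded), and that the operator extracted from $\mathfrak a_\theta$ via Kato's theorem coincides exactly with $A_\theta$ from Definition~\ref{A_theta-def}---in particular, that the $\theta=\infty$ case corresponds to the Friedrichs (Dirichlet) extension and that the $\theta\in[0,\infty)$ case recovers the Robin-type boundary condition $x'(1)+\theta x(1)=0$ through the boundary term.
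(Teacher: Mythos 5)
Your proposal is correct, and it differs from the paper's proof in an interesting way for parts (ii)--(iv). For (i) both proofs are form-theoretic; the paper first establishes self-adjointness of $A_\theta$ via the limit-point classification of $\tau$ at $r=0$ (citing Weidmann) and then identifies $\CD(A_\theta^{1/2})$ with $\CD(\overline{\mathfrak a_\theta})$ by a direct Cauchy-sequence and density argument, whereas you construct a self-adjoint operator $T_\theta$ from the closed form via Kato's representation theorem and then identify $T_\theta=A_\theta$. The latter identification deserves a bit more care than your sketch gives: integration by parts on $\CD(A_\theta)$ gives $A_\theta\subset T_\theta$, but you still need $T_\theta\subset A_\theta$ (distributional regularity plus the boundary argument at $r=1$) rather than concluding equality outright, and your invocation of Hardy's inequality is slightly misaimed --- the form trivially dominates $\|x'\|^2+\|x\|^2$ because $\|x/r\|^2$ is one of its summands; Hardy is really what shows that the form norm is \emph{equivalent} to the $H^1$-norm on $\{x\in H^1: x(0)=0\}$, which is what closedness needs. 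For (ii) and (iii), the paper goes through the Weyl--Titchmarsh function: it computes $m_\infty$ explicitly in terms of $J_{l+1/2}$, notes that $\la_k(\infty)$ are its poles and $\la_k(\theta)$ the zeros of $m_\infty-\theta$, and reads off interlacing and strict monotonicity from the Nevanlinna (Herglotz) property of $m_\infty$; this also hands the paper the explicit eigenfunctions and the Bessel-function characterization used later in Lemma~\ref{may}. You instead use Rellich compactness and the classical ODE/min--max route: form ordering $\mathfrak a_{\theta_1}\le\mathfrak a_{\theta_2}$ on the common form domain gives weak monotonicity via Courant--Fischer, the limit-point/one-boundary-condition structure excludes a shared eigenvalue and so upgrades to strict inequality, and viewing $\CD(A_\infty^{1/2})$ as the kernel of the trace $x\mapsto x(1)$ inside $\CD(A_\theta^{1/2})$ delivers the interlacing \eqref{interlace} by Cauchy interlacing for a codimension-one form restriction. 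Finally, for (iv) the paper appeals to Kre{\u\i}n's resolvent formula, while you use the (equivalent but more self-contained) antitone behavior of the inverse under the form order $A_{\theta_1}\le A_{\theta_2}$. Your route is more elementary and avoids both the $m$-function machinery and Kre{\u\i}n's formula; the paper's route is slightly heavier but produces, as a byproduct, the explicit spectral data in terms of Bessel zeros that the rest of the paper relies on.
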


} 

\begin{proof}
All claims of the proposition are well-known; for convenience of the reader, we repeat the main arguments.

i) The Bessel differential expression $\tau$ (with $l\ge 1$) is in limit point case at the singular end-point~$0$ (see, e.g., \cite[Appendix II, Section 9.IV]{MR1255973}); in fact, the $L_2(0,1)$-solutions of the differential equation $(\tau - \la) x = 0$ are spanned by the Riccati-Bessel function
(see \cite[10.3.1]{abramowitz+stegun})
\begin{equation}
\label{RiccBessel}
   f_l(r,\la):= r \sqrt{\la} \, j_l (r\sqrt{\la}) = \sqrt{\frac \pi 2} \sqrt{r\sqrt{\la}} \, J_{l+1/2}(r\sqrt{\la}), \quad r\in[0,1].
\end{equation}
This implies that the operator $A_\theta$ 
as defined above is self-adjoint (see \cite[Satz~13.21\,a)]{MR2382320}); moreover, the set 
\begin{equation}
\label{supp-comp}
 \CD_0(A_\theta) := \{ x\in \CD(A_\theta) : \text{supp\,$x$ compact in } (0,1] \}
\end{equation}
forms a core of $A_\theta$, that is, the closure of the restriction $A_\theta|_{\CD_0(A_\theta)}$ coincides with~$A_\theta$ (see \cite[Section~17.4]{MR0262880}). 

The positivity of $A_\theta$ follows from the identity
\begin{align}
\label{ee}
  {\mathfrak a}_\theta[x]:=\big(A_\theta x,x\big) =:\langle x,x\rangle_\theta,\quad x\in \CD(A_\theta),
\end{align}
where
\begin{equation}
\label{inner}
  \langle x,y\rangle_\theta:=c_\theta \,x(1)\overline{y(1)}+\int_0^1x'(r)\ov{y'(r)}\d r+ l(l+1)\int_0^1 \frac{x(r)\ov{y(r)}}{r^2} \d r
\end{equation}
with $c_\theta=\theta$ for $\theta\in [0,\infty)$ and $c_\infty=0$. The inner product $\langle x,y\rangle_\theta$
is defined for all  $x,y\in L_2(0,1)$ such that $x'$, $y'$, $x/r$, $y/r \in L_2(0,1)$; in particular, it 
is defined on the sets on the right-hand sides of \eqref{ob1} and of \eqref{ob2}.

The domain  $\CD(A_\theta^{1/2})$ is the domain $\CD(\overline{{\mathfrak a}_\theta})$ of the form closure $\overline{{\mathfrak a}_\theta}$ of the quadratic form ${\mathfrak a}_\theta$ given by \eqref{ee} 
(see \cite[Theorems~VI.2.23, VI.2.1, and Corollary~VI.2.2]{MR0407617}). Hence for every $x\in \mathcal D(A_\theta^{1/2})$  
there exists a sequence $(x_n)_0^\infty\subset\mathcal D(A_\theta)$ such that $x_n\to x$ in $L^2(0,1)$, $n\to\infty$, and 
\begin{equation}
\label{tu}
  \big(A_\theta (x_m - x_n),x_m -x_n\big) = \langle x_m-x_n,x_m-x_n \rangle_\theta \longrightarrow 0, \quad m,n\to \infty.
\end{equation}
The relations \eqref{tu}, \eqref{inner} yield that $x_m'-x_n'\to 0$, $x_m/r-x_n/r\to 0$ in $L_2(0,1)$ and, if $c_\theta\ne 0$, also $x_m(1)-x_n(1)\to 0$ 
for $m,n\to\infty$. Since convergence  in $L_2(0,1)$ implies convergence almost everywhere, we can choose $r_0\in(0,1]$ such that $x_n(r_0)\to x(r_0)$,  $n\to\infty$; if $c_\theta\ne 0$, we can always choose $r_0=1$. Together with
$$
 x_n(r) - x_n(r_0) =\int_{r_0}^r\,x'(t)\d t, \quad r\in [0,1],
$$
it readily follows that the sequence $(x_n)_0^\infty$ converges uniformly in $[0,1]$ to $x$, that $x$ is absolutely continuous with $x' \in L_2(0,1)$, and that $x_n'\to x'$ in $L_2(0,1)$, $n\to \infty$. 
The relation \eqref{tu} also shows that the functions $x_n/r$ form a Cauchy sequence in $L_2(0,1)$ and hence also $x/r\in L_2(0,1)$. If $ \theta=\infty$, then  
$x_n \in \CD(A_\infty)$ implies that $x_n(1)=0$, $n\in \N$, and due to the uniform convergence of $(x_n)$ it follows that $x(1)=0$. Thus we have proved the inclusions ``$\subset$'' in \eqref{ob1} and \eqref{ob2}.

In order to prove the converse inclusions ``$\supset$'' in \eqref{ob1}, \eqref{ob2}, we equip the set $\CH_\theta$ on the right hand side of \eqref{ob1} or \eqref{ob2}, respectively, with the inner product $\langle\cdot,\cdot\rangle_\theta$ defined in~\eqref{inner}. Then $\CH_\theta$ becomes a Hilbert space and, by its definition, the subset $\CD(\ov{{\mathfrak a}_\theta})$ is a closed subspace of~$\CH_\theta$. Now assume that $y_0\in\CH_\theta$ is orthogonal to $\CD(\overline{{\mathfrak a}_\theta})$. Then, in particular, for every $x\in\CD_0(A_\theta)\subset\mathcal D(A_\theta)=\mathcal D({\mathfrak a}_\theta)$ with $\CD_0(A_\theta)$ given by \eqref{supp-comp},
\begin{equation}
\label{4}
  0=\langle x,y_0\rangle_\theta=(A_\theta x,y_0).
\end{equation}
Since $\mathcal D_0(A_\theta)$ is a core of $A_\theta$, we conclude that $y_0\in\mathcal D(A_\theta^*)=\mathcal D(A_\theta)$ and 
$A_\theta y_0 = 0$. The positivity of $A_\theta$ now implies $y_0=0$. This proves
that $\mathcal D\big(A_\theta^{1/2}\big)=\mathcal D(\overline{{\mathfrak a}_\theta})=\mathcal H_\theta$.


ii) In \cite[Appendix II, Section 9.IV]{MR1255973} it was proved that the spectrum of $A_\infty$, and hence of every other operator $A_\theta$ (see \cite[Appendix~II, Theorem~6.2]{MR1255973} or \cite[Abschnitt~14.2]{MR2382320}), is discrete or, equivalently, $A_\theta$ has compact resolvent; moreover, all eigenvalues are simple. In fact, if we choose a fundamental system $\{\varphi(\cdot,\la),\psi(\cdot,\la)\}$ of $(\tau-\la)x=0$ such that
\[
  \begin{array}{rlrl}
   \varphi(1,\la)&\!\!\!\!=0, \quad &\psi(1,\la)&\!\!\!\!=1, \\[1mm]
   \varphi'(1,\la)&\!\!\!\!=-1, \quad & \psi'(1,\la)&\!\!\!\!=0,
  \end{array} 
\] 
where $'$ denotes the derivative with respect to the first variable $r$, then the corresponding Weyl-Titchmarsh function $m_\infty$ is given by
(see \cite[(4.8.2)]{MR0176151})
\[
  m_\infty(\la)= - \sqrt \la \ \frac{J_{l+1/2}'(\sqrt \la)}{J_{l+1/2}(\sqrt \la)} - {\frac 12} = 
  - \sqrt \la \ \frac{f_l'(1,\la)}{f_l(1,\la)}, \quad \la\in\C.\footnote{Note that, in the first edition of \cite{MR0176151} from 1946, the term $- 1/2$  was missing.}
\]
The Weyl-Titchmarsh function $m_\theta$ for $\theta \in [0,\infty)$ can be written as
\[
  m_\theta(\la) = \frac{1+ \theta \, m_\infty(\la)}{\theta - m_\infty(\la)}, \quad \la\in \C,
\]
(see \cite[Lemma~14.10]{MR2382320}).
Since the eigenvalues $\la_k(\theta)$ of $A_\theta$ are the poles of the Weyl-Titchmarsh function $m_\theta$, 
$\la_k(\infty)$ is the square of the $k$-th non-zero zero of the Bessel function $J_{l+1/2}$, while 
$\la_k(\theta)$ is the $k$-th non-zero zero of the function $m_\infty - \theta$ for $\theta \in [0,\infty)$. 
For all $\theta\in [0,\infty]$, the eigenspace of $A_\theta$ at an eigenvalue $\la_k(\theta)$ is spanned by the function
\[
 x_{k,\theta}(r) = f_l \bigl( r \sqrt{\la_k(\theta)} \bigr) 
 = \sqrt{\frac \pi 2} \sqrt{r \sqrt{\la_k(\theta)}} \, J_{l+1/2}\bigl( r \sqrt{\la_k(\theta)} \bigr), \quad r\in (0,1].
\]

iii) The function $m_\infty$ is a Nevanlinna function, that is, it is analytic in the upper half plane $\mathbb C^+$, maps $\mathbb C^+$ into $\mathbb C^+\cup \mathbb R$, and is complex symmetric with respect to $\R$ (see, e.g., \cite[Appendix II, Section 9.4 and Section~VI.59, Theorem 2]{MR1255973}); in particular, $m_\infty$ is real-valued on $\R$. Moreover, $m_\infty$ is meromorphic, its  singularities are the simple poles $\la_k(\infty)$, and it is strictly increasing between two neighbouring poles. This and the property that $\la_k(\theta)$ is the $k$-th zero of $m_\infty - \theta$ for $\theta \in [0,\infty)$ imply both claims of iii).

iv) The claimed inequality is an immediate consequence of Kre{\u\i}n's resolvent formula (see \cite{MR0018341}, \cite[Section~VIII.106]{MR1255973}, or \cite[Section~9]{gesztesy-09}).
\end{proof}

{
\renewcommand{\i}{{\rm i}}

\begin{remark}
By Proposition \ref{A_theta} i), for $\theta=\infty$ the boundary condition $x(1)=0$ carries over from $\CD(A_\infty)$ to $\CD(A_\infty^{1/2})$;  
it is called an \vspace{0.7mm} \emph{essential boundary condition} (comp.\ \cite[\S 7]{MR0024575}). For $\theta\in [0,\infty)$ the boundary condition $x'(1) + \theta \, x(1) = 0$ in $\CD(A_\theta)$ 
disappears in $\CD(A_\theta^{1/2})$; it is called a \emph{non-essential} or \emph{natural boundary condition} (see \cite[Section~10.5, p.~234]{MR1192782})). 
As a consequence, $\CD(A_\theta^{1/2})$ does not depend on $\theta$ for $\theta\in [0,\infty)$, whereas the action of $A_\theta^{1/2}$ does. 
\end{remark}

In order to study the problems \eqref{diffsyst}, \eqref{boundcond} and \eqref{diffsyst}, \eqref{boundcond-id}, 
we shall need, in particular, the eigenvalues of the operators $A_\infty$ and $A_l$: 

\begin{lemma}
\label{may}
The eigenvalues $\la_k(\theta)$, $\,k=1,2,\dots$ for $\theta=l$ and $\theta=\infty$ are as follows:
\begin{enumerate}
\item[{\rm i)}]  $\la_k(\infty)$ is the $k$-th non-zero zero of the function $\la \mapsto J_{l+1/2}(\sqrt{\la})$,
\item[{\rm ii)}] $\la_k(l)$ is the $k$-th non-zero zero of the function $\la \mapsto J_{l-1/2}(\sqrt{\la})$, 
\end{enumerate}
for $\,k=1,2,\dots$.
\end{lemma}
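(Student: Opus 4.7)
The plan is to read off the eigenvalue conditions directly from the explicit form of the eigenfunctions that was already produced in the proof of Proposition~\ref{A_theta}\,ii). Recall from there that every eigenfunction of $A_\theta$ at an eigenvalue $\la$ is (up to a constant) the Riccati--Bessel function
\[
 f_l(r,\la) = r\sqrt{\la}\, j_l(r\sqrt{\la}) = \sqrt{\tfrac{\pi}{2}}\,\sqrt{r\sqrt{\la}}\,J_{l+1/2}(r\sqrt{\la}),
\]
since $f_l(\cdot,\la)$ is, up to a scalar multiple, the unique $L_2(0,1)$-solution of $(\tau-\la)x=0$. Thus $\la$ is an eigenvalue of $A_\theta$ if and only if $f_l(\cdot,\la)$ satisfies the boundary condition $x'(1)+\theta\,x(1)=0$ (respectively $x(1)=0$ for $\theta=\infty$).

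For $\theta=\infty$ (claim~i)), the condition $f_l(1,\la)=0$ is, for $\la\neq 0$, equivalent to $J_{l+1/2}(\sqrt{\la})=0$. Since $\sigma(A_\infty)\subset(0,\infty)$ by Proposition~\ref{A_theta}\,i)--ii), this yields that $\la_k(\infty)$ is exactly the $k$-th non-zero zero of $\la\mapsto J_{l+1/2}(\sqrt{\la})$.

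For $\theta=l$ (claim~ii)), I would evaluate $f_l'(1,\la)+l f_l(1,\la)$ using the standard spherical Bessel identity
\[
 \frac{d}{dz}\bigl[z^{l+1} j_l(z)\bigr] = z^{l+1} j_{l-1}(z).
\]
Writing $f_l(r,\la) = r\sqrt{\la}\,j_l(r\sqrt{\la})$ and setting $z=r\sqrt{\la}$, a short calculation (differentiating the identity $z j_l(z)=z^{-l}\cdot z^{l+1}j_l(z)$) gives
\[
 \frac{d}{dz}[z j_l(z)] = -l\,j_l(z) + z\,j_{l-1}(z),
\]
and evaluation at $r=1$ then leads to
\[
 f_l'(1,\la) + l f_l(1,\la) = \la\, j_{l-1}(\sqrt{\la}).
\]
For $\la\neq 0$ this vanishes exactly when $j_{l-1}(\sqrt\la)=0$, which, via the relation $j_{l-1}(z)=\sqrt{\pi/(2z)}\,J_{l-1/2}(z)$, is equivalent to $J_{l-1/2}(\sqrt{\la})=0$. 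Since $\sigma(A_l)\subset(0,\infty)$, this identifies $\la_k(l)$ with the $k$-th non-zero zero of $\la\mapsto J_{l-1/2}(\sqrt{\la})$.

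The one slightly delicate point is bookkeeping on the Bessel identity so that the boundary combination collapses cleanly to a single Bessel function of order $l-1/2$; no functional-analytic issues remain, since completeness of the spectrum and simplicity of the eigenvalues have already been established in Proposition~\ref{A_theta}.
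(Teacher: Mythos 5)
Your proposal is correct and takes essentially the same route as the paper: identify the unique $L_2(0,1)$-solution of $(\tau-\la)x=0$ as the Riccati--Bessel function $f_l(\cdot,\la)$, impose the two boundary conditions, and use the spherical Bessel differentiation formula $\bigl(z^{l+1}j_l(z)\bigr)'=z^{l+1}j_{l-1}(z)$ to collapse $f_l'(1,\la)+l f_l(1,\la)$ to a multiple of $J_{l-1/2}(\sqrt{\la})$. The only stylistic difference is that you first rewrite the identity as $\frac{d}{dz}[z j_l(z)]=-l\,j_l(z)+z\,j_{l-1}(z)$ before substituting, whereas the paper presents the collapse in one step; the underlying identity and conclusion are identical.
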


\begin{proof}

For $\theta=\infty$, the boundary condition $x_l(1)=0$ implies that $\la\in\C$ is an eigenvalue of $A_\infty$ if and only if $J_{l+1/2}(\sqrt{\la})=0$.
For $\theta=l$, the boundary condition $x_l'(1)+l x_l(1)=0$ implies that $\la\in\C$ is an eigenvalue of $A_l$ if and only if
\begin{equation}
\label{Bessel-diff}
 0 =  \frac{\d}{\d r} \big( \sqrt{r} \,J_{l+1/2}(r\sqrt{\la}) \big) + l \, \sqrt r \,J_{l+1/2}(r\sqrt{\la}) \Big|_{r=1}=  \sqrt{\la} \, J_{l-1/2}(\sqrt{\la});
\end{equation}
here we have used the differentiation formula $\big( z^l \!\cdot\! z\, j_l(z) \big)' = z^{l+1} j_{l-1}(z)$ for the spherical Bessel functions $j_l$ (see \cite[10.1.23]{abramowitz+stegun}). 
\end{proof}

\begin{remark}
\label{may-add}
Note that the index $\theta$ in $A_\theta$ and its eigenvalues $\la_k(\theta)$ only refers to the boundary condition $x'(1)+\theta x(1)=0$. 
Since the differential expression $\tau=-\partial_r^2+l(l+1)/r^2$ defining $A_\theta$ always depends on $l$, so do all eigenvalues $\la_k(\theta)$ (compare Lemma \ref{may}). As $l\in\N$ is assumed to be fixed throughout the paper, we do not indicate this dependence in general.
\end{remark}

For the differential expression $\tau_\alpha$ in \eqref{taus}, we use the relation 
\[
  \tau_\alpha = \alpha \,\tau - \alpha' \partial_r, 
\]
thus relating the operator $A_{\theta,\alpha}$ induced by $\tau_{\!\alpha}$ to the operator $A_\theta$ induced by~$\tau$. 

\begin{lemma}
\label{march3}
Let $\theta\in [0,\infty]$ and let $D$ be the operator of differentiation in $L_2(0,1)$, 
\[
  \CD(D) := W_2^1(0,1), \quad Dx := x'.
\]
Then the operators $D A_\theta^{-1/2}$ and $D A_\theta^{-1}$ are  defined on $L_2(0,1)$ and bounded~with
\begin{equation}\label{dates} 
  \| D A_\theta^{-1/2} \| \le 1, \quad \bigl\|D A_\theta^{-1} \bigr\| \le  \frac 1{\sqrt{\la_1(\theta)}}.
\end{equation}
\end{lemma}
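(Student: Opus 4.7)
The plan is to bound $\|D A_\theta^{-1/2}\|$ first, via the quadratic form characterization of $A_\theta^{1/2}$ already established in Proposition~\ref{A_theta}\,i), and then to derive the bound on $D A_\theta^{-1}$ by factoring it as $(DA_\theta^{-1/2})\,A_\theta^{-1/2}$.

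First I would check that $DA_\theta^{-1/2}$ is really defined on all of $L_2(0,1)$, which amounts to showing $\CD(A_\theta^{1/2})\subset W_2^1(0,1)=\CD(D)$. By \eqref{ob1}--\eqref{ob2}, every $x\in\CD(A_\theta^{1/2})$ already lies in ${\rm AC_{loc}}((0,1])$ with $x'\in L_2(0,1)$; to upgrade this to absolute continuity on the closed interval $[0,1]$ I would reuse the approximation argument from the proof of Proposition~\ref{A_theta}\,i): there is a sequence $(x_n)\subset\CD(A_\theta)$ converging to $x$ uniformly on $[0,1]$, and since $x_n(0)=0$ for each $n$ by Remark~\ref{0!}, the uniform limit satisfies $x(0)=0$, so that $x$ is absolutely continuous on $[0,1]$ with $x'\in L_2(0,1)$, i.e.\ $x\in W_2^1(0,1)$.

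Next, for arbitrary $y\in L_2(0,1)$ I set $x:=A_\theta^{-1/2}y$ and use the form identity from Proposition~\ref{A_theta}\,i) in the form
\[
   \|y\|^2=\|A_\theta^{1/2}x\|^2=\langle x,x\rangle_\theta
   = c_\theta|x(1)|^2+\int_0^1|x'(r)|^2\d r+l(l+1)\int_0^1\frac{|x(r)|^2}{r^2}\d r.
\]
Since $c_\theta\ge 0$ and $l(l+1)>0$, the three summands on the right are all non-negative, so the middle one alone is bounded by $\|y\|^2$; this yields $\|Dx\|\le\|y\|$, which is the first estimate in \eqref{dates}.

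Finally, for the second estimate I would simply factor $DA_\theta^{-1}=(DA_\theta^{-1/2})\,A_\theta^{-1/2}$ and combine the bound just obtained with the spectral norm identity $\|A_\theta^{-1/2}\|=1/\sqrt{\la_1(\theta)}$, which follows from Proposition~\ref{A_theta}\,ii) and the spectral theorem for the positive self-adjoint operator $A_\theta$. The only non-routine point is the domain inclusion $\CD(A_\theta^{1/2})\subset W_2^1(0,1)$ at the start, and even that is an immediate by-product of the description of $\CD(A_\theta^{1/2})$ in Proposition~\ref{A_theta}\,i); everything else is bookkeeping.
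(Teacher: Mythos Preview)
Your proof is correct and follows essentially the same route as the paper: the domain inclusion $\CD(A_\theta^{1/2})\subset\CD(D)$ via Proposition~\ref{A_theta}\,i), the form inequality $\|A_\theta^{1/2}x\|^2=\langle x,x\rangle_\theta\ge\|x'\|^2$, and the factorization $DA_\theta^{-1}=(DA_\theta^{-1/2})A_\theta^{-1/2}$ combined with $\|A_\theta^{-1/2}\|\le 1/\sqrt{\la_1(\theta)}$. You are simply more explicit about the domain inclusion, spelling out why ${\rm AC_{loc}}((0,1])$ with $x'\in L_2(0,1)$ upgrades to $W_2^1(0,1)$; the paper absorbs this into the phrase ``Proposition~\ref{A_theta}\,i) and its proof yield\ldots''.
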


\begin{proof}
Proposition \ref{A_theta} i) and its proof yield that $\CD(A_\theta^{1/2}) = \CD\big(\overline{{\mathfrak a}_\theta}\big) \subset \CD(D)$~and
\begin{align*}
  \| A_\theta^{1/2} x \|^2 = \overline{{\mathfrak a}_\theta}[x] \ge \int_0^1 \bigl| x'(r) \bigr|^2 \d r = \| D x\|^2, \quad x\in\CD(A_\theta^{1/2}), 
\end{align*}
that is, $\big\|D A_\theta^{-1/2}\big\| \le 1$. The claims for $D A_\theta^{-1}$ now follow from the identity 
$D A_\theta^{-1}=D A_\theta^{-1/2} A_\theta^{-1/2}$ and from the estimate $\|A_\theta^{-1/2}\| \le 1 /\sqrt{\la_1(\theta)}$.
\end{proof}

\begin{proposition}
\label{A_theta-alpha}
Let $\theta\in[0,\infty]$. The linear operator $A_{\theta,\alpha}$ in Definition {\rm \ref{A_theta-alpha-def} }
is densely defined, symmetric and hence closable, and it satisfies 
\begin{equation}
\label{opid}
  A_{\theta,\alpha} = \alpha A_\theta - \alpha' D.
\end{equation}
\end{proposition}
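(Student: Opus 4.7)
The plan is to verify the three claims (operator identity, well-definedness together with density, symmetry) in turn, with the closability following automatically from density plus symmetry.

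\textbf{Step 1: The operator identity.} For $x\in\CD(A_\theta)$ I would just differentiate: since $\alpha\in C^1([0,1])$ and $x'\in{\rm AC_{loc}}((0,1])$, the product $\alpha x'$ is in ${\rm AC_{loc}}((0,1])$, so $(\alpha x')' = \alpha' x' + \alpha x''$ almost everywhere on $(0,1]$. Plugging this into the definition of $\tau_\alpha$ and recognising $-x'' + l(l+1)/r^2\, x = \tau x$ gives $\tau_\alpha x = \alpha \tau x - \alpha' x'$ pointwise a.e., i.e.\ $A_{\theta,\alpha}x = \alpha A_\theta x - \alpha' D x$. For this to make sense as an operator identity, I need $\CD(A_\theta)\subset\CD(D)$; but Proposition \ref{A_theta} i) gives $\CD(A_\theta)\subset\CD(A_\theta^{1/2})\subset\CD(D)=W_2^1(0,1)$ (confer also Lemma \ref{march3}). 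The same identity shows $A_{\theta,\alpha}x\in L_2(0,1)$, because $\alpha,\alpha'\in C([0,1])$ act as bounded multiplication operators, $A_\theta x\in L_2(0,1)$ by definition, and $Dx\in L_2(0,1)$ by Lemma \ref{march3}. This settles the remark preceding the proposition as a byproduct.

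\textbf{Step 2: Density.} Since $A_\theta$ is self-adjoint by Proposition \ref{A_theta} i), $\CD(A_{\theta,\alpha})=\CD(A_\theta)$ is dense in $L_2(0,1)$.

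\textbf{Step 3: Symmetry.} For $x,y\in\CD(A_\theta)$ I would perform integration by parts on $[\epsilon,1]$. Using $(\alpha x')'\bar y$ and $(\alpha \bar y')'x$ and noting that the term $\alpha\, l(l+1)/r^2\, x\bar y$ is manifestly Hermitian and so cancels, the classical Lagrange identity for Sturm--Liouville expressions yields
\[
   (A_{\theta,\alpha}x,y)_{L_2(\epsilon,1)}-(x,A_{\theta,\alpha}y)_{L_2(\epsilon,1)} \;=\; \bigl[\alpha(r)\bigl(x(r)\overline{y'(r)}-x'(r)\overline{y(r)}\bigr)\bigr]_{r=\epsilon}^{r=1}.
\]
At $r=1$ this vanishes thanks to the boundary condition $x'(1)+\theta x(1)=0$ (for $\theta\in[0,\infty)$) or $x(1)=y(1)=0$ (for $\theta=\infty$) built into $\CD(A_\theta)$. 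It remains to let $\epsilon\searrow 0$. This is the step I expect to be the main obstacle and is the reason the limit point character of $\tau$ at $0$ was recorded in the proof of Proposition \ref{A_theta} i): since $\tau$ is in the limit point case at $0$ on $L_2(0,1)$, the Wronskian $x\overline{y'}-x'\overline y$ tends to $0$ as $r\searrow 0$ for all $x,y$ in the maximal domain, and in particular for $x,y\in\CD(A_\theta)$. Since $\alpha$ is continuous at $0$, the boundary contribution at $\epsilon$ vanishes in the limit.

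\textbf{Step 4: Closability.} Any densely defined symmetric operator has a densely defined adjoint, hence is closable; this gives the last assertion free of charge. Altogether the proof is short, with the only real content being the limit-point argument invoked to kill the endpoint contribution at the singular end $0$.
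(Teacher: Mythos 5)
Your proposal is correct and follows essentially the same route as the paper: the operator identity by direct differentiation using $\alpha\in C^1$ and $\CD(A_\theta)\subset\CD(D)$, density trivially, and symmetry by integrating by parts and using the boundary condition at $r=1$ together with the limit point property of $\tau$ at $0$ to annihilate the singular endpoint contribution. The only cosmetic difference is that the paper specializes to $y=x$ and checks $(A_{\theta,\alpha}x,x)\in\R$, while you run the full Lagrange identity with two functions $x,y$; both are standard and equivalent.
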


\begin{proof}
First we have to show that $A_{\theta,\alpha}$ is well-defined, that is, $\alpha x' \in {\rm AC_{loc}}((0,1])$ and $\tau_\alpha x \in L_2(0,1)$ 
for $x\in \CD(A_\theta)$. To this end, let $x\in \CD(A_\theta)$. By definition, this implies $x'\in {\rm AC_{loc}}((0,1])$ and $\tau x \in L_2(0,1)$.
Since $\alpha \in C^1([0,1])$, it follows that $\alpha x' \in {\rm AC_{loc}}((0,1])$.
Further, Lemma \ref{march3} shows that $\CD(A_\theta) \subset \CD(D)$ and hence $x'\in L_2(0,1)$.
Together with $\alpha, \alpha' \in C([0,1])$, we obtain
\[
 \tau_\alpha x = \alpha \tau x - \alpha' x' \in L_2(0,1),
\]
and also the operator identity \eqref{opid}. For the symmetry of $A_{\theta,\alpha}$, it suffices to show that $(A_{\theta,\alpha}x,x)\in \R$ for all $x\in \CD(A_{\theta,\alpha})=\CD(A_\theta)$. Using the boundary condition at $r=1$, we see that
\begin{align*}
 (A_{\theta,\alpha} x, x) = & \ \alpha(0) \, \lim_{\varepsilon \to 0} \big( x'(\eps) \,\overline{x(\eps)} \big) + c_\theta \, \alpha(1) \,|x(1)|^2 \\
                            & \ + \int_0^1 \!\alpha(r) \bigl| x'(r) \bigr|^2 \d r + \int_0^1 \!\alpha(r) \frac{l(l+1)}{r^2} |x(r)|^2 \d r, 
\end{align*}
where $c_\theta =\theta$ for $\theta \in [0,\infty)$ and $c_\infty=0$. Since $x\in \CD(A_\theta)$ and since the differential expression $\tau$ defining $A_\theta$ is in limit point case at $0$, the limit in the first term on the right hand side is real (see \cite[Satz~13.19]{MR2382320}). 
Because $\alpha$ is real-valued, it follows that $(A_{\theta,\alpha}x,x)\in \R$. 
\end{proof}

We close this section with some resolvent estimates for the self-adjoint operators~$A_\theta$, which will be used in the next sections.

\begin{lemma}
\label{resolv-est}
For $\theta \in [0,\infty]$ and $\la\notin\big(\!-\!\infty,-\la_1(\theta)\big]$, we have the norm estimates 
\begin{align*}
  \big\| (A_\theta +\la)^{-1} \big\| & \ \
  \left\{ \begin{array}{lll} \ds = \frac 1 {|\la_1(\theta)+\la|} & \text{if \ }  \Re \la \ge - \la_1(\theta), \qquad & {\rm (i)}\\[3mm]
                             \ds \le \frac 1{|\Im \la|} & \text{if \ }  \Re \la \le - \la_1(\theta), \qquad & {\rm (ii)}
          \end{array}
  \right.\\[1mm]
  \big\| A_\theta^{1/2} (A_\theta +\la)^{-1} \big\| & \ \
  \left\{ \begin{array}{lll} \ds = \frac {\sqrt{\la_1(\theta)}} {|\la_1(\theta)+\la|} & \text{if \ }  |\la| \le \la_1(\theta), \qquad & {\rm (a)}\\[3mm]
                             \ds \le \frac {\sqrt{|\la|}} {||\la|+\la|}  & \text{if \ }  |\la| \ge \la_1(\theta), \qquad & {\rm (b)}
          \end{array}
  \right.\\[1mm] 
  \big\| A_\theta (A_\theta +\la)^{-1} \big\| & \ \
  \left\{ \begin{array}{lll} = 1 & \text{if \ }  \Re \la \ge 0 , \ \ & {\rm (1)}\\[2mm]
                             \ds \le \frac {|\la|}{|\Im \la|} & \text{if \ }  \Re \la \le 0, \, |\la+ \la_1(\theta)/2| \ge \la_1(\theta)/2,  & {\rm (2)}\\[3mm]
                             \ds = \frac {\la_1(\theta)} {|\la_1(\theta)+\la|} & \text{if \ }   |\la+ \la_1(\theta)/2| \le \la_1(\theta)/2; \ \ & {\rm (3)}
          \end{array}
  \right.
\end{align*}
in each of the three cases, the bounds on the right-hand side define a continuous function of $\lambda$ on $\C\setminus\big(\!-\!\infty,-\la_1(\theta)\big]$.
\end{lemma}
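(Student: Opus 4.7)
The plan is to use the functional calculus for the self-adjoint, positive operator $A_\theta$. By Proposition \ref{A_theta}~ii), the spectrum $\sigma(A_\theta)=\{\la_k(\theta)\}_{k=1}^\infty$ is a discrete subset of $[\la_1(\theta),\infty)$ with $\la_k(\theta)\to\infty$. Hence for any continuous function $f$ bounded on $[\la_1(\theta),\infty)$ the spectral theorem gives
\[
   \|f(A_\theta)\|=\sup_k|f(\la_k(\theta))|\le \sup_{t\ge\la_1(\theta)} |f(t)|,
\]
and the inequality becomes an equality whenever the supremum on the right is attained at the endpoint $\la_1(\theta)$ (an eigenvalue) or in the limit $t\to\infty$ (approximated by eigenvalues since $\la_k(\theta)\to\infty$). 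Thus each of the three estimates reduces to an elementary one-variable extremal problem for $1/|t+\la|$, $\sqrt{t}/|t+\la|$, or $t/|t+\la|$ on the half-line $[\la_1(\theta),\infty)$.

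For the first estimate I would compute $\partial_t|t+\la|^2=2(t+\Re\la)$, so $|t+\la|^2$ is minimal at $t=-\Re\la$ if $-\Re\la\ge\la_1(\theta)$, giving the upper bound $1/|\Im\la|$ in case (ii); otherwise the minimum on $[\la_1(\theta),\infty)$ is at $t=\la_1(\theta)$, yielding the exact value $1/|\la_1(\theta)+\la|$ in case (i). For the second estimate, differentiating $t/|t+\la|^2$ and setting the numerator to zero yields the unique positive critical point $t=|\la|$; the function is increasing on $(0,|\la|)$ and decreasing on $(|\la|,\infty)$. Hence for $|\la|\ge\la_1(\theta)$ the supremum is $\sqrt{|\la|}/\bigl||\la|+\la\bigr|$ (case (b), inequality since $|\la|$ need not be an eigenvalue), and for $|\la|\le\la_1(\theta)$ the function is already decreasing on $[\la_1(\theta),\infty)$, giving equality at $t=\la_1(\theta)$ (case (a)).

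The third estimate needs the most care. Writing $g(t):=t^2/|t+\la|^2$, one computes
\[
   g'(t)=\frac{2t\bigl(|\la|^2+t\,\Re\la\bigr)}{|t+\la|^4}.
\]
If $\Re\la\ge 0$, then $g$ is strictly increasing with $g(t)\to1$ as $t\to\infty$, giving exactly $\|A_\theta(A_\theta+\la)^{-1}\|=1$ (case~(1)). If $\Re\la<0$, there is a unique maximum at $t^\ast:=-|\la|^2/\Re\la>0$, and a direct calculation (using $t^\ast+\la=\i\la\,\Im\la/\Re\la$) gives $g(t^\ast)=|\la|^2/(\Im\la)^2$. The critical point $t^\ast$ lies in $[\la_1(\theta),\infty)$ precisely when $|\la|^2+\la_1(\theta)\Re\la\ge 0$, which rewrites as $|\la+\la_1(\theta)/2|\ge\la_1(\theta)/2$; in that case the bound $|\la|/|\Im\la|$ of case~(2) follows (as inequality, since $t^\ast$ may miss the spectrum). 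Otherwise $t^\ast<\la_1(\theta)$, so $g$ is decreasing on $[\la_1(\theta),\infty)$ and the supremum $\la_1(\theta)/|\la_1(\theta)+\la|$ is attained at the eigenvalue $\la_1(\theta)$, giving equality in case~(3).

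Continuity of the piecewise-defined bounds on $\C\setminus(-\infty,-\la_1(\theta)]$ follows by checking the three interfaces: at $\Re\la=-\la_1(\theta)$ one has $|\la_1(\theta)+\la|=|\Im\la|$; at $|\la|=\la_1(\theta)$ the two square-root expressions coincide; and at $|\la+\la_1(\theta)/2|=\la_1(\theta)/2$ the identity $t^\ast=\la_1(\theta)$ together with the formula $|t^\ast+\la|=|\la||\Im\la|/|\Re\la|$ yields $|\la|/|\Im\la|=\la_1(\theta)/|\la_1(\theta)+\la|$. The main obstacle is purely bookkeeping—correctly identifying the three subregions for the third estimate and verifying that the geometric condition $|\la+\la_1(\theta)/2|\lessgtr\la_1(\theta)/2$ is equivalent to the location of the critical point $t^\ast$ relative to $\la_1(\theta)$; once this correspondence is set up, each individual estimate reduces to single-variable calculus.
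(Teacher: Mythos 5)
Your proof is correct and follows essentially the same route as the paper: it invokes the functional calculus formula $\|f(A_\theta)\|=\sup_{t\in\sigma(A_\theta)}|f(t)|$ (the paper cites the special case $\| A_\theta^s(A_\theta+\la)^{-1}\|=\sup_{t\in\sigma(A_\theta)}|t|^s/|t+\la|$ from Kato) and then reduces each of the three estimates to a one-variable calculus problem on $[\la_1(\theta),\infty)$, distinguishing the cases by the location of the critical point relative to $\la_1(\theta)$. Your computation of the critical points and the equivalence $t^\ast\gtrless\la_1(\theta)\iff|\la+\la_1(\theta)/2|\gtrless\la_1(\theta)/2$ matches the paper's argument, and you also supply the continuity check at the interfaces that the paper asserts but leaves to the reader.
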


\begin{proof}
The resolvent of the self-adjoint operator $A_\theta$ satisfies the well-known relation (see, e.g., \cite[Section~V.3.5]{MR0407617})
\begin{equation}
\label{kato}
  \big\| A_\theta^s(A_\theta+\la)^{-1 }\big\| = \sup_{t\in \sigma(A_\theta)} \frac{|t|^s}{|t+\la|}, \quad s\in [0,1], \ \ \la \notin \sigma(A_\theta).
\end{equation}
For $s=0$, the right hand side of \eqref{kato} equals $1 / \dist (-\la, \sigma(A_\theta))$ and, together with the inclusion 
$\sigma(A_\theta) \subset [\la_1(\theta),\infty)$, the equality (i) and the estimate (ii) follow.

For $s=1/2$, it is not difficult to see that the function $f(t):= \sqrt{t}/|t+\la|$, $t \in [0,\infty)$, has a local maximum at $t_\la=|\la|$. Hence, if
$|\la|\le \la_1(\theta)$, then the function $f$ restricted to $\sigma(A_\theta) \subset [\la_1(\theta),\infty)$ attains its maximum at $t=\la_1(\theta) \in \sigma(A_\theta)$ and the equality (a) follows; if $|\la| > \la_1(\theta)$, then $f$ restricted to $\sigma(A_\theta) \subset [\la_1(\theta),\infty)$ is estimated by its maximum on $[\la_1(\theta),\infty)$ attained at $t=|\la|$ and (b) follows.  

For $s=1$, a short calculation yields that the function $g(t):= t/|t+\la|$, $t \in [0,\infty)$, is monotonically increasing if $\Re\la \ge 0$ and has a local maximum at $t_\la=|\la|^2/|\Re\la|$ if $\Re\la<0$. Thus, if $\Re\la \ge 0$, then the function $g$ restricted to $\sigma(A_\theta) \subset [\la_1(\theta),\infty)$ attains its maximum at $\infty$ and (1) follows. If $\Re\la<0$, we note that 
\[
  t_\la= \frac{|\la|^2}{|\Re\la|} \le \la_1(\theta) \iff |\la|^2 \!+ \Re\la \cdot\la_1(\theta) \le 0 \iff |\la+ \la_1(\theta)/2| \le \frac{\la_1(\theta)}2,
\]
and therefore in this case the function $g$ restricted to $\sigma(A_\theta) \subset [\la_1(\theta),\infty)$ attains its maximum at $t=\la_1(\theta) 
\in \sigma(A_\theta)$ and the equality (3) follows; if $|\la+ \la_1(\theta)/2| > \la_1(\theta)/2$, then $g$ restricted to $\sigma(A_\theta) \subset [\la_1(\theta),\infty)$ is estimated by its maximum on $[\la_1(\theta),\infty)$ attained at $t_\la=|\la|^2/|\Re\la|$. Now another short calculation shows that 
$g(t_\la)$ coincides with the upper bound in (2).  
\end{proof}

\section{The spectrum of $\CA_\theta$ and a first eigenvalue estimate}
\label{section3}

In the following we present a first perturbational approach to study the spectral properties of the operator $\CA_\theta$ associated with the dynamo problem. To this end, we regard the block operator matrix $\CA_\theta$ as a bounded perturbation of its lower triangular part. We use this decomposition to show that $\CA_\theta$ is a closed operator with compact resolvent and to establish estimates for its eigenvalues.

\begin{theorem}
\label{harz}
Let $\theta\in[0,\infty]$ and define
\begin{alignat}{2}
\label{decomp2a}
  \CQ_\theta & := \matrix{cc}{-A_\theta & 0 \\ \hspace{4.5mm} A_{\theta,\al} & - A_\infty}, 
  & \quad \CD(\CQ_\theta) & = \CD(A_\theta)\oplus\CD(A_\infty),\\
\label{decomp2b}  
  \quad \CR & := \matrix{cc}{0 & \al \\ 0 & 0 }, 
  & \quad \CD(\CR) &= L_2(0,1) \oplus L_2(0,1).
\end{alignat}   
Then the block operator matrix 
\begin{equation} 
\label{decomp2}
  \CA_\theta=\CQ_\theta + \CR
\end{equation} 
is closed and has compact resolvent; its spectrum is symmetric to $\R$ and consists of isolated eigenvalues of finite algebraic multiplicities with no finite accumulation~point.
\end{theorem}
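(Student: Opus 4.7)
The plan is to isolate all the unbounded content in the lower triangular operator $\CQ_\theta$ and treat $\CR$ as a bounded everywhere-defined perturbation, using the key decomposition $A_{\theta,\al}=\al A_\theta-\al'D$ from Proposition \ref{A_theta-alpha} together with the relative-boundedness information from Lemma \ref{march3} and the self-adjointness/compact resolvent of $A_\theta,A_\infty$ from Proposition \ref{A_theta}. Once $\CQ_\theta$ is shown to be closed with compact resolvent, a Neumann series argument transfers both properties to $\CA_\theta=\CQ_\theta+\CR$, and the spectral-decomposition claim follows from standard facts about operators with compact resolvent.

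To prove closedness of $\CQ_\theta$ I would argue directly, since the relative bound of $A_{\theta,\al}$ with respect to $A_\theta$ is essentially $\|\al\|$ and may exceed $1$, so Kato-Rellich is not available. Suppose $(x_n,y_n)\in\CD(A_\theta)\oplus\CD(A_\infty)$ converges to $(x,y)$ in $L_2(0,1)\oplus L_2(0,1)$ and $\CQ_\theta(x_n,y_n)^T\to(u,v)^T$. The first row forces $A_\theta x_n\to-u$, and closedness of $A_\theta$ gives $x\in\CD(A_\theta)$ with $A_\theta x=-u$. Writing $A_{\theta,\al}x_n=\al A_\theta x_n-\al'(DA_\theta^{-1})(A_\theta x_n)$ and using that $\al,\al'$ are bounded multiplication operators while $DA_\theta^{-1}$ is bounded by Lemma \ref{march3}, one sees that $A_{\theta,\al}x_n$ converges in $L_2(0,1)$. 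Then the second row forces $A_\infty y_n$ to converge, whence closedness of $A_\infty$ yields $y\in\CD(A_\infty)$ and the desired identity.

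For compact resolvent of $\CQ_\theta$, I would pick any $\la>0$ with $-\la\notin\sigma(A_\theta)\cup\sigma(A_\infty)$ and use block-triangular inversion,
\[
(\CQ_\theta-\la)^{-1}=\begin{pmatrix}-(A_\theta+\la)^{-1}&0\\(A_\infty+\la)^{-1}A_{\theta,\al}(A_\theta+\la)^{-1}&-(A_\infty+\la)^{-1}\end{pmatrix}.
\]
The diagonal entries are compact by Proposition \ref{A_theta} ii). For the bottom-left entry, one rewrites
\[
A_{\theta,\al}(A_\theta+\la)^{-1}=\al\,A_\theta(A_\theta+\la)^{-1}-\al'(DA_\theta^{-1/2})\,A_\theta^{1/2}(A_\theta+\la)^{-1},
\]
which is bounded by Lemmas \ref{march3} and \ref{resolv-est}; composed with the compact $(A_\infty+\la)^{-1}$ it is compact as well, so $(\CQ_\theta-\la)^{-1}$ is compact. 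Since $\CR$ is everywhere defined and bounded with $\|\CR\|=\|\al\|$, the operator $\CA_\theta=\CQ_\theta+\CR$ is closed on $\CD(\CQ_\theta)$, and for $\la$ of sufficiently large modulus the Neumann series
\[
(\CA_\theta-\la)^{-1}=(\CQ_\theta-\la)^{-1}\sum_{n=0}^{\infty}\bigl(-\CR(\CQ_\theta-\la)^{-1}\bigr)^{n}
\]
converges and is compact. Consequently $\CA_\theta$ has compact resolvent, and the standard theory then yields that $\sigma(\CA_\theta)$ is discrete, consisting of isolated eigenvalues of finite algebraic multiplicity with no finite accumulation point. Finally, because $l\in\N$, $\al\in C^1([0,1];\R)$, and the boundary conditions are all real, the componentwise complex conjugation $C(x,y):=(\overline{x},\overline{y})$ is an antilinear involution of $L_2(0,1)\oplus L_2(0,1)$ commuting with $\CA_\theta$ on its domain, from which $\sigma(\CA_\theta)=\overline{\sigma(\CA_\theta)}$ follows immediately.

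The main technical obstacle is really the closedness of $\CQ_\theta$: because the relative $A_\theta$-bound of the off-diagonal entry $A_{\theta,\al}$ is on the order of $\|\al\|$ and not assumed small, no standard perturbation theorem applies off the shelf, and one has to exploit the structural identity $A_{\theta,\al}=\al A_\theta-\al'D$ together with the bounded operator $DA_\theta^{-1}$ to push the sequential argument through. Everything else is routine once this step and the formula for $(\CQ_\theta-\la)^{-1}$ are in place.
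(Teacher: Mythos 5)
Your proposal follows essentially the same route as the paper: the decomposition $\CA_\theta=\CQ_\theta+\CR$, the identity $A_{\theta,\al}=\al A_\theta-\al'D$ combined with boundedness of $DA_\theta^{-1}$ and $DA_\theta^{-1/2}$ from Lemma~\ref{march3}, the explicit lower-triangular resolvent of $\CQ_\theta$ with compact diagonal entries, and a Neumann-series argument to pass compactness of the resolvent to $\CA_\theta$. The only small variation is that you prove closedness of $\CQ_\theta$ directly via a graph-convergent sequence (propagating convergence of $A_\theta x_n$ to $A_{\theta,\al}x_n$ and thence to $A_\infty y_n$), whereas the paper gets closedness for free by exhibiting $\rho(\CQ_\theta)=\rho(-A_\theta)\cap\rho(-A_\infty)\ne\emptyset$; both are fine. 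Two minor slips to fix: the lower-left entry of $(\CQ_\theta-\la)^{-1}$ should carry a minus sign, $-(A_\infty+\la)^{-1}A_{\theta,\al}(A_\theta+\la)^{-1}$, and ``$\la$ of sufficiently large modulus'' should be ``$\la>0$ sufficiently large'' (or $\Re\la\to+\infty$), since along directions near the negative real axis $(\CQ_\theta-\la)^{-1}$ does not decay and the Neumann series need not converge.
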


\begin{proof}
Since $\alpha\in C([0,1])$, the operator $\CR$ is everywhere defined and bounded. Moreover, $\CD(\CQ_\theta)= \CD(\CA_\theta)$ and hence the operator identity \eqref{decomp2} holds. 

Next we show that 
\begin{equation}
\label{sig0}
  \rho(\CQ_\theta) = \rho(-A_\theta) \cap \rho(-A_\infty) \ne \emptyset, 
\end{equation}
which implies the closedness of $\CQ_\theta$, and that $(\CQ_\theta - \la)^{-1}$ is compact for all $\la\in\rho(Q_\theta)$.
By \eqref{opid}, we~have
\begin{equation}
\label{combi}
\begin{array}{rl}
  A_{\theta,\al}(A_\theta+\la)^{-1} \!\!\!\!& = \Big( \al (A_\theta +\la) - \al \la - \al' D \Big) (A_\theta+\la)^{-1}\\
                           & =  \al - \big( \al \la + \al' D \big) (A_\theta+\la)^{-1}.
\end{array}
\end{equation}
Since $\al$, $\al' \in C([0,1])$ and $D (A_\theta+\la)^{-1} $ is bounded for
$\la \in \rho(-A_\theta)$ by Lemma~\ref{march3}, the operator $A_{\theta,\al}(A_\theta+\la)^{-1}$ is bounded 
for every $\la\in \rho(-A_\theta)$. Therefore, for $\la\in \rho(-A_\theta)\cap \rho(-A_\infty)$, the inverse
\begin{equation}
\label{may10}
  \big( \CQ_\theta - \la \big)^{-1} = \matrix{cc}{ -(A_\theta+\la)^{-1} & 0 \\[1mm] -(A_\infty+\la)^{-1} A_{\theta,\al} (A_\theta+\la)^{-1} & -(A_\infty+\la)^{-1}}
\end{equation}
is bounded and everywhere defined. This proves the inclusion ``$\supset$'' in \eqref{sig0}. Vice versa, assume that $\la \!\in\! \sigma(-A_\theta) \cup \sigma(-A_\infty) =  
\sigma_{\rm p}(-A_\theta) \cup \sigma_{\rm p}(-A_\infty)$. If $\la\!\in\! \sigma_{\rm p}(-A_\infty)$, then there is an element 
$y_2 \in \ker (-A_\infty - \la) \setminus\{0\}$. Then $ (0, y_2)^{\rm t} \in \ker (\CQ_\theta - \la) \setminus\{0\}$ and hence $\la\in \sigma_{\rm p}(\CQ_\theta) \subset \sigma(\CQ_\theta)$.
If $\theta \in [0,\infty)$ and $\la \in \sigma_{\rm p}(-A_\theta)$, then there is an element $y_1 \in \ker (-A_\theta - \la) \setminus\{0\}$. Since $\theta \not= \infty$, we have
$\la \notin \sigma(-A_\infty)$ by Proposition \ref{A_theta} iii) and so $(y_1, (A_\infty +\la)^{-1} A_{\theta,\al} y_1 )^{\rm t} \in \ker (\CQ_\theta - \la) \setminus\{0\}$, that is, $\la\in \sigma_{\rm p}(\CQ_\theta) \subset \sigma(\CQ_\theta)$. This proves \eqref{sig0}.

By Proposition \ref{A_theta} ii) the inverses $(A_\theta+\la)^{-1}$ are compact 
and we have shown that $A_{\theta,\al}(A_\theta+\la)^{-1}$ is bounded for all $\la \in \rho(-A_\theta)$ and $\theta \in [0,\infty]$.
Thus, by equation \eqref{may10}, the inverses $(\CQ_\theta-\la)^{-1}$ are compact as well. 

Since $\CA_\theta$ is a bounded perturbation of $\CQ_\theta$, it is immediate that $\CA_\theta$ is closed.~In order to show that $\CA_\theta$ has compact resolvent, it suffices to show that $(\CA_\theta - \la)^{-1}$ exists and is compact for some $\la \in \C$ (see, e.g., \cite[Theorems~IV.1.1 and III.6.29]{MR0407617}). For $\la \ge 0$, we can write 
\begin{equation}
\label{db}
   \CA_\theta - \la = \big( I + \CR \big( \CQ_\theta - \la\big)^{-1} \big) \big( \CQ_\theta - \la \big).
\end{equation} 
By Lemma \ref{march3} and by Lemma \ref{kato} (i), (1), we have the uniform estimates
\[
\begin{array}{rl}
  \big\| \la (A_\theta+\la)^{-1} \big\| \!\!\!\!&=  \ds \frac{\la}{\la+\la_1(\theta)} \le 1, \\
  \big\| D (A_\theta+\la)^{-1} \big\| \!\!\!\!&= \big\| DA_\theta^{-1} \big\| \big\| A_\theta (A_\theta+\la)^{-1} \big\| \le \ds \frac 1 {\sqrt{\la_1(\theta)}}, 
\end{array}
\quad \la \ge 0.
\]
Together with  \eqref{combi}, it follows that $A_{\theta,\al} (A_\theta+\la)^{-1}$ is uniformly bounded for $\la\ge 0$.
Thus, \eqref{may10} and the fact that $\|(A_\theta + \la)^{-1} \| \to 0$ for $\la \to \infty$ and $\theta\in[0,\infty]$ show that 
$\| ( \CQ_\theta - \la )^{-1} \| \to 0$ for $\la\to\infty$ and $\theta\in[0,\infty]$. 
As a consequence, we can choose $\la_0 > 0$ sufficiently large such that 
$\|\CR ( \CQ_\theta - \la )^{-1} \| <1$ for $\la \ge \la_0$ and~so
\[
  \big( \CA_\theta - \la \big)^{-1} = \big( \CQ_\theta + \CR  - \la \big)^{-1} = \big( \CQ_\theta - \la \big)^{-1} \big( I + \CR \big( \CQ_\theta - \la\big)^{-1} \big)^{-1}
\]
is compact for $\la \ge \la_0$.

The symmetry of the (point) spectrum of $\CA_\theta$ is evident since the entries of the block operator matrix $\CA_\theta$ are differential operators with real coefficients.
\end{proof}

Although $\CA_\theta$ is a bounded perturbation of $\CQ_\theta$, the norm of the perturbation being $\|\al\|$, 
we cannot conclude that $\sigma(\CA_\theta)$ lies in a $\|\alpha\|$-neighbourhood of 
$\sigma(\CQ_\theta) = \sigma(-A_\theta) \cup \sigma(-A_\infty)$ 
since $\CQ_\theta$ is neither self-adjoint nor normal. Therefore, in the next proposition, we 
use a Neumann series argument to exclude points $\la\in\C$ from the (point) spectrum of $\CA_\theta$.

\begin{proposition}
\label{thm1}
Let $\theta\in[0,\infty]$. If $\,\la\notin \sigma(-A_\theta) \cup \sigma(-A_\infty)$ and 
\begin{equation}
\label{neumann-new}
  \big\|\alpha (A_\infty\!+\lambda)^{-1}(\al A_\theta -\alpha'D)(A_\theta+\lambda)^{-1} \big\| < 1,
\end{equation}
then $\la \in \rho(\CA_\theta)$.
\end{proposition}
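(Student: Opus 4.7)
The plan is to apply a Neumann-series argument to the bounded factor in the decomposition~\eqref{decomp2}, exploiting the nilpotent-like block structure of $\CR$. By hypothesis $\lambda \notin \sigma(-A_\theta)\cup\sigma(-A_\infty)$, so the identity~\eqref{sig0} established in Theorem~\ref{harz} yields $\lambda \in \rho(\CQ_\theta)$, with resolvent given explicitly by~\eqref{may10}. Since $\CR$ is everywhere defined and bounded and $\CD(\CQ_\theta) = \CD(\CA_\theta)$, the factorization
\[
 \CA_\theta - \lambda = \bigl(I + \CR(\CQ_\theta - \lambda)^{-1}\bigr)(\CQ_\theta - \lambda)
\]
is valid on $\CD(\CA_\theta)$, and it suffices to prove that the bounded operator $I + \CR(\CQ_\theta-\lambda)^{-1}$ is invertible.

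A direct $2\times 2$ block multiplication using the explicit form~\eqref{may10} and the shape~\eqref{decomp2b} of $\CR$ gives
\[
 \CR(\CQ_\theta-\lambda)^{-1} = \matrix{cc}{T_{11} & T_{12} \\ 0 & 0},
\]
where $T_{11} := -\alpha(A_\infty+\lambda)^{-1} A_{\theta,\alpha}(A_\theta+\lambda)^{-1}$ and $T_{12} := -\alpha(A_\infty+\lambda)^{-1}$; boundedness of $T_{11}$ is guaranteed by the identity~\eqref{combi}, which was already exploited in the proof of Theorem~\ref{harz}. Adding $I$ yields the block upper-triangular operator
\[
 I + \CR(\CQ_\theta - \lambda)^{-1} = \matrix{cc}{I + T_{11} & T_{12} \\ 0 & I},
\]
whose second diagonal block is the identity. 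Such a block upper-triangular operator is invertible if and only if its first diagonal block $I + T_{11}$ is invertible (an inverse can then be written down explicitly by back-substitution).

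Finally, Proposition~\ref{A_theta-alpha} gives the operator identity $A_{\theta,\alpha} = \alpha A_\theta - \alpha' D$, so that
\[
 \|T_{11}\| = \bigl\|\alpha (A_\infty+\lambda)^{-1}(\alpha A_\theta - \alpha' D)(A_\theta+\lambda)^{-1}\bigr\| < 1
\]
by hypothesis~\eqref{neumann-new}. The standard Neumann series $\sum_{n=0}^{\infty}(-T_{11})^n$ then converges and inverts $I + T_{11}$, hence $I + \CR(\CQ_\theta-\lambda)^{-1}$, and therefore $\CA_\theta - \lambda$; this yields $\lambda \in \rho(\CA_\theta)$. The only delicate point is not the analysis but the bookkeeping: one must check at every stage that the operator products are bounded and defined on the correct domain so that the factorization and block multiplication are legitimate; this is however already covered by the auxiliary results of Section~\ref{sec-aux} and the estimates collected in the proof of Theorem~\ref{harz}.
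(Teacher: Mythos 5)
Your proof is correct and follows essentially the same route as the paper: the factorization $\CA_\theta - \lambda = (I + \CR(\CQ_\theta-\lambda)^{-1})(\CQ_\theta - \lambda)$, the explicit block upper-triangular form of $I + \CR(\CQ_\theta-\lambda)^{-1}$, reduction to invertibility of the $(1,1)$ block, and the identity $A_{\theta,\alpha} = \alpha A_\theta - \alpha' D$ from Proposition~\ref{A_theta-alpha} to match the norm in \eqref{neumann-new}. The paper's proof is nearly identical, just more compressed.
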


\begin{proof}
By \eqref{db}, we have $\lambda\in\rho(\mathcal A_\theta)$ if and only if $\lambda\in\rho(\CQ_\theta)=\rho(- A_\theta)\cap\rho(- A_\infty)$ 
and $I+\CR(\CQ_\theta - \lambda)^{-1}$ is boundedly invertible. 
By \eqref{may10}, the operator
\begin{align*}
I+\CR(\CQ_\theta - \lambda)^{-1}&=I+\begin{pmatrix}0&\alpha\\0&0\end{pmatrix}\begin{pmatrix}-(A_\theta+\lambda)^{-1}&0\\-(A_\infty+\lambda)^{-1}A_{\theta,\alpha}(A_\theta+\lambda)^{-1}&-(A_\infty+\lambda)^{-1}\end{pmatrix}\\
&=\begin{pmatrix}I-\alpha (A_\infty+\lambda)^{-1}A_{\theta,\alpha}(A_\theta+\lambda)^{-1}&-\alpha(A_\infty+\lambda)^{-1}\\0&I\end{pmatrix}
\end{align*}
is boundedly invertible if and only if the operator $I-\alpha (A_\infty+\lambda)^{-1}A_{\theta,\alpha}(A_\theta+\lambda)^{-1}$ is boundedly invertible. The latter 
holds if condition \eqref{neumann-new} is satisfied.
\end{proof}

\begin{remark}
The block operator matrix $\CA_\theta$ can also be decomposed as
\[
   \CA_\theta := \matrix{cc}{ -A_\theta & 0 \\  0 & - A_\infty}  + \matrix{cc}{0 & \al \\ A_{\theta,\alpha} & 0}.
\]
Here the first term is self-adjoint, while the second term is only relatively bounded with respect to the first one. 
A corresponding Neumann series argument leads to the same condition \eqref{neumann-new} for a point to be in the resolvent set. 
\end{remark}

In the following we use Proposition \ref{thm1} to obtain an enclosure for the eigenvalues of $\CA_\theta$. To this end, we
estimate the norm on the left hand side of \eqref{neumann-new} by means of Lemma \ref{resolv-est}. 
According to the different resolvent estimates therein, we decompose the complex plane $\C$ as
\begin{equation}
\label{z-regions}
  \C = Z_0 \cup Z_1 \cup Z_2 \cup Z_3 \cup Z_4 \cup Z_5 \cup Z_6
\end{equation}
into the pairwise disjoint sets 
\begin{align*}
  Z_0 := \biggl\{ z\in\C : &\ \Re z \le -\la_1(\theta), \ \Im z =0 \biggr\} = \big( -\infty, -\la_1(\theta) \big], \\
  Z_1 := \biggl\{ z\in\C : &\ \Re z \le -\la_1(\infty), \ \Im z \ne 0 \biggr\}, \\
  Z_2 := \biggl\{ z\in\C : &\ -\la_1(\infty) < \Re z \le 0, \ \Im z \ne 0, \ |z| > \la_1(\theta) \biggr\}, \\
  Z_3 := \biggl\{ z\in\C : &\ -\la_1(\theta) < \Re z \le 0, \ |z + \la_1(\theta)/2| > \la_1(\theta)/2, \ |z| \le \la_1(\theta) \biggr\}, \\
  Z_4 := \biggl\{ z\in\C : &\ -\la_1(\theta) < \Re z \le 0, \ |z + \la_1(\theta)/2| \le \la_1(\theta)/2 \biggr\}, \\
  Z_5 := \biggl\{ z\in\C : &\ \Re z > 0, \ |z| \le \la_1(\theta) \biggr\}, \\
  Z_6 := \biggl\{ z\in\C : &\ \Re z > 0, \ |z| > \la_1(\theta) \biggr\}.
\end{align*} 
The sets $Z_i$ are shown in Figure \ref{Z} for $\theta\!=\!l\!=\!1$ where $\la_1(\theta) \!= \!\pi^2 \!\approx\! 9.87\!$, $\la_1(\infty)\!\approx\! 20.19$.


\begin{figure}[h]
\vspace{1mm}
\parbox{6cm}
{\epsfysize=6cm
\epsfbox{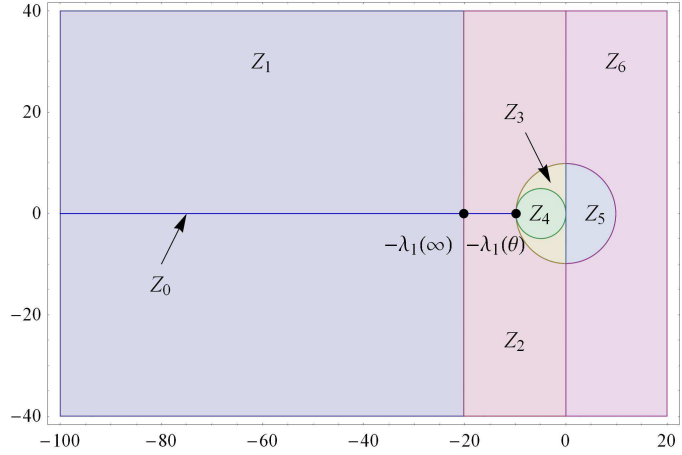}}
\caption{\label{Z} Decomposition \eqref{z-regions} of the complex plane for $l=1$.}
\end{figure} 

\begin{proposition}
\label{thm2}
Let $\theta\in[0,\infty]$. Then 
\begin{equation}
\label{Sigma}
 \sigma(\CA_\theta) \subset \Sigma_\theta :=  \big(\!-\!\infty,-\la_1(\theta)\big] \cup \big\{ \la \in \C: f(\la) \ge 1 \big\}
\end{equation}
where the function $f:\C\setminus (-\infty,-\la_1(\theta)] \to [0,\infty)$ is defined as 
\begin{equation}
\label{f}
  f(\la) := f_j(\la), \quad \la \in Z_j, \ j=1,2,\dots,6,
\end{equation}
with
\begin{alignat*}{2} 
 f_1(\la) &:=
 \left( \|\al\|^2 \frac{|\la|}{|\Im \la|}+ \|\al\| \, \|\al'\| \frac {\sqrt{|\la|\,}}{|\la + |\la|\,|} \right) \frac 1{|\Im\la|}, 
 && \quad \la \in Z_1, \\
 f_2(\la) &:=
 \left( \|\al\|^2 \frac{|\la|}{|\Im \la|}+ \|\al\| \, \|\al'\| \frac {\sqrt{|\la|\,}}{|\la + |\la|\,|} \right) \frac 1{|\la+\la_1(\infty)|}, 
 && \quad \la\in Z_2,\\  
\intertext{ }\\[-1.2cm]
 f_3(\la) &:=
 \left( \|\al\|^2 \frac{|\la|}{|\Im \la|}+ \|\al\| \, \|\al'\| \frac {\sqrt{\la_1(\theta)}}{|\la + \la_1(\theta)|} \right) \frac 1{|\la+\la_1(\infty)|}, 
 && \quad \la \in Z_3,\\
   f_4(\la) &:=
 \left( \|\al\|^2 \frac {\la_1(\theta)}{|\la + \la_1(\theta)|} + \|\al\| \, \|\al'\|  \frac {\sqrt{\la_1(\theta)}}{|\la + \la_1(\theta)|} \right) \frac 1{|\la+\la_1(\infty)|}, 
 && \quad \la \in Z_4,\\ 
 f_5(\la) &:=
 \left( \|\al\|^2 + \|\al\| \, \|\al'\| \frac {\sqrt{\la_1(\theta)}}{|\la + \la_1(\theta)|} \right) \frac 1{|\la+\la_1(\infty)|},
 && \quad \la \in Z_5,\\
  f_6(\la) &:=
 \left( \|\al\|^2 + \|\al\| \, \|\al'\| \frac {\sqrt{|\la|\,}}{|\la + |\la|\,|} \right) \frac 1{|\la+\la_1(\infty)|},
 && \quad \la \in Z_6.
\end{alignat*}
The set $\Sigma_\theta$ is symmetric to $\R$ and bounded from the right with 
\begin{equation}
\label{atheta}
    a_\theta := \max \Re \Sigma_\theta \ge - \la_1(\theta),
\end{equation}
but it is not bounded from above nor from below in the complex plane $\C$.
\end{proposition}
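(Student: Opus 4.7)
The plan is to apply Proposition \ref{thm1} to every $\la \in \C \setminus \Sigma_\theta$ and show that the norm in \eqref{neumann-new} is strictly less than $1$. First, by Proposition \ref{A_theta}~ii) and the interlacing in~iii), one has $\sigma(-A_\theta) \cup \sigma(-A_\infty) \subset (-\infty, -\la_1(\theta)] = Z_0 \subset \Sigma_\theta$, so it suffices to treat $\la$ lying in some $Z_j$ with $f_j(\la) < 1$. Using the identity $A_{\theta,\al} = \al A_\theta - \al' D$ from Proposition \ref{A_theta-alpha} together with the factorization $D(A_\theta+\la)^{-1} = DA_\theta^{-1/2} \cdot A_\theta^{1/2}(A_\theta+\la)^{-1}$ and the bound $\|DA_\theta^{-1/2}\| \le 1$ from Lemma \ref{march3}, the left-hand side of \eqref{neumann-new} is majorized by
\[
  \|\al\|^2\, \big\|(A_\infty\!+\!\la)^{-1}\big\|\, \big\|A_\theta(A_\theta\!+\!\la)^{-1}\big\|
  + \|\al\|\,\|\al'\|\, \big\|(A_\infty\!+\!\la)^{-1}\big\|\, \big\|A_\theta^{1/2}(A_\theta\!+\!\la)^{-1}\big\|.
\]

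I would then carry out a case analysis over $j = 1, \ldots, 6$. In each region $Z_j$, the defining inequalities are precisely what is needed to select the appropriate case of Lemma \ref{resolv-est} for each of the three resolvent factors: cases (i)/(ii) applied with $\theta = \infty$ for $\|(A_\infty+\la)^{-1}\|$; cases (1)/(2)/(3) for $\|A_\theta(A_\theta+\la)^{-1}\|$; and cases (a)/(b) for $\|A_\theta^{1/2}(A_\theta+\la)^{-1}\|$. For example, in $Z_1$ one has $\Re\la \le -\la_1(\infty)$, $\Im\la \ne 0$, and then $|\la| \ge \la_1(\infty) \ge \la_1(\theta)$ and $|\la+\la_1(\theta)/2| \ge \la_1(\infty) - \la_1(\theta)/2 \ge \la_1(\theta)/2$ thanks to the interlacing $\la_1(\theta) \le \la_1(\infty)$ of Proposition \ref{A_theta}~iii); hence cases (ii), (2), (b) apply, and their product is exactly $f_1(\la)$. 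The five remaining regions are handled analogously, and the explicit bounds $f_j$ arise by direct substitution.

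For the geometric properties of $\Sigma_\theta$: symmetry with respect to $\R$ follows because every $f_j$ depends only on $\Re\la$, $|\Im\la|$, $|\la|$, and moduli $|\la + c|$ with real $c$, while the defining inequalities of each $Z_j$ are likewise invariant under $\la \mapsto \ov\la$. The inequality $a_\theta \ge -\la_1(\theta)$ is immediate since $-\la_1(\theta) \in Z_0 \subset \Sigma_\theta$. To see $a_\theta < \infty$, observe that in $Z_6$ one has $\Re\la > 0$, so $|\la + |\la|| \ge |\la|$ and $\sqrt{|\la|}/|\la+|\la|| \le 1/\sqrt{|\la|}$; hence the bracketed factor in $f_6$ stays bounded, while $|\la+\la_1(\infty)|^{-1} \to 0$ as $\Re\la \to \infty$. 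Consequently $f_6(\la) < 1$ for $\Re\la$ large, and $a_\theta$ is finite. Finally, to show that $\Sigma_\theta$ is unbounded in the imaginary direction, fix any $t > 0$ and take $\la = -s + \i t$ with $s \ge \la_1(\infty)$, which lies in $Z_1$. Then
\[
  f_1(\la) \ge \|\al\|^2\, \frac{|\la|}{|\Im\la|^2} \ge \|\al\|^2\, \frac{s}{t^2},
\]
which exceeds $1$ once $s$ is large enough; since $t$ is arbitrary, $\Sigma_\theta$ contains points with arbitrarily large $|\Im\la|$.

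The main obstacle is the careful bookkeeping in the case analysis: one has to verify both that the seven sets $Z_0, \ldots, Z_6$ form a pairwise disjoint cover of $\C$, and that within each $Z_j$ the conjunction of inequalities is precisely what triggers the correct case of each of the three resolvent bounds in Lemma \ref{resolv-est}. Once this matching is in place, the $f_j$ are obtained by direct substitution, and the geometric properties of $\Sigma_\theta$ follow from inspection of the explicit formulas.
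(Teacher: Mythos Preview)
Your proposal is correct and follows essentially the same approach as the paper: you derive the same majorization \eqref{important} from \eqref{neumann-new} via $A_{\theta,\al}=\al A_\theta-\al'D$ and $\|DA_\theta^{-1/2}\|\le 1$, then feed in the case-by-case bounds of Lemma~\ref{resolv-est} on each $Z_j$ (with the same $Z_1$ example), and your arguments for symmetry, right-boundedness, and vertical unboundedness of $\Sigma_\theta$ mirror those in the paper almost verbatim.
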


\begin{proof}
By Lemma \ref{march3}, we have $\big\| D A_\theta^{-1/2} \big\|\le 1$. Hence 
condition \eqref{neumann-new} in Proposition~\ref{thm1} is clearly satisfied for $\la\notin (-\infty,-\la_1(\theta)]$  if
\begin{equation}
\label{important}
  \big\| (A_\infty\!+\!\lambda)^{-1} \big\| \left( \|\al\|^2 \big\| A_\theta (A_\theta\!+\!\lambda)^{-1} \big\| 
  + \| \al\| \, \|\al'\| \, \big\| A_\theta^{1/2}(A_\theta\!+\!\lambda)^{-1} \big\| \right) < 1.
\end{equation}
Now we obtain the spectral inclusion \eqref{Sigma} from Proposition~\ref{thm1} 
by combining the various resolvent estimates in Lemma~\ref{resolv-est}, e.g., the estimates (ii), (b), and (2) for~$Z_1$.

The symmetry of $\Sigma_\theta$ follows since $\la_1(\theta)$, $\la_1(\infty) \in \R$ and hence $f_j(\la)=f_j(\ov\la)$, $\la\notin\C\setminus(-\infty,-\la_1(\theta)]$. 
Because $f_6(\la) \to 0$ for $\Re \la \to \infty$, the set $\Sigma_\theta$ is bounded from the right. 
The inequality \eqref{atheta} is immediate from the inclusion $\big(\!-\!\infty, -\la_1(\theta) \big] \subset \Sigma_\theta$.
If $\Sigma_\theta$ were bounded from above or from below, there would exist an $M_0 \ge 0$ such that $f_1(\la) < 1 $ for 
all $\la \in \C$ with $|\Im\la| > M_0$ and $\Re\la \le -\la_1(\infty)$; on the other hand, 
\[
 f_1(\la) \ge \|\alpha\|^2 \frac{|\la|}{|\Im \la|^2} \ge \|\alpha\|^2 \frac{|\Re\la|}{M_0^2} \longrightarrow \infty, \quad \Re \la \to - \infty,
\]
a contradiction.
\end{proof}
 
%
In the sequel we describe the shape and the location of the set $\Sigma_\theta$ in dependence of $\|\alpha\|$ and $\|\alpha'\|$ for $\alpha\not\equiv 0$;
if $\alpha \equiv 0$, then clearly $\Sigma_\theta=\big(\!-\!\infty, -\la_1(\theta) \big]$.

\begin{lemma}
\label{implicit}
Let $f$ be given by \eqref{f}. Then
$\varphi:\R^2 \setminus \big( (-\infty,-\la_1(\theta)] \times \{0\} \big) \to [0,\infty)$ defined by
\[
  \varphi(\xi,\eta) := f(\la), \quad \xi+\i\eta = \la  \in \C\setminus (-\infty,-\la_1(\theta)],
\]
is continuous and continuously differentiable on $\R^2 \setminus \big( (-\infty,-\la_1(\theta)] \times \{0\} \big)$ with
\begin{equation}
\label{curves}
  \dfrac{\partial \varphi}{\partial \xi}(\xi,\eta) < 0, \quad \dfrac{\partial \varphi}{\partial \eta}(\xi,\eta) < 0, \quad  \xi \in \R, \ \eta > 0,  
\end{equation}
and
\begin{equation}
\label{curves2}
  \dfrac{\partial \varphi}{\partial \xi}(\xi,0) < 0, \quad \dfrac{\partial \varphi}{\partial \eta}(\xi,0) =0, \quad  \xi \in \R \setminus (-\infty,-\la_1(\theta)].
\end{equation}
\end{lemma}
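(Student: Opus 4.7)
The plan is to recognise that on each region $Z_j$ the piecewise formula $f_j$ is simply the Cauchy--Schwarz-type bound \eqref{important} with the three resolvent norms replaced by the corresponding estimates of Lemma~\ref{resolv-est}. A side-by-side comparison of $f_1,\dots,f_6$ with the cases (i)/(ii), (a)/(b) and (1)/(2)/(3) of that lemma shows that, globally on $\C\setminus(-\infty,-\lambda_1(\theta)]$,
\[
  f(\lambda)=c_\infty(\lambda)\,\bigl(\|\alpha\|^2 c_1(\lambda)+\|\alpha\|\,\|\alpha'\|\,c_{1/2}(\lambda)\bigr),
\]
where $c_\infty(\lambda):=\|(A_\infty+\lambda)^{-1}\|$ and $c_s(\lambda):=\|A_\theta^s(A_\theta+\lambda)^{-1}\|$, $s\in\{1/2,1\}$, are the bounds furnished by Lemma~\ref{resolv-est}. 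Continuity of $\varphi$ on $\R^2\setminus\bigl((-\infty,-\lambda_1(\theta)]\times\{0\}\bigr)$ is then immediate from the closing sentence of Lemma~\ref{resolv-est}.

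For continuous differentiability I would invoke the envelope representation $c_s(\lambda)=\sup_{t\geq t_0}t^s/|t+\lambda|$ from \eqref{kato} (with $t_0=\lambda_1(\infty)$ for $c_\infty$ and $t_0=\lambda_1(\theta)$ for $c_1,c_{1/2}$). For fixed $\lambda$ the map $t\mapsto t^s/|t+\lambda|$ is unimodal on $(0,\infty)$, so the supremum over $[t_0,\infty)$ is attained either at $t_0$ or at the unique interior critical point $t_*(\lambda)$; both depend smoothly on $\lambda$, making $c_s$ of class $C^1$ in the interior of each $Z_j$. Since $t_*(\lambda)=t_0$ along the transition curve, the envelope theorem guarantees that the first partials computed from the two adjacent regions coincide on their common boundary; hence $c_\infty$, $c_1$, $c_{1/2}$, and therefore $f$ and $\varphi$, are $C^1$ on the whole admissible domain. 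A more pedestrian alternative is to differentiate the six explicit pieces $f_j$ and verify term-by-term matching across the four interface curves $\Re\lambda=-\lambda_1(\infty)$, $\Re\lambda=0$, $|\lambda|=\lambda_1(\theta)$ and $|\lambda+\lambda_1(\theta)/2|=\lambda_1(\theta)/2$.

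The monotonicity assertions \eqref{curves}, \eqref{curves2} reduce to an inspection of the elementary partials
\[
  \partial_\xi\frac{t^s}{|t+\lambda|}=-\frac{t^s(t+\xi)}{|t+\lambda|^3},\qquad
  \partial_\eta\frac{t^s}{|t+\lambda|}=-\frac{t^s\,\eta}{|t+\lambda|^3},\qquad t>0,
\]
both $\leq 0$ as soon as $t+\xi\geq 0$, strictly negative for $\eta>0$, and with vanishing $\eta$-derivative at $\eta=0$. Hence each of $c_\infty,c_1,c_{1/2}$ is non-increasing in $\xi$, strictly $\eta$-decreasing for $\eta>0$, and has $\partial_\eta=0$ on the real axis. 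Writing $h(\lambda):=\|\alpha\|^2 c_1(\lambda)+\|\alpha\|\,\|\alpha'\|\,c_{1/2}(\lambda)$, the product rule combined with $c_\infty>0$ and $h>0$ (the latter granted by $\alpha\not\equiv 0$) yields the $\eta$-parts of \eqref{curves}, \eqref{curves2} at once. For the strict $\xi$-negativity I would split cases: if $\|\alpha'\|>0$, then $\partial_\xi c_{1/2}<0$ throughout the admissible domain (its zero locus lies entirely on the excluded half-line), so $\partial_\xi h<0$ and hence $\partial_\xi f<0$; if $\|\alpha'\|=0$ (so $\alpha$ is a non-zero constant), then $h=\|\alpha\|^2 c_1$, and the zero loci of $\partial_\xi c_\infty$ and $\partial_\xi c_1$ on the admissible domain are the disjoint sets $\{\Re\lambda\leq-\lambda_1(\infty)\}$ and $\{\Re\lambda\geq 0\}$, so at least one of the two summands in $\partial_\xi f=(\partial_\xi c_\infty)h+c_\infty\|\alpha\|^2\partial_\xi c_1$ is strictly negative, again giving $\partial_\xi f<0$.

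The principal obstacle I foresee is the bookkeeping required to match partial derivatives across the four interface curves; the envelope-theorem viewpoint compresses this into a single envelope argument, whereas a direct check amounts to a finite but somewhat tedious list of elementary identities.
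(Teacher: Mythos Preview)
Your argument is correct and takes a genuinely different route from the paper's own proof. The paper simply states that ``it is elementary, but very tedious, to calculate the partial derivatives of the functions $\varphi_j$ on $Z_j$ and to show that the first partial derivatives of $\varphi_i$ and $\varphi_j$ coincide on common boundary points of $Z_i$ and $Z_j$'' and that these calculations also yield the sign conditions; in other words, it carries out (or defers to the reader) the brute-force verification across all six pieces and four interface curves. You instead exploit the structural fact that each factor $c_\infty$, $c_1$, $c_{1/2}$ in $f$ is the envelope $\sup_{t\ge t_0} t^s/|t+\lambda|$ of a smooth one-parameter family, so that $C^1$-matching across the transition curves follows from the envelope theorem (the maximiser $t_*(\lambda)$ meets $t_0$ on the interface, and $\partial_t g$ vanishes there), while the monotonicity in $\xi$ and $\eta$ is inherited directly from that of $t^s/|t+\lambda|$. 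This is considerably cleaner and explains \emph{why} the pieces fit together, rather than merely checking that they do. Two small points worth tightening: the notation $c_s(\lambda)=\|A_\theta^s(A_\theta+\lambda)^{-1}\|$ is a slight abuse, since the bounds of Lemma~\ref{resolv-est} are the suprema over $[\lambda_1(\theta),\infty)$ rather than over the discrete spectrum $\sigma(A_\theta)$ (this is exactly the envelope you then use, so the argument is unaffected); and the region $\{\Re\lambda\ge 0\}$ where $c_1\equiv 1$ has its supremum at $t=\infty$ rather than at a finite critical point, so the envelope argument needs a one-line limiting check there (which you can read off from $\partial_\xi g,\partial_\eta g=O(1/t)$ as $t\to\infty$).
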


\begin{proof}
According to the definition of $f$ in \eqref{f}, the function $\varphi$ is given by
\[
 \varphi(\xi,\eta) = \varphi_j(\xi,\eta):=f_j(\la), \quad \la=\xi+\i \eta \in Z_j, \quad j=1,2,\dots,6.
\]
Hence $\varphi$ is continuous since the functions $\varphi_j$ are continuous on their domains $Z_j$ and, by the last claim in Lemma \ref{resolv-est}, the continuous extensions of two functions $\varphi_i$ and $\varphi_j$ coincide on common boundary points of $Z_i$ and $Z_j$.

It is elementary, but very tedious, to calculate the partial derivatives of the functions $\varphi_j$ on $Z_j$
and to show that the first partial derivatives of $\varphi_i$ and $\varphi_j$ coincide on common boundary points of $Z_i$ and $Z_j$. This proves that
$\varphi$ has continuous first partial derivatives on $\R^2 \setminus \big((-\infty,-\la_1(\theta)]\times \{0\} \big)$ and is hence totally differentiable there.
These calculations also show that \eqref{curves} and \eqref{curves2} hold. 
\end{proof}

In the next theorem the boundary of the set $\Sigma_\theta$ is described by a function $h_\theta$; in particular, 
we derive formulas for the rightmost point $a_\theta$ of $\Sigma_\theta$ (see \eqref{atheta}).

\begin{theorem}
\label{!!!}
Let $\theta\in[0,\infty]$ and $\al \not \equiv 0$.
Then there exists a continuous strictly decreasing function $h_\theta\!:(\!-\infty,a_\theta] \to [0,\infty)$ with
\begin{equation}
\label{htheta}
  \lim_{t\to-\infty} h_\theta (t) = \infty, \quad h_\theta(a_\theta)=0, \quad \lim_{t\nearrow a_\theta} h'_\theta (a_\theta)=-\infty
\end{equation}
such that
\begin{equation}
\label{Sigma-new}
 \sigma(\CA_\theta) \subset \Sigma_\theta = \big\{ \la \in \C : \, \Re \la \le a_\theta, \, |\Im\la| \le h_\theta (\Re \la) \big\}.
\end{equation}
Depending on $\|\alpha\|$ and $\|\alpha'\|$, the following cases occur: \vspace{1mm}
\begin{hlist}
\item[{\rm (i)}] $\,0< \|\alpha\|^2 + \dfrac{\|\al\|\,\|\alpha'\|}{\sqrt{\lambda_1(\theta)}} \le \lambda_1(\infty)${$:$} \ \vspace{-1mm} Then 
$$
  -\la_1(\theta) <  a_\theta \le 0
$$ 
and $a_\theta$ is given by
\[
\hspace*{6mm} 
  a_\theta  = - \frac{\la_1(\infty) \!+\! \la_1(\theta)}2 
  + \sqrt{\left( \frac{\la_1(\infty) \!-\! \la_1(\theta)}2 \right)^{\!2} \!\!\!+ \la_1(\theta) \!\left( \|\alpha\|^2 \!+    \dfrac{\|\al\|\|\alpha'\|}{\sqrt{\lambda_1(\theta)}}\right)}.
\]
\item[{\rm (ii)}] $\,\|\alpha\|^2 \!+ \dfrac{\|\al\|\,\|\alpha'\|}{\sqrt{\lambda_1(\theta)}}>\lambda_1(\infty)\,$ and $\,\|\alpha\|^2 \!+ \dfrac{\|\al\|\,\|\alpha'\|}{2\sqrt{\lambda_1(\theta)}} \le \lambda_1(\infty)+ \la_1(\theta)${$:$} \ Then
$$
  0 <  a_\theta  \le \la_1(\theta)
$$
and $a_\theta$ is given by
\[
\hspace*{6mm} 
a_\theta = - \frac{\la_1(\infty) \!+\! \la_1(\theta)\!-\!\|\al\|^2}2 
  + \sqrt{\left( \frac{\la_1(\infty) \!-\! \la_1(\theta)-\|\al\|^2}2 \right)^{\!2} \!\!\!+ \|\al\|\|\alpha'\| \sqrt{\lambda_1(\theta)}};
\]
\noindent
moreover, $\Sigma_\theta$ contains the disc $Z_4=\{z\in\C:|z+\la_1(\theta)/2|\le \la_1(\theta)/2\}$.
\vspace{2mm}
\item[{\rm (iii)}] $\,\|\alpha\|^2 + \dfrac{\|\al\|\,\|\alpha'\|}{2\sqrt{\lambda_1(\theta)}} > \lambda_1(\infty)+ \la_1(\theta)${$:$} \vspace{-1mm} Then
$$
  \la_1(\theta) < a_\theta
$$
and $a_\theta$ is the $($unique$)$ solution of the equation
\begin{equation}
\label{eq}
  2 \, \sqrt{\la} \, \big(\la + \la_1(\infty) - \|\al\|^2 \big)  = \|\al\|\,\|\al'\| \ \text{ in } \ (\la_1(\theta),\infty);
\end{equation}
moreover, $\Sigma_\theta$ contains the disc $Z_3 \cup Z_4 \cup Z_5 = \{z\in\C:|z|\le \la_1(\theta)\}$. \vspace{1mm}
\end{hlist}
\end{theorem}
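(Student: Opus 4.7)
The plan is to exploit the monotonicity and smoothness of $\varphi=f$ established in Lemma~\ref{implicit} to parameterize the right boundary of $\Sigma_\theta$ as a graph $\eta=h_\theta(\xi)$ in the upper half-plane, and then to compute the rightmost real point $a_\theta$ by solving $f(t)=1$ in the appropriate region $Z_j$ along the real axis. The construction of $h_\theta$ goes as follows: by \eqref{curves2}, $\varphi(\cdot,0)$ is strictly decreasing on $(-\la_1(\theta),\infty)$, blows up as $\xi\searrow-\la_1(\theta)$ (a denominator vanishes) and tends to $0$ as $\xi\to\infty$, so it attains the value $1$ at a unique point which, by the definition of $\Sigma_\theta$, coincides with $a_\theta$. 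For any $\xi\le a_\theta$, strict monotonicity of $\varphi(\xi,\cdot)$ on $[0,\infty)$ (from \eqref{curves}) together with $\varphi(\xi,0)\ge 1$ and $\varphi(\xi,\eta)\to 0$ as $\eta\to\infty$ yields a unique $h_\theta(\xi)\ge 0$ with $\varphi(\xi,h_\theta(\xi))=1$.

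The properties of $h_\theta$ then follow from those of $\varphi$: strict decrease is immediate from strict monotonicity of $\varphi$ in both variables; $h_\theta(a_\theta)=0$ by construction; $h_\theta(\xi)\to\infty$ as $\xi\to-\infty$ since $\Sigma_\theta$ is unbounded in the complex plane (Proposition~\ref{thm2}); the implicit function theorem applied on $(-\infty,a_\theta)$ yields $h_\theta'(\xi)=-\partial_\xi\varphi/\partial_\eta\varphi$, and since $\partial_\eta\varphi(a_\theta,0)=0$ by \eqref{curves2} while $\partial_\xi\varphi(a_\theta,0)<0$, we obtain $h_\theta'(\xi)\to-\infty$ as $\xi\nearrow a_\theta$.

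To compute $a_\theta$ I would determine which of the regions $Z_4,Z_5,Z_6$ meets the real axis around $a_\theta$ and solve the corresponding equation $f_j(t)=1$. In case (i), the hypothesis forces $f_4(0)\le 1$, so $a_\theta\in(-\la_1(\theta),0]\subset Z_4$; writing $f_4(t)=1$ as the quadratic $(t+\la_1(\theta))(t+\la_1(\infty))=\la_1(\theta)(\|\al\|^2+\|\al\|\|\al'\|/\sqrt{\la_1(\theta)})$ and taking the larger root gives the stated formula. In case (ii), we have $f_4(0)>1$ but $f_5(\la_1(\theta))\le 1$, so $a_\theta\in(0,\la_1(\theta)]\subset Z_5$; $f_5(t)=1$ reduces to the quadratic $(t+\la_1(\theta))(t+\la_1(\infty)-\|\al\|^2)=\|\al\|\|\al'\|\sqrt{\la_1(\theta)}$, whose larger root produces the displayed expression. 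In case (iii), $a_\theta\in(\la_1(\theta),\infty)\subset Z_6$ and $f_6(t)=1$ is exactly \eqref{eq}, with uniqueness of the solution coming from strict monotonicity of the left-hand side on $(\la_1(\theta),\infty)$.

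The disc inclusions for cases (ii) and (iii) follow by bounding $f_j$ from below on the relevant disc: maximizing $|\la+\la_1(\theta)|$ and $|\la+\la_1(\infty)|$ over the disc gives a lower bound for each $f_j$ that is strictly greater than $1$ precisely under the hypothesis of the case. Concretely, on $Z_4$ (case ii) one uses $|\la+\la_1(\theta)|\le\la_1(\theta)$ and $|\la+\la_1(\infty)|\le\la_1(\infty)$ (maximum attained at $\la=0$) to get $f_4(\la)\ge(\|\al\|^2+\|\al\|\|\al'\|/\sqrt{\la_1(\theta)})/\la_1(\infty)>1$, and on $\{|z|\le\la_1(\theta)\}$ (case iii) the analogous bounds $|\la+\la_1(\theta)|\le 2\la_1(\theta)$ and $|\la+\la_1(\infty)|\le\la_1(\theta)+\la_1(\infty)$ yield a common lower bound $(\|\al\|^2+\|\al\|\|\al'\|/(2\sqrt{\la_1(\theta)}))/(\la_1(\theta)+\la_1(\infty))>1$ for $f_3,f_4,f_5$. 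The main technical obstacle is the piecewise nature of $f$ and the need to match the precise region conditions with the stated hypotheses on $\|\al\|$ and $\|\al'\|$; all the smoothness and monotonicity work is already done in Lemma~\ref{implicit}.
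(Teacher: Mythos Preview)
Your proposal is correct and follows essentially the same route as the paper: use the monotonicity of $\varphi$ from Lemma~\ref{implicit} together with the Implicit Function Theorem to produce $h_\theta$, locate $a_\theta$ as the unique real solution of $f(t)=1$, and read off the explicit formulas by solving $f_j(t)=1$ in the appropriate region $Z_4$, $Z_5$, or $Z_6$. One small slip: your ``common lower bound'' for $f_3,f_4,f_5$ in case~(iii) does not quite hold for $f_4$ with the coarse estimate $|\la+\la_1(\theta)|\le 2\la_1(\theta)$ (it gives $\|\al\|^2/2$ rather than $\|\al\|^2$); the paper instead establishes $Z_5\subset\Sigma_\theta$ directly and then invokes the monotonicity of $f$ to absorb $Z_3\cup Z_4$, which is cleaner and avoids the issue.
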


In Figure \ref{(i)} the boundary of the set $\Sigma_\theta$ containing the spectrum of the dynamo operator is displayed for the three cases above and $l=1$, keeping the colour scheme for the sets $Z_i$ from Figure \ref{Z}.

\begin{figure}[h]
\vspace{1mm}
\parbox{6cm}
{\epsfysize=6cm
\epsfbox{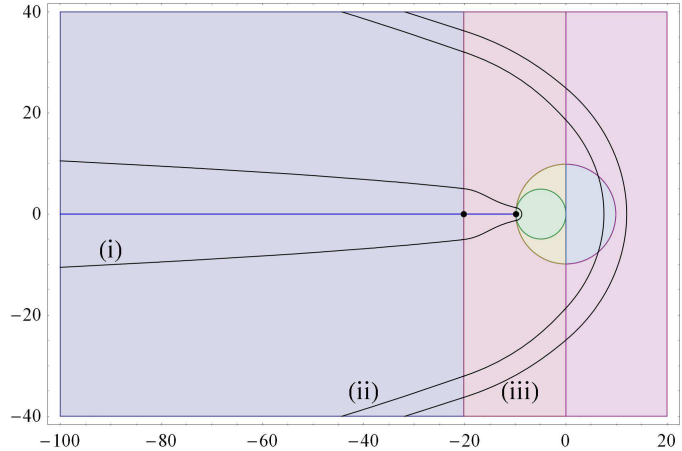}}
\caption{\label{(i)} Boundary of $\Sigma_\theta$ in Theorem \ref{!!!} (i), (ii), (iii) for $l=1$.}
\end{figure} 

\begin{remark}
The second inequality in case (ii) can also be written as 
\[
  \|\alpha\|^2 \!+ \dfrac{\|\al\|\,\|\alpha'\|}{\sqrt{\lambda_1(\theta)}} \le \la_1(\infty) + \la_1(\theta) + \dfrac{\|\al\|\,\|\alpha'\|}{2\sqrt{\lambda_1(\theta)}}. 
\]
This shows that case (ii) indeed appears and that (i), (ii), and (iii) exhaust all possible cases for $\|\al\|$ and $\|\al'\|$.
\end{remark}

\begin{corollary}
\label{stable} 
The operator $\CA_\theta$ has no spectrum in the closed right half-plane~if 
\[
  \|\alpha\|^2 + \dfrac{\|\al\|\|\alpha'\|}{\sqrt{\lambda_1(\theta)}} < \lambda_1(\infty).
\]
\end{corollary}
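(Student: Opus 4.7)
The plan is to derive Corollary \ref{stable} as an immediate consequence of Theorem \ref{!!!}, case (i), by showing that the strict inequality in the hypothesis forces the rightmost point $a_\theta$ of the enclosure $\Sigma_\theta$ to be strictly negative.

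First I would dispose of the trivial case $\alpha \equiv 0$: here $\Sigma_\theta = (-\infty,-\la_1(\theta)]$ lies in the open left half-plane, and $\sigma(\CA_\theta) \subset \Sigma_\theta$ by Proposition \ref{thm2}, so nothing is to be shown. So assume $\alpha \not\equiv 0$. The hypothesis
\[
  \|\alpha\|^2 + \dfrac{\|\al\|\,\|\alpha'\|}{\sqrt{\lambda_1(\theta)}} < \lambda_1(\infty)
\]
places us into case (i) of Theorem \ref{!!!}, which furnishes the explicit formula
\[
  a_\theta  = - \frac{\la_1(\infty) + \la_1(\theta)}{2}
  + \sqrt{\left( \frac{\la_1(\infty) - \la_1(\theta)}{2} \right)^{\!2} \!+ \la_1(\theta)\!\left( \|\alpha\|^2 +    \dfrac{\|\al\|\|\alpha'\|}{\sqrt{\lambda_1(\theta)}}\right)}.
\]

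The key step is then the short algebraic check that $a_\theta < 0$. Replacing the bracket inside the square root by its strict upper bound $\la_1(\infty)$ and noting that the expression under the root is monotone in this quantity, one sees
\[
  \left( \frac{\la_1(\infty) - \la_1(\theta)}{2} \right)^{\!2} \!+ \la_1(\theta)\la_1(\infty) = \left( \frac{\la_1(\infty) + \la_1(\theta)}{2} \right)^{\!2},
\]
so the strict inequality in the hypothesis turns into the strict inequality $a_\theta < 0$. By Theorem \ref{!!!}, $\sigma(\CA_\theta) \subset \Sigma_\theta \subset \{\la\in\C:\Re\la \le a_\theta\}$, which is contained in the open left half-plane. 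Hence $\CA_\theta$ has no spectrum in the closed right half-plane, as claimed.

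There is really no serious obstacle here: the entire content of the corollary is packed into case (i) of Theorem \ref{!!!}, and the only work is the one-line computation showing that the hypothesis makes the discriminant in the formula for $a_\theta$ strictly smaller than $\big(\tfrac{\la_1(\infty)+\la_1(\theta)}{2}\big)^2$.
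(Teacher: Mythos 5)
Your proof is correct and takes exactly the route the paper intends: Corollary~\ref{stable} is stated without proof because it is an immediate consequence of Theorem~\ref{!!!}\,(i), and your one-line algebraic check (replacing the bracket under the square root by its strict upper bound $\la_1(\infty)$ and completing the square) is precisely the computation that turns the ``$\le 0$'' of case~(i) into the strict ``$<0$'' needed here. Your separate treatment of $\alpha\equiv 0$ is a sensible precaution since Theorem~\ref{!!!} assumes $\alpha\not\equiv 0$, though the conclusion there is trivial anyway.
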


{\bf Proof of Theorem \ref{!!!}.}
The existence of the function $h_\theta$ with the claimed properties follows from Proposition \ref{thm2}, Lemma \ref{implicit} and the Implicit Function Theorem, applied to the restriction of $\varphi$ to the open upper half-plane.

The monotonicity properties of $\varphi$ in Lemma \ref{implicit} induce corresponding monotonicity properties for the function $f$;
in particular, $f$ is strictly decreasing on $(-\la_1(\theta),\infty)$. Therefore, since
\[
  \lim_{t\searrow-\la_1(\theta)}f(t)\, =\lim_{t\searrow-\la_1(\theta)}f_4(t)\, =\, \infty, \quad 
  \lim_{t\to\infty}f(t)=\lim_{t\to\infty}f_6(t)=0,
\]
the equation $f(\la)=1$ has exactly one solution in $(-\la_1(\theta),\infty)$; this solution is the 
unique zero $a_\theta$ of $h_\theta$ and hence equal to $\max \Re \Sigma_\theta$.
The location of $a_\theta$ is classified by the three cases in Theorem \ref{!!!}:
\begin{hlist}
\item[{\rm (i)}] The condition in (i) is equivalent to $f(0)=f_4(0)\le 1$ and thus $a_\theta \in (-\la_1(\theta),0]$.  Now the formula for $a_\theta$ is obtained by solving the quadratic equation 
\[
  \big(a_\theta + \la_1(\infty) \big) \big(a_\theta+\la_1(\theta)\big) = \|\al\|^2 \la_1(\theta) + \|\al\|\, \|\al'\| \sqrt{\la_1(\theta)}
\] 
for $a_\theta$ in the interval $(-\la_1(\theta),0]$, which is equivalent to $f_5(a_\theta) =1$. 
\vspace{1mm}
\item[{\rm (ii)}] The first condition in (ii) is equivalent to $f(0)=f_5(0)>1$, while the second condition in (ii) is equivalent to $f_5(\la_1(\theta))\le 1$. In this case $a_\theta\in(0,\la_1(\theta)]$. The formula for $a_\theta$ is obtained by solving the quadratic equation 
\[
  \big(a_\theta + \la_1(\infty)-\|\al\|^2 \big) \big(a_\theta+\la_1(\theta)\big) = \|\al\|\, \|\al'\| \sqrt{\la_1(\theta)}
\]
for $a_\theta$ in the interval $(0,\la_1(\theta)]$, which is equivalent to $f_4(a_\theta) =1$.
\\
Moreover,  for $\la\in Z_4 = \{ z\in \C : -\la_1(\theta) < \Re z \le 0$, $|z+\la_1(\theta)/2)| \le \la_1(\theta)/2\}$, we have the estimates 
$|\la+\la_1(\theta)| \le \la_1(\theta)$, $|\la+\la_1(\theta)| \le \la_1(\infty)$, and hence 
\begin{align*}
 f_4(\la) &\ge \left( \!\|\al\|^2 \!\!+ \frac{\|\al\|\,\|\al'\|}{\sqrt{\la_1(\theta)}} \right) \frac{1}{\la_1(\infty)} >1, \quad \la\in Z_4,
\end{align*}
by the first condition in case (ii). This proves that $Z_4 \subset \Sigma_\theta$.
\item[{\rm (iii)}] The condition in (iii) is equivalent to $f(\la_1(\theta))=f_6(\la_1(\theta))> 1$. Therefore $a_\theta\in(\la_1(\theta),\infty)$
and the equation $f(\la)=1$ is equivalent to $f_6(\la) =1$, which is~\eqref{eq}.
\\
Furthermore, for $\la \in Z_5 = \{ z\in\C: \Re z > 0, |z| \le \la_1(\theta) \}$, we have the estimates $|\la+\la_1(\theta)| \le 2 \la_1(\theta)$,
$|\la+\la_1(\theta)| \le \la_1(\infty)+\la_1(\theta)$, and hence 
\begin{align*}
 f_5(\la) & \ge \left( \|\al\|^2 + \|\al\| \, \|\al'\| \frac 1 {2\sqrt{\la_1(\theta)}} \right) \frac 1{\la_1(\infty)+\la_1(\theta)} > 1, \quad \la\in Z_5,
\end{align*}
by the condition in (iii). This shows that $Z_5 \subset \Sigma_\theta$. From the monotonicity properties of $f$ it follows that also $Z_3 \cup Z_4 \subset \Sigma_\theta$.
\qed
\end{hlist}

\smallskip

\begin{remark}
The spectral inclusion in Theorem~\ref{!!!} does not provide a uniform bound for the imaginary parts of the eigenvalues in the left half-plane.
Such a bound is obtained in the next section where we show that the imaginary parts of all eigenvalues are bounded by $\|\al'\|$ (see Theorem \ref{strip}).
\end{remark}

While Proposition~\ref{thm1} was applied in the proof of Theorem \ref{!!!} to establish a global inclusion for all eigenvalues of $\CA_\theta$,
it can also be used to study the local behaviour of eigenvalues.

The eigenvalues of $\CA_\theta$ for $\alpha \equiv 0$ coincide with the eigenvalues of the diagonal elements 
$-A_\theta$ and~$-A_\infty$ and are hence real; their multiplicity is 1 for $\theta \in [0,\infty)$ and $2$ for $\theta=\infty$.
The following proposition provides a condition guaranteeing that, e.g., if $\theta \in [0,\infty)$, for $\alpha \not\equiv 0$ an eigenvalue remains on the real axis.
With regard to the dynamo problem \eqref{diffsyst}, \eqref{boundcond}, such a results could be called a local non-oscillation theorem.

\begin{proposition}
\label{local}
Let $\theta \in [0,\infty]$ and $\lambda_0\in\sigma(-A_\theta)\cup \sigma(-A_\infty)$. Set
$$
  \delta_0:= \dfrac 12 \, \dist \Big( \lambda_0, \big( \sigma(-A_\theta)\cup \sigma(-A_\infty) \big) \setminus\{\lambda_0\}\Big)
$$ 
and denote by $\Gamma_0$ the circle centered at $\lambda_0$ with radius $\delta_0$. If
\begin{equation}
\label{schp}
   \|\alpha\|^2+\dfrac{\|\alpha\|\,\|\alpha'\|}{\sqrt{|\lambda_0|+2\delta_0}}<\dfrac{\delta_0^2}{|\lambda_0|+2\delta_0},
\end{equation}
then, for $\theta\in [0,\infty)$, the operator $\mathcal A_\theta$ has exactly one eigenvalue within the circle $\Gamma_0$, and this eigenvalue is simple and real;
the operator $\CA_\infty$ has exactly two eigenvalues within the circle $\Gamma_0$ $($counted with multiplicities$)$, and these eigenvalues are either real or form a complex conjugate pair.
\end{proposition}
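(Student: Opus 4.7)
The plan is to show that the circle $\Gamma_0$ lies in $\rho(\CA_\theta)$ by verifying condition~\eqref{neumann-new} of Proposition~\ref{thm1} on $\Gamma_0$ and then, by a homotopy in the coupling parameter, to identify the total algebraic multiplicity of $\sigma(\CA_\theta)$ inside $\Gamma_0$ with that of the decoupled operator $-A_\theta\oplus -A_\infty$. By the definition of $\delta_0$, every $\la\in\Gamma_0$ satisfies
\[
  \dist\big(-\la,\sigma(A_\theta)\cup\sigma(A_\infty)\big)\ge \delta_0, \qquad |\la|\le |\la_0|+\delta_0,
\]
so the spectral theorem for the self-adjoint operators $A_\theta$ and $A_\infty$ yields
\[
  \big\|(A_\infty+\la)^{-1}\big\|\le \frac{1}{\delta_0}, \qquad \big\|A_\theta(A_\theta+\la)^{-1}\big\|=\big\|I-\la(A_\theta+\la)^{-1}\big\|\le \frac{|\la_0|+2\delta_0}{\delta_0},
\]
and, by interpolation, $\big\|A_\theta^{1/2}(A_\theta+\la)^{-1}\big\|\le \sqrt{\big\|(A_\theta+\la)^{-1}\big\|\cdot\big\|A_\theta(A_\theta+\la)^{-1}\big\|}\le \sqrt{|\la_0|+2\delta_0}/\delta_0$. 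Combined with $\|DA_\theta^{-1/2}\|\le 1$ from Lemma~\ref{march3}, the left-hand side of~\eqref{neumann-new} is bounded on $\Gamma_0$ by
\[
  \frac{|\la_0|+2\delta_0}{\delta_0^{\,2}}\left(\|\al\|^2+\frac{\|\al\|\,\|\al'\|}{\sqrt{|\la_0|+2\delta_0}}\right),
\]
which is strictly less than $1$ by hypothesis~\eqref{schp}. Proposition~\ref{thm1} therefore gives $\Gamma_0\subset\rho(\CA_\theta)$.

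Next I would introduce the homotopy $\CA_\theta^{(t)}$ obtained by replacing $\al$ with $t\al$ for $t\in[0,1]$. Under this substitution the upper bound above scales by $t^2\le 1$, so in fact $\Gamma_0\subset\rho(\CA_\theta^{(t)})$ for every $t\in[0,1]$. Writing $\CA_\theta^{(t)}-\la=\big(I+t\CR(\CQ_\theta^{(t)}-\la)^{-1}\big)(\CQ_\theta^{(t)}-\la)$ in analogy with~\eqref{db}, where $\CQ_\theta^{(t)}$ is obtained from~\eqref{decomp2a} by replacing $\al$ with $t\al$, both factors, and hence the resolvent, depend norm-continuously on $(t,\la)\in[0,1]\times\Gamma_0$. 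Consequently, the Riesz projections
\[
  P(t):=-\frac{1}{2\pi{\rm i}}\oint_{\Gamma_0}\big(\CA_\theta^{(t)}-\la\big)^{-1}\d\la
\]
are finite-dimensional by Theorem~\ref{harz} and depend norm-continuously on $t$, and therefore have constant rank on $[0,1]$. In particular, the total algebraic multiplicity of $\sigma(\CA_\theta)\cap\{\la\in\C:|\la-\la_0|<\delta_0\}$ equals $\dim\mathrm{Ran}\,P(0)$.

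Finally, $\CA_\theta^{(0)}$ is the diagonal operator $-A_\theta\oplus -A_\infty$. By Proposition~\ref{A_theta}\,(ii) all eigenvalues of $A_\theta$ are simple, and by the interlacing property~\eqref{interlace} the spectra $\sigma(A_\theta)$ and $\sigma(A_\infty)$ are disjoint whenever $\theta\in[0,\infty)$; hence $\la_0$ contributes algebraic multiplicity~$1$ to $\CA_\theta^{(0)}$ inside $\Gamma_0$ in that case, and algebraic multiplicity~$2$ when $\theta=\infty$, since then $\la_0$ belongs to both diagonal blocks. Because the entries of $\CA_\theta$ are differential operators with real coefficients, $\sigma(\CA_\theta)$ is symmetric about $\R$ by Theorem~\ref{harz}. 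Thus, for $\theta\in[0,\infty)$ the unique eigenvalue inside $\Gamma_0$ coincides with its complex conjugate and is therefore real and simple, while for $\theta=\infty$ the two eigenvalues inside $\Gamma_0$ (counted with algebraic multiplicity) either both lie on $\R$ or form a complex conjugate pair. The main technical obstacle is the resolvent estimate in the first step; in particular, the interpolation bound for $A_\theta^{1/2}(A_\theta+\la)^{-1}$, together with the uniform bound $|\la|\le |\la_0|+\delta_0$ on $\Gamma_0$, is what produces hypothesis~\eqref{schp} in the clean form given.
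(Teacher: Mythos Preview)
Your proof is correct and follows essentially the same approach as the paper: you verify condition~\eqref{neumann-new} on $\Gamma_0$ via the same three resolvent bounds (the paper derives the $A_\theta^{1/2}$-bound from $\|A_\theta^{1/2}(A_\theta+\la)^{-1}\|=\|A_\theta(A_\theta+\la)^{-2}\|^{1/2}$ rather than by interpolation, but the outcome is identical), then use a homotopy and continuity of the Riesz projection to identify multiplicities, and finally invoke the real symmetry of the spectrum. The only substantive variation is your choice of homotopy: you replace $\al$ by $t\al$ throughout, landing at $t=0$ on the diagonal operator $-A_\theta\oplus -A_\infty$, whereas the paper scales only the upper-right corner, $\CA_\theta^{(\varepsilon)}=\CQ_\theta+\varepsilon\CR$, landing on the lower-triangular $\CQ_\theta$; your endpoint makes the multiplicity count slightly more transparent, but both work equally well.
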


\begin{proof}
For $\la\in\Gamma_0$, we have (see \eqref{kato})
\begin{align*}
  \max_{\lambda\in\Gamma_0}\,\|(A_\theta + \lambda)^{-1}\| &= \dfrac 1{\delta_0}, \\
  \max_{\lambda\in\Gamma_0}\,\|A_\theta (A_\theta + \lambda)^{-1}\|
  &=\max_{\lambda\in\Gamma_0}\,\|I-\lambda(A_\theta + \lambda)^{-1}\|\le 1+\dfrac{|\lambda_0|+\delta_0}{\delta_0},\\
  \max_{\lambda\in\Gamma_0}\,\|A_\theta^{1/2} (A_\theta + \lambda)^{-1}\|
  &=\max_{\lambda\in\Gamma_0}\,\|A_\theta (A_\theta + \lambda)^{-2}\|^{1/2}\\
  &\le\max_{\lambda\in\Gamma_0}\,\big(\|(A_\theta + \lambda)^{-1}\|+|\lambda|\,\|(A_\theta + \lambda)^{-2}\|\big)^{1/2} \\
  &\le\dfrac{\sqrt{|\lambda_0|+2\delta_0}}{\delta_0}.
\end{align*}
Using these estimates, it is not difficult to see that condition \eqref{schp} guarantees that condition \eqref{important}, and hence condition \eqref{neumann-new} in Proposition~\ref{thm1}, holds for all $\la\in\Gamma_0$. Now Proposition~\ref{thm1} yields that $\Gamma_0 \subset \rho(\CA_\theta)$.

If we introduce the operators $\CA_\theta^{(\varepsilon)}:=\CQ_\theta +\varepsilon \CR$ for $0\le \varepsilon\le 1$, then the same arguments as above yield the inclusion $\Gamma_0\subset\rho(\CA_\theta^{(\varepsilon)})$ for all $0\le \varepsilon\le 1$. Since the eigenprojection
\[
  P(\varepsilon) = - \frac 1{2\pi\i} \int_{\Gamma_0} \big( \CA_\theta^{(\varepsilon)} - z \big)^{-1} \d z
\]
onto the part of the spectrum of $\CA_\theta^{(\varepsilon)}$ inside $\Gamma_0$ depends continuously on $\varepsilon$, the dimension of its range is constant for $0\le \varepsilon\le 1$ (see \cite[Lemma~I.4.10, Theorem~IV.3.16 and its proof]{MR0407617}). For $\theta\in [0,\infty)$, the operator $\CA_\theta^{(0)}$, and hence every operator $\CA_\theta^{(\varepsilon)}$ for $0\le \varepsilon\le 1$, has exactly one eigenvalue of multiplicity $1$ inside~$\Gamma_0$. The operator $\CA_\infty^{(0)}$,  has exactly one eigenvalue of multiplicity $2$ inside~$\Gamma_0$ and hence every operator $\CA_\infty^{(\varepsilon)}$ for $0\le \varepsilon\le 1$ has exactly two eigenvalues  in $\Gamma_0$ counted with multiplicities.

Since all entries of the operator matrices $\CA_\theta^{(\varepsilon)}$ are differential operators with real coefficients, the spectra $\sigma(\CA_\theta^\varepsilon)$ are symmetric to $\R$ for all $0\le \varepsilon\le 1$ (compare Theorem \ref{harz}). Thus, for $\theta\in [0,\infty)$, the single eigenvalue of 
$\CA_\theta^{(1)}=\CA_\theta$ inside~$\Gamma_0$ must be real, while $\CA_\infty^{(1)}=\CA_\infty$ has two eigenvalues inside~$\Gamma_0$ counted with multiplicities which are either real or form a complex conjugate pair.
\end{proof}

\begin{corollary}
\label{meet}
Let $\theta \in [0,\infty)$ and let $\Gamma_0$ be the circle around the largest eigenvalue $-\la_1(\theta)$ of the diagonal elements of $\CA_\theta$ 
with radius $\delta_0=\big( \la_1(\infty)-\la_1(\theta) \big) /2$. If
\begin{equation}
\label{meet-cond}
 \|\al\|^2 + \frac{\|\al\|\,\|\al'\|}{\sqrt{\la_1(\infty)}} < \frac {\big(\la_1(\infty)-\la_1(\theta) \big)^2}{4\la_1(\infty)}, 
\end{equation}
then the operator $\CA_\theta$  has exactly one real eigenvalue in $\Gamma_0$.
\end{corollary}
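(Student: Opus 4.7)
The plan is to view this corollary as a direct specialization of Proposition~\ref{local} to the eigenvalue $\lambda_0 = -\lambda_1(\theta)$ of the diagonal part of $\CA_\theta$, and to check that the abstract hypothesis \eqref{schp} reduces to the concrete hypothesis \eqref{meet-cond} in this case.

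First, I would identify $\lambda_0$. By Proposition \ref{A_theta} ii) and the interlacing \eqref{interlace}, the spectrum $\sigma(-A_\theta)\cup\sigma(-A_\infty)$ is the sequence
$$-\lambda_1(\theta) > -\lambda_1(\infty) > -\lambda_2(\theta) > -\lambda_2(\infty) > \cdots$$
so $-\lambda_1(\theta)$ is the largest element, and its nearest neighbour in this set is $-\lambda_1(\infty)$. Taking $\lambda_0:=-\lambda_1(\theta)$ yields
$$\delta_0 = \tfrac12\,\dist\bigl(\lambda_0,(\sigma(-A_\theta)\cup\sigma(-A_\infty))\setminus\{\lambda_0\}\bigr) = \tfrac12\bigl(\lambda_1(\infty)-\lambda_1(\theta)\bigr),$$
which matches the radius $\delta_0$ in the statement.

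Next, a one-line computation gives
$$|\lambda_0|+2\delta_0 = \lambda_1(\theta) + \bigl(\lambda_1(\infty)-\lambda_1(\theta)\bigr) = \lambda_1(\infty),$$
so that the right-hand side of \eqref{schp} becomes
$$\frac{\delta_0^2}{|\lambda_0|+2\delta_0} = \frac{\bigl(\lambda_1(\infty)-\lambda_1(\theta)\bigr)^2}{4\lambda_1(\infty)}.$$
Substituting both identities into \eqref{schp} shows that \eqref{schp} is equivalent to \eqref{meet-cond}.

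Finally, since $\theta\in[0,\infty)$, Proposition~\ref{local} applied at $\lambda_0=-\lambda_1(\theta)$ yields that under \eqref{meet-cond} the operator $\CA_\theta$ has exactly one eigenvalue inside $\Gamma_0$, and that this eigenvalue is simple and real. No step is genuinely difficult; the only thing to be careful about is the identification of the correct neighbour (which is $-\lambda_1(\infty)$, not $-\lambda_2(\theta)$), which is precisely what the interlacing property \eqref{interlace} guarantees.
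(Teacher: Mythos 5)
Your proof is correct and is exactly the intended argument: the corollary is the specialization of Proposition~\ref{local} to $\lambda_0=-\lambda_1(\theta)$, with the interlacing property \eqref{interlace} identifying $-\lambda_1(\infty)$ as the nearest neighbour, so that $\delta_0=(\lambda_1(\infty)-\lambda_1(\theta))/2$ and $|\lambda_0|+2\delta_0=\lambda_1(\infty)$, which turns \eqref{schp} into \eqref{meet-cond}. The paper states the corollary without proof precisely because this substitution is immediate, so there is nothing to add.
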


\section{A similarity transformation and a second eigenvalue estimate} 
\label{section4}

Our second approach to study the spectral properties of the operator $\CA_\theta$ is based on a quasi-similarity transformation of $\CA_\theta$.
The transformed operator $\CB_\theta$ turns out to be a bounded perturbation of a self-adjoint operator $\CS_\theta$.
This allows us to prove another estimate for the eigenvalues of $\CA_\theta$ which shows, in particular, that the 
imaginary parts of all eigenvalues have modulus at most $\|\al'\|$. In addition, we investigate the number of positive eigenvalues of the unperturbed operator $\CS_\theta$.

\begin{proposition}
\label{B_theta-sim}
Let $\theta\in [0,\infty]$ and let the linear operator $\CW_\theta$ in $L_2(0,1)\oplus L_2(0,1)$ \vspace{-2mm} be given by
\[
  \CW_\theta := \matrix{cc}{A_\theta^{1/2}& 0 \\ 0 & I}, \quad \CD(\CW_\theta) := \CD\big(A_\theta^{1/2}\big) \oplus L_2(0,1).
\]
Then the linear operator $\CB_\theta$ in $L_2(0,1) \oplus L_2(0,1)$ defined as
\begin{equation}
\label{transform}
  \CB_\theta := \CW_\theta \CA_\theta \CW_\theta^{-1}  
\end{equation}
is closed and the eigenvalues of $\,\CB_\theta$ coincide with those of $\,\CA_\theta$,
\[ 
  \sigma_{\rm p}(\CB_\theta) = \sigma_{\rm p}\big(\CA_\theta\big);
\]
moreover, $(x_j)_{j=0}^k$ is a Jordan chain of $\,\CA_\theta$ at $\la \in \sigma_{\rm p}\big(\CA_\theta\big)$ if and only if
$\,(\CW_\theta x_j)_{j=0}^k$ is  a Jordan chain of $\,\CB_\theta$ at $\la$.
\end{proposition}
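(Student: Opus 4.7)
The plan is to exploit that $\CW_\theta$ has a bounded inverse on all of $L_2(0,1)\oplus L_2(0,1)$, namely $\CW_\theta^{-1}=\mathrm{diag}\,\bigl(A_\theta^{-1/2},I\bigr)$, which is well-defined and satisfies $\|A_\theta^{-1/2}\|\le 1/\sqrt{\la_1(\theta)}$ by Proposition~\ref{A_theta}\,ii). The operator $\CW_\theta$ is itself closed (as a diagonal block of closed operators) and injective (since $A_\theta^{1/2}$ has trivial kernel), so it is a closed bijection from $\CD(\CW_\theta)=\CD(A_\theta^{1/2})\oplus L_2(0,1)$ onto $L_2(0,1)\oplus L_2(0,1)$. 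The crucial domain observation is
\[
 \CD(\CA_\theta)=\CD(A_\theta)\oplus\CD(A_\infty)\ \subset\ \CD(A_\theta^{1/2})\oplus L_2(0,1)=\CD(\CW_\theta),
\]
so that $\CW_\theta$ acts on every vector of $\CD(\CA_\theta)$.

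First I would verify closedness of $\CB_\theta$. Take $(y_n)\subset\CD(\CB_\theta)$ with $y_n\to y$ and $\CB_\theta y_n\to z$, and set $x_n:=\CW_\theta^{-1}y_n\in\CD(\CA_\theta)$. Boundedness of $\CW_\theta^{-1}$ gives $x_n\to\CW_\theta^{-1}y$ and $\CA_\theta x_n=\CW_\theta^{-1}(\CB_\theta y_n)\to\CW_\theta^{-1}z$. By closedness of $\CA_\theta$ (Theorem~\ref{harz}), $\CW_\theta^{-1}y\in\CD(\CA_\theta)$ and $\CA_\theta\CW_\theta^{-1}y=\CW_\theta^{-1}z$. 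Since $\CW_\theta^{-1}z\in\mathrm{ran}\,\CW_\theta^{-1}=\CD(\CW_\theta)$, the condition $\CA_\theta\CW_\theta^{-1}y\in\CD(\CW_\theta)$ is satisfied, so $y\in\CD(\CB_\theta)$ and $\CB_\theta y=\CW_\theta(\CW_\theta^{-1}z)=z$.

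Second I would establish the correspondence of Jordan chains (which automatically yields $\sigma_{\rm p}(\CB_\theta)=\sigma_{\rm p}(\CA_\theta)$ as the special case of length-one chains). Let $(x_j)_{j=0}^k\subset\CD(\CA_\theta)$ be a Jordan chain of $\CA_\theta$ at $\la$, so $x_0\ne 0$ and $(\CA_\theta-\la)x_j=x_{j-1}$ with $x_{-1}:=0$. Set $y_j:=\CW_\theta x_j$; these vectors are well-defined by the inclusion $\CD(\CA_\theta)\subset\CD(\CW_\theta)$, and $y_0\ne 0$ by injectivity of $\CW_\theta$. To check $y_j\in\CD(\CB_\theta)$, note that $\CW_\theta^{-1}y_j=x_j\in\CD(\CA_\theta)$ and $\CA_\theta\CW_\theta^{-1}y_j=\la x_j+x_{j-1}\in\CD(\CA_\theta)\subset\CD(\CW_\theta)$. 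Applying $\CW_\theta$ gives
\[
 (\CB_\theta-\la)y_j=\CW_\theta(\CA_\theta-\la)x_j=\CW_\theta x_{j-1}=y_{j-1},
\]
so $(y_j)_{j=0}^k$ is a Jordan chain for $\CB_\theta$ at $\la$. The converse is symmetric: given a Jordan chain $(y_j)_{j=0}^k$ for $\CB_\theta$ at $\la$, the definition of $\CD(\CB_\theta)$ forces $x_j:=\CW_\theta^{-1}y_j\in\CD(\CA_\theta)$, and applying $\CW_\theta^{-1}$ to the chain identity reverses the above computation to yield $(\CA_\theta-\la)x_j=x_{j-1}$, with $x_0\ne 0$ again by injectivity of $\CW_\theta^{-1}$.

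The argument contains no substantial analytic obstacle; the main work is careful domain bookkeeping so that every vector appearing in the computation lies in the correct domain. The single observation $\CD(\CA_\theta)\subset\CD(\CW_\theta)$ combined with $\CW_\theta^{-1}$ being everywhere defined and bounded makes all such verifications routine.
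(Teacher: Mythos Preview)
Your proof is correct and follows essentially the same approach as the paper. The only difference is cosmetic: for closedness the paper simply invokes \cite[Section~III.5.2]{MR0407617} (a product $\CW_\theta\CA_\theta\CW_\theta^{-1}$ with $\CW_\theta^{-1}$ bounded and $\CA_\theta$ closed is closed), whereas you spell out the underlying sequential argument directly; the Jordan chain correspondence and the key domain inclusion $\CD(\CA_\theta)\subset\CD(\CW_\theta)$ are handled identically in both.
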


\begin{proof}
Since $\CW_\theta$ is boundedly invertible and $\CA_\theta$ is closed, the product $\CB_\theta$ is a closed operator (see \cite[Section~III.5.2]{MR0407617}). 

In order to prove the equality of the point spectra and the claim about the Jordan chains, we observe that, by definition, the domain of $\CB_\theta$ is given by
\begin{equation}
\label{Bdom}
 \CD(\CB_\theta) = \big\{ y \in L_2(0,1)\oplus L_2(0,1) : \, \CW_\theta^{-1} y \in \CD\big(\CA_\theta\big), \ \CA_\theta \CW_\theta^{-1} y \in \CD(\CW_\theta) \big\}.
\end{equation}
Let $\la \in \sigma_{\rm p}(\CA_\theta)$ and let $(x_j)_{j=0}^k \subset \CD(\CA_\theta)$ be a corresponding Jordan chain, that is, 
$(\CA_\theta - \la) x_j = x_{j-1}$ for $j=0,1,\dots,k$, where $x_{-1}:=0$. Since $\CD(\CA_\theta) \subset \CD(\CW_\theta)$, it follows that $\CA_\theta  x_j = \la_0 x_j + x_{j-1} \in \CD(\CW_\theta)$. Hence $\CW_\theta x_j \in \CD(\CB_\theta)$ by \eqref{Bdom} and 
\[ 
  \big(\CB_\theta - \la \big) \CW_\theta x_j = \big(\CW_\theta \CA_\theta \CW_\theta^{-1} - \la \big) \CW_\theta x_j 
                                      = \big(\CW_\theta \CA_\theta - \la \CW_\theta \big) x_j = \CW_\theta x_{j-1}  
\]
for $j=0,1,\dots, k$. Vice versa, let $\la \in \sigma_{\rm p}(\CB_\theta)$ and let $(y_j)_{j=0}^k \subset \CD(\CB_\theta)$ be a corresponding Jordan chain, that is, 
$(\CB_\theta - \la) y_j = y_{j-1}$ for $j=0,1,\dots,k$, where $y_{-1}:=0$.
Since $\CW_\theta^{-1} y_j  \in \CD(\CA_\theta)$ by \eqref{Bdom}, we obtain
\[
  \big(\CA_\theta - \la \big) \CW_\theta^{-1} y_j = \big(\CW_\theta^{-1} \CB_\theta \CW_\theta - \la \big) \CW_\theta^{-1} y_j =
  \big(\CW_\theta^{-1} \CB_\theta - \la \CW_\theta^{-1} \big) y_j = \CW_\theta^{-1} y_{j-1}
\]
for $j=0,1,\dots, k$.
\end{proof}

In the next theorem we show that the operator $\CB_\theta$ is a bounded perturbation~of a self-adjoint semi-bounded block operator matrix. 
First we need an auxiliary lemma.

\begin{lemma}
\label{diagdom}
Let $\theta \in [0,\infty]$. Then
\begin{enumerate}
 \item[{\rm i)}]  $\alpha \,A_\theta^{1/2}$ is $A_\theta$-compact, 
 \item[{\rm ii)}] $A_\theta^{1/2} \alpha$ is $A_\infty$-compact.
\end{enumerate}
\end{lemma}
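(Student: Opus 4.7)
The plan is to exploit two basic facts: first, since $A_\theta$ has compact resolvent (Proposition \ref{A_theta}\,ii)), the functional calculus gives that $A_\theta^{s}(A_\theta+1)^{-1} = g_s(A_\theta)$ with $g_s(t)=t^s/(t+1)$ is compact for every $s\in[0,1]$, because $g_s$ vanishes at infinity while $A_\theta$ has purely discrete spectrum tending to~$\infty$; second, the form characterization \eqref{ob1}, \eqref{ob2} of $\CD(A_\theta^{1/2})$ allows us to measure $\|A_\theta^{1/2}g\|$ via the quantities $\|g'\|$, $\|g/r\|$ and boundary values.

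For i), note that $\CD(\alpha A_\theta^{1/2}) = \CD(A_\theta^{1/2}) \supset \CD(A_\theta)$, so we only have to show that $\alpha A_\theta^{1/2}(A_\theta+1)^{-1}$ is compact on $L_2(0,1)$. But this is immediate from the first observation above (with $s=1/2$), since $\alpha$ acts as a bounded multiplication operator and the product of a bounded and a compact operator is compact.

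For ii), the first step is a domain check: for every $x\in\CD(A_\infty)\subset\CD(A_\infty^{1/2})$, the function $\alpha x$ lies in $\CD(A_\theta^{1/2})$ for any $\theta\in[0,\infty]$. This is immediate from \eqref{ob1}, \eqref{ob2} because $\alpha,\alpha'\in C([0,1])$ give $(\alpha x)'=\alpha' x+\alpha x'\in L_2(0,1)$, $\alpha x/r\in L_2(0,1)$, and $(\alpha x)(1)=\alpha(1)x(1)=0$. Hence $A_\theta^{1/2}\alpha(A_\infty+1)^{-1}$ is well defined on all of $L_2(0,1)$, and we must show it is compact. Take a bounded sequence $(f_n)\subset L_2(0,1)$ and put $h_n:=(A_\infty+1)^{-1}f_n$. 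Compactness of $(A_\infty+1)^{-1}$ and of $A_\infty^{1/2}(A_\infty+1)^{-1}$ (first observation again) provides, after passage to a subsequence, strong $L_2$-convergence of both $(h_n)$ and $(A_\infty^{1/2}h_n)$. Applied to differences $h_n-h_m$, the identity $\|A_\infty^{1/2}g\|^2=\mathfrak a_\infty[g]$ together with \eqref{inner} then yields Cauchy convergence of $(Dh_n)$ and of $(h_n/r)$ in $L_2(0,1)$, while $h_n(1)=0$ for all $n$.

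It remains to transfer this form-norm convergence through multiplication by $\alpha$ into convergence of $A_\theta^{1/2}(\alpha h_n)$. Expanding $D(\alpha h_n)=\alpha' h_n+\alpha Dh_n$ and using the maximum-norm bounds $\|\alpha\|,\|\alpha'\|<\infty$ shows that each of $(\alpha h_n)$, $(D(\alpha h_n))$, $(\alpha h_n/r)$ is Cauchy in $L_2(0,1)$; combined with $(\alpha h_n)(1)=0$, the identity $\|A_\theta^{1/2}g\|^2=\mathfrak a_\theta[g]=\langle g,g\rangle_\theta$ from \eqref{ee}, \eqref{inner} gives that $(A_\theta^{1/2}\alpha h_n)$ is Cauchy in $L_2(0,1)$, proving compactness. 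The main point to watch is precisely this transfer step, where one has to keep track of both the bulk terms and the boundary value simultaneously, but the product rule and the common vanishing of $h_n(1)$ make it go through cleanly.
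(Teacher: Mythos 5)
Your proof is correct. Part i) coincides with the paper's argument: compactness of $A_\theta^{1/2}(A_\theta+\lambda_0)^{-1}$ for some $\lambda_0\in\rho(-A_\theta)$, composed with the bounded multiplication by $\alpha$. One small overstatement: the parenthetical claim that $g_s(t)=t^s/(t+1)$ vanishes at infinity for \emph{every} $s\in[0,1]$ fails at $s=1$, where $g_1\to1$ and $A_\theta(A_\theta+1)^{-1}=I-(A_\theta+1)^{-1}$ is not compact; but you only invoke $s=0$ and $s=1/2$, so nothing is lost. Part ii) genuinely diverges from the paper. Both proofs establish the same domain inclusion $\CD(A_\infty^{1/2})\subset\CD(A_\theta^{1/2}\alpha)$ by reading off \eqref{ob1}, \eqref{ob2} and the product rule. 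The paper then applies the closed graph theorem to obtain boundedness of $A_\theta^{1/2}\alpha A_\infty^{-1/2}$ and factors $A_\theta^{1/2}\alpha A_\infty^{-1}=\bigl(A_\theta^{1/2}\alpha A_\infty^{-1/2}\bigr)A_\infty^{-1/2}$ as a bounded operator times a compact one. You instead prove sequential compactness directly: compactness of $(A_\infty+1)^{-1}$ and $A_\infty^{1/2}(A_\infty+1)^{-1}$ gives a subsequence along which $h_n$ and $A_\infty^{1/2}h_n$ are Cauchy in $L_2$; via the form identity from \eqref{ee}, \eqref{inner} (with $c_\infty=0$, extended from $\CD(A_\infty)$ to $\CD(A_\infty^{1/2})$ by the second representation theorem and the identification $\CD(\overline{{\mathfrak a}_\infty})=\CH_\infty$ in Proposition~\ref{A_theta}\,i)) this forces $Dh_n$ and $h_n/r$ to be Cauchy; the product rule and the vanishing of the boundary term $c_\theta|\alpha(1)h_n(1)|^2$ (since $h_n(1)=0$) then hand back Cauchyness of $A_\theta^{1/2}(\alpha h_n)$ via the form identity at level $\theta$. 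Both routes are valid. The paper's is shorter and leans on the closed graph theorem as a black box; yours is longer but entirely hands-on and makes visible exactly where $\alpha\in C^1$ and the Dirichlet condition $h_n(1)=0$ enter. If you were to write it up, you should spell out the extension of the identity $\|A_\theta^{1/2}g\|^2=\langle g,g\rangle_\theta$ from $\CD(A_\theta)$ to $\CD(A_\theta^{1/2})$, which you currently take for granted.
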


\begin{proof}
i) Since $A_\theta$ has compact resolvent by Proposition~\ref{A_theta}~ii), we conclude that $\al A_\theta^{1/2} A_\theta^{-1} = \al A_\theta^{-1/2}$ is compact.

ii) First we show that 
\begin{equation}
\label{half}
  \CD(A_\infty^{1/2}) \subset \CD\big(A_\theta^{1/2} \alpha\big).
\end{equation}
The description of $\CD\big(A_\theta^{1/2}\big)$ in Proposition~\ref{A_theta}~i) and the assumption $\al \!\in\! C^1([0,1])$ 
show that $x\!\in\! \CD\big(A_\theta^{1/2}\big)$ implies $\al x \!\in\! \CD\big(A_\theta^{1/2}\big)$, that~is, $\CD\big(A_\theta^{1/2}\big) \subset \CD\big(A_\theta^{1/2} \alpha\big)$.
Since $\CD(A_\infty^{1/2}) = \big\{ x \in \CD(A_\theta^{1/2}) : x(1) = 0\big\} \subset \CD(A_\theta^{1/2})$ by Proposition \ref{A_theta} i), the inclusion \eqref{half} follows. 

Because $A_\infty^{1/2}$ and $A_\theta^{1/2} \alpha$ are closed, \eqref{half} implies that $A_\theta^{1/2} \al A_\infty^{-1/2}$ is a bounded operator (see \cite[Remark IV.1.5]{MR0407617}).
Since $A_\infty$ has compact resolvent by Proposition~\ref{A_theta}~ii), it follows that $A_\theta^{1/2} \al  A_\infty^{-1} = A_\theta^{1/2} \al  A_\infty^{-1/2} A_\infty^{-1/2}$ is compact.
\end{proof}

\begin{theorem}
\label{B_theta}
Let $\theta\in [0,\infty]$. Then the operator $\CB_\theta$ defined in \eqref{transform} has the form
\begin{equation}
\label{B0}
  \CB_\theta = \matrix{cc}{ - A_\theta & A_\theta^{1/2} \al \\ A_{\theta,\al} A_\theta^{-1/2} & - A_\infty}, \quad 
  \CD(\CB_\theta) = \CD\big(A_\theta\big) \oplus \CD\big(A_\infty\big). 
\end{equation}
If we define
\begin{alignat}{2}
\label{B3}   
  \mathcal \CS_\theta & := \matrix{cc}{-A_\theta & A_\theta^{1/2} \alpha  \\  \alpha A_\theta^{1/2} & - A_\infty},
  & \quad \CD(\CS_\theta) & := \CD\big(A_\theta\big) \oplus \CD\big(A_\infty\big), \\
\label{B2} 
  \mathcal \CT_\theta &:= \matrix{cc}{0 & 0 \\ -\alpha' D A_\theta^{-1/2} & 0},
  & \quad \CD(\CT_\theta) & := L_2(0,1) \oplus L_2(0,1),
\end{alignat}
then $\CS_\theta$ is self-adjoint and bounded from above with compact resolvent, $\CT_\theta$ is bounded with $\|\CT_\theta\| \le \|\al'\|$, and 
\begin{equation}
\label{B1}
   \CB_\theta = \CS_\theta + \CT_\theta.
\end{equation}
\end{theorem}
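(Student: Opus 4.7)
The plan is to verify the four claims of Theorem \ref{B_theta} in the order stated, using the operator identity \eqref{opid} of Proposition \ref{A_theta-alpha} as the algebraic engine and Lemma \ref{diagdom} as the analytic input. For the explicit form and domain of $\CB_\theta$ in \eqref{B0}, formally conjugating $\CA_\theta$ by $\CW_\theta=\mathrm{diag}(A_\theta^{1/2},I)$ and using the identity $A_\theta^{1/2}A_\theta A_\theta^{-1/2}=A_\theta$ on the natural domain produces the four block entries claimed. To pin down $\CD(\CB_\theta)=\CD(A_\theta)\oplus\CD(A_\infty)$ via the definition \eqref{Bdom}, I would invoke the inclusion $\CD(A_\infty^{1/2})\subset\CD(A_\theta^{1/2}\alpha)$ established in the proof of Lemma \ref{diagdom}(ii): for $(y_1,y_2)\in\CD(A_\theta)\oplus\CD(A_\infty)$ both conditions in \eqref{Bdom} reduce to checking $\alpha y_2\in\CD(A_\theta^{1/2})$, which follows at once; the reverse inclusion is extracted from the same two conditions.

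For the decomposition $\CB_\theta=\CS_\theta+\CT_\theta$ the key is the identity
\[
  A_{\theta,\alpha}A_\theta^{-1/2}=\alpha A_\theta^{1/2}-\alpha' D A_\theta^{-1/2}\quad\text{on }\CD(A_\theta^{1/2}),
\]
obtained by multiplying \eqref{opid} on the right by $A_\theta^{-1/2}$ (and noting that $A_\theta^{-1/2}$ maps $\CD(A_\theta^{1/2})$ onto $\CD(A_\theta)=\CD(A_{\theta,\alpha})$). Restricted to $\CD(\CB_\theta)$ this delivers \eqref{B1}. The bound $\|\CT_\theta\|\le\|\alpha'\|$ is then immediate: the multiplication operator by $\alpha'$ has operator norm $\|\alpha'\|=\max_{[0,1]}|\alpha'|$, Lemma \ref{march3} gives $\|DA_\theta^{-1/2}\|\le 1$, and a block operator matrix with only one nonzero entry has norm equal to the norm of that entry.

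The substantive part is the self-adjointness, upper semi-boundedness and compactness of the resolvent of $\CS_\theta$. I would write $\CS_\theta=\CS_\theta^{(0)}+\mathcal{V}_\theta$ where $\CS_\theta^{(0)}:=\mathrm{diag}(-A_\theta,-A_\infty)$ on $\CD(A_\theta)\oplus\CD(A_\infty)$ and $\mathcal{V}_\theta$ is the off-diagonal part. The diagonal $\CS_\theta^{(0)}$ is self-adjoint, bounded from above by $-\lambda_1(\theta)<0$, and has compact resolvent by Proposition \ref{A_theta}(ii). I would then verify that $\mathcal{V}_\theta$ is symmetric on $\CD(\CS_\theta^{(0)})$ by moving $A_\theta^{1/2}$ across the inner product (legitimate since $A_\theta^{1/2}$ is self-adjoint, $\alpha$ is real-valued, and the relevant vectors lie in $\CD(A_\theta^{1/2})$). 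By Lemma \ref{diagdom}, the two off-diagonal blocks $\alpha A_\theta^{1/2}$ and $A_\theta^{1/2}\alpha$ are $A_\theta$- respectively $A_\infty$-compact, so $\mathcal{V}_\theta(\CS_\theta^{(0)}-\mu)^{-1}$ is compact for $\mu\in\rho(\CS_\theta^{(0)})$; in particular $\mathcal{V}_\theta$ has $\CS_\theta^{(0)}$-bound zero. The Kato--Rellich theorem then gives self-adjointness of $\CS_\theta$ on the unchanged domain, preservation of the upper semi-bound, and compactness of its resolvent.

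The main obstacle is not any single step but careful domain bookkeeping --- ensuring $\mathcal{V}_\theta$ is well-defined, symmetric, and $\CS_\theta^{(0)}$-compact on precisely $\CD(A_\theta)\oplus\CD(A_\infty)$, and that the inclusion $\CD(A_\infty^{1/2})\subset\CD(A_\theta^{1/2}\alpha)$ from Lemma \ref{diagdom} is leveraged both for the domain equality in \eqref{B0} and for the well-definedness of the $(1,2)$-block of $\CS_\theta$. Once these compatibilities are in place, the rest is routine application of standard perturbation theory for self-adjoint operators.
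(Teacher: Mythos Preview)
Your proposal is correct and follows essentially the same route as the paper's proof: the domain equality in \eqref{B0} is obtained from the two conditions in \eqref{Bdom} together with the inclusion $\CD(A_\infty^{1/2})\subset\CD(A_\theta^{1/2}\alpha)$ from Lemma~\ref{diagdom}\,(ii); the decomposition \eqref{B1} comes from the identity $A_{\theta,\alpha}A_\theta^{-1/2}=\alpha A_\theta^{1/2}-\alpha' D A_\theta^{-1/2}$; and the self-adjointness, semi-boundedness, and compact resolvent of $\CS_\theta$ are obtained by writing $\CS_\theta$ as its self-adjoint diagonal part plus a symmetric, relatively compact off-diagonal perturbation and invoking the Rellich--Kato theorem together with the stability of semi-boundedness and Weyl's essential spectrum theorem. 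The only minor imprecision is that you attribute preservation of the upper bound and of compact resolvent to ``Kato--Rellich'' itself, whereas the paper separates these into the stability theorem for semi-boundedness \cite[Theorem~V.4.11]{MR0407617} and Weyl's theorem \cite[Corollary~XIII.4.2]{MR0493421}; the substance is identical.
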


\begin{proof}
First we prove the identity \eqref{B0}.
By the definition  of $A_{\theta,\al}$ in Proposition~\ref{A_theta-alpha}, we have $\CD(A_{\theta,\al}) = \CD(A_\theta)$ and
thus $\CD(A_\theta)  \subset \CD(A_\theta^{1/2}) = \CD(A_{\theta,\al} A_\theta^{-1/2})$; 
by Lemma~\ref{diagdom} ii) (see also \eqref{half}), we have $\CD(A_\infty) \subset \CD\big(A_\theta^{1/2} \alpha\big)$. This shows that
$\CD(A_\theta) \oplus \CD(A_\infty)$ is the domain of the block operator matrix in \eqref{B0}. Formally, the relation \eqref{B0} follows from the definition of $\CB_\theta$ in \eqref{transform}; it remains to be shown that $\CD(\CB_\theta) = \CD\big(A_\theta\big) \oplus \CD\big(A_\infty\big)$. 

According to \eqref{Bdom}, we have $y\in\CD(\CB_\theta)$ if and only if 
$\CW_\theta^{-1} y \in \CD(\CA_\theta)$ and $\CA_\theta \CW_\theta^{-1} y \in \CD(\CW_\theta)$.
Since $\CD(\CA_\theta)=\CD\big(A_\theta\big) \oplus \CD\big(A_\infty\big)$ by \eqref{bomA_l} and 
\begin{equation}
\label{sp}
    A_{\theta,\al} A_\theta^{-1/2} = \alpha A_\theta^{1/2} - \alpha' D A_\theta^{-1/2}
\end{equation}
by \eqref{opid}, it follows that $y=(y_1,y_2)^{\rm t} \in \CD(\CB_\theta)$ if and only~if
\[
 \left\{ 
  \begin{array}{ll} 
  {\rm (a)}  &  A_\theta^{-1/2} y_1 \in \CD(A_\theta), \quad  y_2 \in  \CD(A_\infty),  \\[1mm]
  {\rm (b)} &  - A_\theta^{1/2} y_1 + \alpha \, y_2 \in \CD\big(A_\theta^{1/2}\big), \  \ \big(\alpha A_\theta^{1/2} - \alpha' D A_\theta^{-1/2}\big) y_1 - A_\infty y_2 \in L_2(0,1).
 \end{array}
 \right. 
\]
The first condition in (a) is equivalent to $y_1 \in \CD(A_\theta^{1/2})$. Thus the second condition in (b) always holds 
as $D A_\theta^{-1/2}$ is bounded by Lemma \ref{march3}. 
By Lemma \ref{diagdom} ii) (see also \eqref{half}), we have $\CD(A_\infty) \subset \CD(A_\theta^{1/2} \alpha)$
and hence the first condition in (b) reduces to $A_\theta^{1/2} y_1 \in \CD(A_\theta^{1/2})$. Altogether,
$y=(y_1,y_2)^{\rm t}\in\CD(\CB_\theta)$ if and only if
$y_1\in \CD(A_\theta)$ and $y_2\in \CD(A_\infty)$. This completes the proof of \eqref{B0}.

To see that $\CS_\theta$ is symmetric, we observe that
$\alpha \in C([0,1])$ and  $\alpha$ is real-valued. This implies that the corresponding multiplication operator is bounded and self-adjoint and hence
\[
  \big(\alpha A_\theta^{1/2}\big)^\ast = \big( A_\theta^{1/2} \big)^\ast \alpha^\ast = A_\theta^{1/2} \alpha.  
\]
Furthermore, Lemma \ref{diagdom} implies that
\[
 \CS_\theta = \matrix{cc}{-A_\theta & 0 \\ 0 & -A_\infty} + \matrix{cc}{0 & A_\theta^{1/2} \alpha  \\  \alpha A_\theta^{1/2} & 0}
\]
is a relatively compact perturbation of its block diagonal part ${\rm diag} (-A_\theta,-A_\infty)$. The latter is self-adjoint and semi-bounded (in fact, negative) 
with compact resolvent by Proposition \ref{A_theta}. Hence $\CS_\theta$ has the same properties by the Rellich-Kato Theorem, the stability theorem for semi-boundedness, and 
Weyl's essential spectrum theorem (see \cite[Theorems~V.4.3 and V.4.11]{MR0407617}, \cite[Corollary XIII.4.2]{MR0493421}).

Finally, because $\alpha'\in C([0,1])$ and $D A_\theta^{-1/2}$ is bounded with $\|D A_\theta^{-1/2}\| \le 1$ (see Lemma \ref{march3}), the operator $\CT_\theta$ is bounded with $\|\CT_\theta\| \le \|\al'\|$. 

Since $\CD(\CS_\theta)= \CD(\CB_\theta)$, the identity \eqref{B1} is immediate from \eqref{B0} and~\eqref{sp}.
\end{proof}

\begin{remark}
The boundedness from above of $\CS_\theta$ is proved independently in Proposition \ref{stheta}, where also a concrete upper bound for $\CS_\theta$ is established.
\end{remark}

\begin{proposition}
\label{F_theta}
Let $\theta \in [0,\infty]$. The spectrum $\sigma(\CS_\theta)=\sigma_{\rm p}(\CS_\theta)$ of the self-adjoint operator $\CS_\theta$ 
satisfies
\begin{equation}
\label{schur-spec}
 \sigma_{\rm p}(\CS_\theta) \setminus \sigma_{\rm p}(-A_\theta) = \sigma_{\rm p}(S_{2,\theta})
\end{equation}
where the so-called Schur complement $S_{2,\theta}$ is the operator function given by
\[ 
 S_{2,\theta}(\la):= -A_\infty \!- \la + \al A_\theta^{1/2} (A_\theta+\!\la)^{-1} A_\theta^{1/2} \alpha, \quad \CD\big(S_{2,\theta}(\la)\big) := \CD(A_\infty),
\]
for all $\la \in \C\setminus \sigma(-A_\theta)$. Moreover, the multiplicities of the eigenvalues of $\,\CS_\theta$ and of $S_{2,\theta}$ coincide.
\end{proposition}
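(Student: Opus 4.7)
The plan is to exploit the Frobenius--Schur decomposition idea but to argue directly at the level of kernels, thereby avoiding the need to verify an upper-triangular factorisation on all of $L_2(0,1)\oplus L_2(0,1)$ (where the entry $(A_\theta+\la)^{-1}A_\theta^{1/2}\al$ is only densely defined). First, I would invoke Theorem~\ref{B_theta}: since $\CS_\theta$ is self-adjoint with compact resolvent, its spectrum is purely discrete, giving $\sigma(\CS_\theta)=\sigma_{\rm p}(\CS_\theta)$, and for every eigenvalue the geometric and algebraic multiplicities coincide.

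Next, fix $\la\in\C\setminus\sigma(-A_\theta)$, so that $A_\theta+\la$ has a bounded everywhere defined inverse. I would show that the map
\[
  \Phi_\la\colon \ker S_{2,\theta}(\la) \longrightarrow \ker(\CS_\theta-\la), \qquad
  y_2\longmapsto \begin{pmatrix} (A_\theta+\la)^{-1} A_\theta^{1/2}\al\, y_2 \\ y_2 \end{pmatrix},
\]
is a well-defined linear bijection. Well-definedness relies on $\CD(A_\infty)\subset\CD(A_\theta^{1/2}\al)$, which is exactly what Lemma~\ref{diagdom}~ii) (cf.\ \eqref{half}) provides; consequently $A_\theta^{1/2}\al y_2\in L_2(0,1)$ for $y_2\in\CD(A_\infty)$, and then $y_1:=(A_\theta+\la)^{-1}A_\theta^{1/2}\al y_2\in\CD(A_\theta)$. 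A direct computation verifies
\[
  (\CS_\theta-\la)\Phi_\la(y_2)
  =\begin{pmatrix} -(A_\theta+\la)y_1+A_\theta^{1/2}\al\, y_2 \\ \al A_\theta^{1/2} y_1-(A_\infty+\la)y_2 \end{pmatrix}
  =\begin{pmatrix} 0 \\ -S_{2,\theta}(\la) y_2 \end{pmatrix}=0,
\]
so $\Phi_\la$ indeed maps into $\ker(\CS_\theta-\la)$; injectivity is immediate from the second component.

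For surjectivity, take $(y_1,y_2)^{\rm t}\in\ker(\CS_\theta-\la)$. The first row of the eigen-equation reads $-(A_\theta+\la)y_1+A_\theta^{1/2}\al y_2=0$, and since $A_\theta+\la$ is invertible this forces $y_1=(A_\theta+\la)^{-1}A_\theta^{1/2}\al y_2$. Substitution into the second row yields $S_{2,\theta}(\la)y_2=0$, so $y_2\in\ker S_{2,\theta}(\la)$ and $(y_1,y_2)^{\rm t}=\Phi_\la(y_2)$. This proves the inclusion $\sigma_{\rm p}(\CS_\theta)\setminus\sigma_{\rm p}(-A_\theta)\subset\sigma_{\rm p}(S_{2,\theta})$ together with its converse, as well as the equality $\dim\ker(\CS_\theta-\la)=\dim\ker S_{2,\theta}(\la)$ for every such $\la$, which gives the multiplicity statement.

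The only delicate point — the main obstacle, such as it is — is the justification that each manipulation takes place in the correct domain: one must know that $\al$ maps $\CD(A_\infty)$ into $\CD(A_\theta^{1/2})$ in order to make sense of the off-diagonal entry acting on $y_2$, and that $(A_\theta+\la)^{-1}$ returns a vector in $\CD(A_\theta)$. Both facts are supplied, respectively, by Lemma~\ref{diagdom} ii) (together with the description of $\CD(A_\theta^{1/2})$ in Proposition~\ref{A_theta} i)) and by the self-adjointness of $A_\theta$ established in Proposition~\ref{A_theta}. Once these are in place, the computation above is straightforward.
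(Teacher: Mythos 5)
You take a genuinely different route from the paper for the point-spectrum equality \eqref{schur-spec}: instead of invoking the Frobenius--Schur factorisation \eqref{factorise} (and the corresponding results from the block-operator-matrix literature that the paper cites), you exhibit an explicit linear isomorphism $\Phi_\la$ between $\ker S_{2,\theta}(\la)$ and $\ker(\CS_\theta-\la)$. Since $\CS_\theta$ has discrete spectrum by Theorem~\ref{B_theta}, this kernel correspondence does establish \eqref{schur-spec}, and the domain bookkeeping is handled correctly via Lemma~\ref{diagdom}~ii). (There is a harmless sign slip: the second component of $(\CS_\theta-\la)\Phi_\la(y_2)$ equals $+S_{2,\theta}(\la)y_2$, not $-S_{2,\theta}(\la)y_2$, but both vanish so nothing changes. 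In fact the paper itself also records this same eigenvector correspondence in the second half of its proof, so the two arguments largely overlap; what you avoid is the factorisation step.)

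However, there is a genuine gap in the multiplicity claim. You conclude that the equality $\dim\ker(\CS_\theta-\la)=\dim\ker S_{2,\theta}(\la)$ ``gives the multiplicity statement,'' but for an operator \emph{function} such as $S_{2,\theta}$ the algebraic multiplicity of an eigenvalue $\la_0$ is defined via Jordan chains and can exceed $\dim\ker S_{2,\theta}(\la_0)$ whenever an associated vector exists, that is, whenever there is $y_1\in\CD(A_\infty)$ with $S_{2,\theta}(\la_0)y_1+S_{2,\theta}'(\la_0)y_0=0$ for some eigenvector $y_0$. Your geometric bijection says nothing about this. The paper closes the gap by computing $S_{2,\theta}'(\la)=-I-\al A_\theta^{1/2}(A_\theta+\la)^{-2}A_\theta^{1/2}\al\le -I$ and noting that a uniformly negative derivative forbids associated vectors (take the inner product of the associated-vector equation with $y_0$ and use self-adjointness of $S_{2,\theta}(\la_0)$); combined with the trivial absence of associated vectors for the self-adjoint $\CS_\theta$, this yields the algebraic-multiplicity equality. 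Without that observation your argument only establishes the geometric version.
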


\begin{proof}
Let $\la\notin \sigma(-A_\theta)=\sigma_{\rm p}(-A_\theta)$. It is not difficult to check that the block operator matrix $\CS_\theta$ can be factorised as (see, e.g., \cite[Theorem~2.2.14]{book})
\begin{equation}
\label{factorise}
  \CS_\theta \!-\! \la \!=\! \matrix{cc}{I \!& \!\! 0 \\ \!-\al A_\theta^{1/2}(A_\theta\!+\!\la)^{-1}\!&\! I} \!\!
                               \matrix{cc}{\!-A_\theta\!-\!\la \!\!&\!\! 0 \\ 0 \!\!\!&\!\!\! S_{2,\theta}(\la)\!} \!\!
                               \matrix{cc}{I \!&  \!-(A_\theta\!+\!\la)^{-1}\!A_\theta^{1/2}\!\alpha \!\\ 0 \!&\! I}\!. 
  \hspace*{-2mm}
\end{equation}
Since the outer two factors are bounded and boundedly invertible and so is $-A_\theta\!-\!\la$ by the assumption on $\la$, it follows that $\CS_\theta - \la$ 
is boundedly invertible or injective if and only if so is the operator $S_{2,\theta}(\la)$ (see, e.g., \cite[Theorem~2.3.3]{book}). This proves \eqref{schur-spec}. 


It is easy to see that $y_0 =(y_{0,1}, y_{0,2})^{\rm t} \in \CD(\CS_\theta)$ is an eigenvector of $\CS_\theta$ at 
$\la_0 \in \sigma_{\rm p}(\CS_\theta)$ if and only if $y_{0,2} \in \CD(A_\infty)$ is an eigenvector of $S_{2,\theta}$ at $\la_0$, that is, $S_{2,\theta}(\la_0)y_{0,2} = 0$, and $y_{0,1}=(A_\theta+\la)^{-1} A_\theta^{1/2}\alpha \,y_{0,2}$. 
Clearly, the self-adjoint operator $\CS_\theta$ has no associated vectors.
Since $S_{2,\theta}$ is a self-adjoint operator function, that is, $S_{2,\theta}(\la)=S_{2,\theta}(\ov\la)^\ast$ for 
$\la\in \C\setminus \sigma(-A_\theta)$, and its derivative $$S_{2,\theta}'(\la)= - I - \al A_\theta^{1/2} (A_\theta+\la)^{-2} A_\theta^{1/2} \alpha \le -I$$ is uniformly negative for all $\la\in \C\setminus \sigma(-A_\theta)$, $S_{2,\theta}$ has no associated vectors either (compare \cite[Lemma~30.13]{MR971506}).
\end{proof}




Next we establish an explicit upper bound for the self-adjoint operator $\CS_\theta$. 
To this end, we first show that $\CS_\theta$ satisfies the following two-sided estimate, which might be of independent interest.

\begin{lemma}
Let $\theta \in [0,\infty]$ and let $\gamma \in (0,1]$ be arbitrary. Then 
the operator $\CS_\theta$ admits the two-sided estimate
\begin{equation}
\label{ie}
 \begin{pmatrix}
  -\big(1+\gamma\big) A_\theta&0\\[2mm]
  0&-A_\infty\!-\!\dfrac 1\gamma \alpha^2
  \end{pmatrix}
  \le\CS_\theta\le\begin{pmatrix}
  -\big(1-\gamma\big) A_\theta&0\\[2mm]
  0&-A_\infty\!+\!\dfrac 1\gamma \alpha^2
  \end{pmatrix};
\end{equation}
in particular, for $\gamma =1$, 
\begin{equation}
\label{ZB}
  \begin{pmatrix}
  -2A_\theta&0\\[2mm]
  0&-A_\infty-\alpha^2
  \end{pmatrix}
  \le\CS_\theta\le\begin{pmatrix}
  0&0\\[2mm]
  0&-A_\infty+\alpha^2
\end{pmatrix}.
\end{equation}
\end{lemma}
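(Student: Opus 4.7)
The plan is to establish the two-sided estimate on the level of quadratic forms evaluated on $\CD(\CS_\theta)=\CD(A_\theta)\oplus\CD(A_\infty)$. Since $\CS_\theta$ is self-adjoint and bounded from above by Theorem~\ref{B_theta}, this suffices to conclude the operator inequality in \eqref{ie}.

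Fix $y=(x,z)^{\rm t}\in\CD(\CS_\theta)=\CD(A_\theta)\oplus\CD(A_\infty)$. The off-diagonal entries are mutually adjoint, so
\[
 (\CS_\theta y,y) = -(A_\theta x,x)+\big(A_\theta^{1/2}\alpha z,x\big)+\big(\alpha A_\theta^{1/2}x,z\big)-(A_\infty z,z).
\]
Using that $\CD(A_\theta)\subset\CD(A_\theta^{1/2})$ and that $\alpha$ is a bounded self-adjoint multiplication operator, the two cross terms combine to $2\,\Re\,\big(A_\theta^{1/2}x,\alpha z\big)$, and hence
\[
 (\CS_\theta y,y) = -\|A_\theta^{1/2}x\|^2+2\,\Re\,\big(A_\theta^{1/2}x,\alpha z\big)-(A_\infty z,z).
\]

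The main step is to control the cross term by the elementary inequality
\[
 \pm 2\,\Re\,\big(A_\theta^{1/2}x,\alpha z\big) \le \gamma\,\|A_\theta^{1/2}x\|^2+\frac{1}{\gamma}\,\|\alpha z\|^2,\qquad \gamma\in(0,1],
\]
which is just the Cauchy--Schwarz/AM--GM inequality applied to $A_\theta^{1/2}x$ and $\alpha z$ in $L_2(0,1)$. Substituting this bound in both directions and noting that $\|A_\theta^{1/2}x\|^2=(A_\theta x,x)$ and $\|\alpha z\|^2=(\alpha^2 z,z)$, we obtain
\[
 -(1+\gamma)(A_\theta x,x)-(A_\infty z,z)-\tfrac{1}{\gamma}(\alpha^2 z,z)\le (\CS_\theta y,y)
\]
and
\[
 (\CS_\theta y,y) \le -(1-\gamma)(A_\theta x,x)-(A_\infty z,z)+\tfrac{1}{\gamma}(\alpha^2 z,z).
\]
This is precisely the form version of \eqref{ie} on $\CD(\CS_\theta)$. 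Since $\gamma\in(0,1]$, the coefficient $1-\gamma\ge 0$ on the right-hand side ensures that both bounding operators are (at worst) bounded above, so their quadratic forms are closable on $\CD(A_\theta^{1/2})\oplus\CD(A_\infty^{1/2})\supset\CD(\CS_\theta)$; thus the form inequalities extend to operator inequalities in the sense of self-adjoint operators bounded from above.

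Finally, the inequality \eqref{ZB} follows from \eqref{ie} by setting $\gamma=1$. I do not expect a serious obstacle here: the only subtlety is the book-keeping about domains, which is handled uniformly by the fact that $\CD(A_\theta)\oplus\CD(A_\infty)$ is a core for the form of $\CS_\theta$ (by Theorem~\ref{B_theta}) and is contained in the form domains of all three operator matrices appearing in \eqref{ie}.
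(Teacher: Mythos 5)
Your proof is correct and follows essentially the same route as the paper: write out the quadratic form of $\CS_\theta$ on $\CD(A_\theta)\oplus\CD(A_\infty)$, combine the two mutually adjoint off-diagonal terms into $2\,\Re\big(A_\theta^{1/2}y_1,\alpha y_2\big)$, and bound that cross term by $\gamma\|A_\theta^{1/2}y_1\|^2+\tfrac1\gamma\|\alpha y_2\|^2$ via Cauchy--Schwarz and the weighted AM--GM inequality. The paper states the result directly as a form inequality on $\CD(\CS_\theta)$ and is terser, but the content (including the self-adjointness of $A_\theta^{1/2}$ and $\alpha$ used to pass the square root and the multiplication operator across the inner product) is the same; your extra remark about form domains and closability is a harmless, slightly more careful way of reading ``$\le$'' between semi-bounded self-adjoint operators.
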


\begin{proof}
Let $y =(y_1,y_2)^{\rm t} \in \CD(\CS_\theta) = \CD(A_\theta) \oplus \CD(A_\infty)$.
Then the right in\-equality in \eqref{ie} follows from the estimate
\begin{align*}
\left(\CS_\theta y,y
\right)&=-(A_\theta y_1,y_1) - (A_\infty y_2,y_2) + 2\,\Re\big(A_\theta^{1/2}\alpha \, y_2,y_1\big)\\
&\le -(A_\theta y_1,y_1) - (A_\infty y_2,y_2) + 2 \, \gamma^{1/2} \big\|A_\theta^{1/2}y_1\big\| \frac 1{\gamma^{1/2}} \|\alpha y_2\| \\
&\le -(A_\theta y_1,y_1) - (A_\infty y_2,y_2) + \gamma(A_\theta y_1,y_1)+  \frac 1\gamma \big(\alpha^2y_2,y_2\big);
\end{align*}
the left inequality in \eqref{ie} is obtained analogously.
\end{proof}

\begin{proposition}
\label{alpha-const}
If $\alpha$ is constant, $\alpha \equiv \alpha_0$, we also have the estimate
\begin{equation}
\label{const-al}
  \max\sigma(\CA_\theta) = \max\sigma(\CS_\theta) \le \frac{|\al_0|^2}4.
\end{equation} 
\end{proposition}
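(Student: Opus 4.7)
My plan proceeds in two stages. For the equality $\max\sigma(\CA_\theta) = \max\sigma(\CS_\theta)$ I would note that when $\alpha\equiv\al_0$ is constant, $\alpha' \equiv 0$, so $\CT_\theta = 0$ by \eqref{B2} and hence $\CB_\theta = \CS_\theta$ via \eqref{B1}. Proposition~\ref{B_theta-sim} then gives $\sigma_{\rm p}(\CA_\theta) = \sigma_{\rm p}(\CB_\theta) = \sigma_{\rm p}(\CS_\theta)$. Theorem~\ref{harz} makes the spectrum of $\CA_\theta$ purely discrete, and $\CS_\theta$ is self-adjoint with compact resolvent by Theorem~\ref{B_theta}, so both spectra consist of real eigenvalues bounded from above and the two maxima exist and coincide.

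For the estimate $\max\sigma(\CS_\theta) \le |\al_0|^2/4$ it suffices to treat the case $\la := \max\sigma(\CS_\theta) > 0$, since otherwise the bound is immediate. As $\sigma(-A_\theta) \subset (-\infty,-\la_1(\theta)]$, we have $\la \notin \sigma(-A_\theta)$, so Proposition~\ref{F_theta} provides an eigenvector $y_2 \in \CD(A_\infty)\setminus\{0\}$ of the Schur complement at $\la$. For constant $\al_0$ the Schur complement reduces to
\[
  S_{2,\theta}(\la) = -A_\infty - \la + \al_0^2 A_\theta(A_\theta+\la)^{-1},
\]
and pairing $S_{2,\theta}(\la)y_2 = 0$ with $y_2$ yields
\[
  (A_\infty y_2,y_2) + \la\|y_2\|^2 = \al_0^2 \bigl(A_\theta(A_\theta+\la)^{-1}y_2,y_2\bigr).
\]

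The decisive ingredient is the sharp pointwise inequality
$t/(t+\la) \le (t+\la)/(4\la)$ for all $t,\la>0$, equivalent to $(t-\la)^2 \ge 0$ with equality at $t=\la$. The functional calculus of the self-adjoint $A_\theta$ lifts this to the form inequality $A_\theta(A_\theta+\la)^{-1} \le (A_\theta+\la I)/(4\la)$ on $\CD(A_\theta^{1/2})$; since $y_2 \in \CD(A_\infty) \subset \CD(A_\theta^{1/2})$ this gives
\[
  \al_0^2 \bigl(A_\theta(A_\theta+\la)^{-1}y_2,y_2\bigr) \le \frac{\al_0^2}{4\la}\bigl((A_\theta y_2,y_2) + \la\|y_2\|^2\bigr).
\]
Proposition~\ref{A_theta}~iv) with $\la=0$ gives $A_\infty^{-1} \le A_\theta^{-1}$, equivalent to $A_\theta \le A_\infty$ as quadratic forms on $\CD(A_\infty^{1/2})$. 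Chaining these estimates produces
\[
  (A_\theta y_2,y_2) + \la\|y_2\|^2 \le (A_\infty y_2,y_2) + \la\|y_2\|^2 \le \frac{\al_0^2}{4\la}\bigl((A_\theta y_2,y_2)+\la\|y_2\|^2\bigr),
\]
and since $(A_\theta y_2,y_2) + \la\|y_2\|^2 \ge (\la_1(\theta)+\la)\|y_2\|^2 > 0$, dividing yields $4\la \le \al_0^2$. The main subtlety is choosing this AM--GM-type inequality in place of the naive bound $A_\theta(A_\theta+\la)^{-1} \le I$, which would only deliver the weaker estimate $\la \le \al_0^2 - \la_1(\infty)$; the factor $4\la$ in the denominator of the sharper bound is precisely what produces the $|\al_0|^2/4$ constant, independently of $\theta$.
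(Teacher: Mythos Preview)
Your argument is correct and takes a genuinely different route from the paper's.

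The paper proves the bound by a direct quadratic-form estimate: for $y=(y_1,y_2)^{\rm t}\in\CD(\CS_\theta)$ it expands $(\CS_\theta y,y)$, uses that the constant $\al_0$ commutes with $A_\theta^{1/2}$, applies Cauchy--Schwarz to the two cross terms separately, and completes the square twice to obtain $(\CS_\theta y,y)\le\frac{|\al_0|^2}4\|y\|^2$. The key auxiliary fact is $\|A_\theta^{1/2}y_2\|\le\|A_\infty^{1/2}y_2\|$, derived (as you do) from Proposition~\ref{A_theta}~iv). This bounds the entire numerical range of $\CS_\theta$.

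Your approach instead passes to the second Schur complement via Proposition~\ref{F_theta}, targets only the largest positive eigenvalue, and replaces the two square-completions by the single sharp inequality $t/(t+\la)\le(t+\la)/(4\la)$ lifted through the functional calculus. This is more economical and makes the origin of the constant $\tfrac14$ transparent (equality in AM--GM at $t=\la$). The paper's version has the mild advantage of bounding the whole form rather than just the top eigenvalue, but for the stated proposition both are equally adequate. One notational remark: since $y_2\in\CD(A_\infty)$ need not lie in $\CD(A_\theta)$ for $\theta<\infty$, your expression $(A_\theta y_2,y_2)$ should be read as the quadratic form $\|A_\theta^{1/2}y_2\|^2$; this is well-defined because $\CD(A_\infty)\subset\CD(A_\infty^{1/2})\subset\CD(A_\theta^{1/2})$, and your argument goes through unchanged with that reading.
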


\begin{proof}
Let $y=(y_1,y_2)^{\rm t}\!\in \CD(\CS_\theta) = \CD(A_\theta) \oplus \CD(A_\infty)$; then $y_2 \in \CD(A_\infty) \subset \CD(A_\infty^{1/2}) \subset  \CD(A_\theta^{1/2})$ by Proposition \ref{A_theta} i). 
Since $\alpha$ is constant, it commutes with~$A_\theta^{1/2}\!$.~Hence
\begin{align*}
   \left(\CS_\theta y,y \right)
                                      =&-(A_\theta y_1,y_1) - (A_\infty y_2,y_2) 
                                        + \big(\al_0 A_\theta^{1/2} y_2,y_1\big) + \big(\al_0 A_\theta^{1/2} y_1,y_2\big)\\[1mm]
                                      \le& - (A_\theta y_1,y_1) + |\al_0| \sqrt{(A_\theta y_1,y_1)} \|y_2\| 
                                           - (A_\infty y_2,y_2)  + |\al_0| \,\|A_\theta^{1/2} y_2\| \,\|y_1\| \\
                                      \le& - \left( \sqrt{(A_\theta y_1,y_1)} - \frac{|\al_0|}2 \|y_2\| \right)^2 +\frac{|\al_0|^2}4 \|y_2\|^2\\
                                         & - \left( \sqrt{(A_\infty y_2,y_2)} - \frac{|\al_0|}2 \frac{\|A_\theta^{1/2} y_2\|}{\|A_\infty^{1/2} y_2\|} \|y_1\| \right)^2
                                            +\frac{|\al_0|^2}4 \frac{\|A_\theta^{1/2} y_2\|^2}{\|A_\infty^{1/2} y_2\|^2} \|y_1\|^2\\
                                      \le& \,\frac{|\al_0|^2}4 \big( \|y_1\|^2 + \|y_2 \|^2 \big) =  \frac{|\al_0|^2}4 \|y\|^2
\end{align*}
if we show that
\[
 \|A_\theta^{1/2} y_2\| \le \|A_\infty^{1/2} y_2\|, \quad y_2 \in \CD(A_\infty), 
\]
or, equivalently, $\|A_\theta^{1/2} A_\infty^{-1/2}\|\le 1$. 
This follows from the relation $A_\theta^{1/2} A_\infty^{-1/2}=(A_\infty^{-1/2} A_\theta^{1/2})^*$ 
(note that $A_\theta^{1/2} A_\infty^{-1/2}$ is bounded as $\CD(A_\infty^{1/2}) \subset  \CD(A_\theta^{1/2})$)
and from the estimate
\[
 \| A_\infty^{-1/2} A_\theta^{1/2}  x \|^2 = (A_\infty^{-1} A_\theta^{1/2}  x, A_\theta^{1/2}  x ) \le  (A_\theta^{-1} A_\theta^{1/2}  x, A_\theta^{1/2}  x ) = \|x\|^2,
   \quad x\in \CD(A_\theta^{1/2}),
\]
where we have used the inequality $A_\infty^{-1} \le A_\theta^{-1}$ from Pro\-posi\-tion \ref{A_theta} iv). 

Since $\alpha$ is constant and hence $\alpha'\equiv 0$, we have $\CB_\theta=\CS_\theta$ and so \eqref{const-al} follows from Proposition \ref{B_theta-sim}.
\end{proof}

The following proposition provides an upper bound for the unperturbed operator $\CS_\theta$ for arbitrary functions $\alpha$.

\begin{proposition}
\label{stheta}
Let $\theta \in [0,\infty]$. Then the self-adjoint operator $\CS_\theta$ defined in \eqref{B2} satisfies the estimate 
$$\max \sigma(\CS_\theta) \le s_\theta$$ 
where $s_\theta \in [-\la_1(\theta),-\la_1(\infty) + \|\al\|^2]$ is given by
\[
  s_\theta :=
  \left\{ \begin{array}{ll}
  \!\!\displaystyle{- \frac{\la_1(\infty)+\la_1(\theta)}2 + \sqrt{ \bigg( \frac{\la_1(\infty) - \la_1(\theta)}2\bigg)^2 + \la_1(\theta) \|\al\|^2}}
                                    & \!\text{if }\ \|\al\|^2 \!\le\! \la_1(\infty), \\[1mm]
  \!\!-\la_1(\infty) + \|\al\|^2 & \!\text{if } \ \|\al\|^2 \!>\! \la_1(\infty).
  \end{array} \right.
\]
\end{proposition}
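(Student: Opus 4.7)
\bigskip

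\textbf{Proof plan.} The plan is to combine the Schur complement characterisation from Proposition~\ref{F_theta} with the operator inequality $\al T \al \le \|T\| \cdot \|\al\|^2 I$ (valid for bounded $T \ge 0$) and the resolvent estimates of Lemma~\ref{resolv-est}. Since $\CS_\theta$ has compact resolvent (Theorem~\ref{B_theta}), we have $\sigma(\CS_\theta)=\sigma_{\rm p}(\CS_\theta)$, and by Proposition~\ref{F_theta}
\[
   \sigma(\CS_\theta) \subset \sigma(-A_\theta) \cup \sigma_{\rm p}(S_{2,\theta}) \subset (-\infty,-\la_1(\theta)] \cup \sigma_{\rm p}(S_{2,\theta}).
\]
A routine check shows that $s_\theta\ge-\la_1(\theta)$ in both cases of the definition of $s_\theta$, so the contribution from $\sigma(-A_\theta)$ is harmless, and it remains to prove that no $\la>s_\theta$ belongs to $\sigma_{\rm p}(S_{2,\theta})$.

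For $\la>s_\theta$ the operator $A_\theta+\la$ is strictly positive and boundedly invertible, and $A_\theta$ commutes with $(A_\theta+\la)^{-1}$. Hence
\[
 \al A_\theta^{1/2}(A_\theta+\la)^{-1}A_\theta^{1/2}\al \,=\, \al \,A_\theta(A_\theta+\la)^{-1}\al \,\le\, \big\|A_\theta(A_\theta+\la)^{-1}\big\| \cdot \|\al\|^2\, I,
\]
since for every $u\in L_2(0,1)$ one has $(\al T \al u,u)=(T\al u,\al u)\le \|T\|(\al u,\al u)\le\|T\|\|\al\|^2\|u\|^2$ with $T:=A_\theta(A_\theta+\la)^{-1}\ge 0$. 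Combining this with $A_\infty\ge\la_1(\infty) I$ gives the key estimate
\begin{equation*}
   S_{2,\theta}(\la) \;\le\; \bigl(-\la_1(\infty)-\la + \big\|A_\theta(A_\theta+\la)^{-1}\big\|\cdot\|\al\|^2\bigr)\,I.
\end{equation*}
It then suffices to show that the scalar on the right is \emph{strictly negative} for every $\la>s_\theta$; this forces $S_{2,\theta}(\la)$ to be boundedly invertible and hence excludes $\la$ from $\sigma_{\rm p}(S_{2,\theta})$.

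To verify strict negativity, I split according to the two cases in the definition of $s_\theta$ and use Lemma~\ref{resolv-est}, parts (1) and (3), to evaluate the resolvent norm.
If $\|\al\|^2\le \la_1(\infty)$, then $s_\theta\le 0$ and for $\la\in(s_\theta,0)$ one uses $\|A_\theta(A_\theta+\la)^{-1}\|=\la_1(\theta)/(\la_1(\theta)+\la)$ so that the scalar becomes
\[
   \frac{-(\la_1(\infty)+\la)(\la_1(\theta)+\la) + \la_1(\theta)\|\al\|^2}{\la_1(\theta)+\la};
\]
the quadratic in the numerator is strictly negative precisely for $\la$ greater than its larger root, and that larger root coincides with $s_\theta$ by direct computation. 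For $\la\ge 0$ one uses $\|A_\theta(A_\theta+\la)^{-1}\|=1$ and notes that the scalar is $\|\al\|^2-\la_1(\infty)-\la\le -\la\le 0$, strictly negative once $\la>0$. If $\|\al\|^2>\la_1(\infty)$, then $s_\theta=\|\al\|^2-\la_1(\infty)>0$, and $\la>s_\theta>0$ gives the scalar $s_\theta-\la<0$ immediately via the same bound.

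I expect the only slightly delicate point to be the algebraic identification of the larger root of the quadratic $\la^2+(\la_1(\infty)+\la_1(\theta))\la+\la_1(\infty)\la_1(\theta)-\la_1(\theta)\|\al\|^2=0$ with the formula for $s_\theta$ in the first case, and checking that the smaller root lies below $-\la_1(\theta)$ so that the sign of this quadratic is determined by the single threshold $s_\theta$ throughout $(-\la_1(\theta),\infty)$; both reduce to the inequality $\la_1(\theta)\le\la_1(\infty)$ from Proposition~\ref{A_theta}\,iii).
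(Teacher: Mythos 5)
Your proof is correct, and it takes a genuinely different route from the paper's. The paper bounds the quadratic form of $\CS_\theta$ directly: it uses the two-sided Young-type estimate (the preceding Lemma, with splitting parameter $\gamma\in(0,1]$) to get $(\CS_\theta y,y)\le\max\{-(1-\gamma)\la_1(\theta),\,-\la_1(\infty)+\gamma^{-1}\|\al\|^2\}\,\|y\|^2$, and then optimizes over $\gamma$; the two branches in the formula for $s_\theta$ arise according to whether the optimal $\gamma$ lies in the interior $(0,1)$ (the intersection point $\gamma_0$ of $h_1$ and $h_2$) or at the boundary $\gamma=1$. You instead pass to the Schur complement $S_{2,\theta}$ via Proposition~\ref{F_theta} and dominate $\al A_\theta(A_\theta+\la)^{-1}\al$ by $\|A_\theta(A_\theta+\la)^{-1}\|\,\|\al\|^2 I$, then invoke the resolvent norm formulas (1) and (3) of Lemma~\ref{resolv-est}; here the two branches arise according to whether the threshold $\la$ falls in $(-\la_1(\theta),0]$ or in $[0,\infty)$. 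Your approach is actually close in spirit to the paper's Lemma~\ref{schur-est} — which gives the sharper upper bound with $\al^2$ in place of $\|\al\|^2 I$ — but the paper only uses that lemma for the variational principle in Proposition~\ref{july11}, not in the proof of Proposition~\ref{stheta} itself. The form-estimate route avoids the explicit resolvent formulas and reduces everything to a one-variable optimization; the Schur-complement route makes the dependence on $\la$ transparent and slots more naturally into the resolvent-based machinery of Section~\ref{section3}. Both yield exactly the same constant $s_\theta$ (as they must, since both rest ultimately on $A_\theta\ge\la_1(\theta)$, $A_\infty\ge\la_1(\infty)$, and $\al^2\le\|\al\|^2 I$), and your algebraic identification of $s_\theta$ with the larger root of $(\la_1(\infty)+\la)(\la_1(\theta)+\la)=\la_1(\theta)\|\al\|^2$ is correct, including the observation that the smaller root lies at or below $-\la_1(\infty)\le-\la_1(\theta)$.
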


\smallskip

\begin{remark}
\label{siginf}
\begin{hlist}
\item[i)] The bound $s_\theta$ satisfies 
\begin{alignat*}{3}
&\!\!\!-\la_1(\theta) \le s_\theta \le 0 \  & & \iff \ & \ \|\al\|^2 & \le \la_1(\infty), \\
& 0 < s_\theta    \le  -\la_1(\infty) + \|\al\|^2  \ & & \iff \ & \ \|\al\|^2 & > \la_1(\infty).
\end{alignat*}
\item[ii)] If $\|\al\|^2 \!<\! \la_1(\infty)$, then $s_\theta$ can be written equivalently~as
\[
 s_\theta = -\la_1(\theta) + \|\al\| \sqrt{\la_1(\theta)} \  \displaystyle{\tan \bigg( \frac 12 \arctan \frac{2 \|\al\| \sqrt{\la_1(\theta)}} {\la_1(\infty) - \la_1(\theta)} \bigg)}.
\]
\item[iii)] In the particular case $\theta=\infty$, the expression for $s_\infty$  simplifies to
\begin{align*}
  s_\infty & =
  \left\{ \begin{array}{ll}
  \!\!-\la_1(\infty) + \|\al\| \sqrt{\la_1(\infty)} & \text{if } \ \|\al\|^2 \!\le\! \la_1(\infty), \\[1mm]   
  \!\!-\la_1(\infty) + \|\al\|^2 & \text{if } \ \|\al\|^2 \!>\! \la_1(\infty).
  \end{array} \right. \\
  & = - \la_1(\infty) + \|\al\| \max \big\{ \|\al\|, \sqrt{\la_1(\infty)}\,\big\}.
\end{align*}
\end{hlist}
\end{remark}

\noindent
{\it Proof of Proposition} \ref{stheta}.
Since $-A_\theta \!\le\! -\la_1(\theta)$ and  $-A_\infty \!\le\! -\la_1(\infty)$, the right in\-equality in \eqref{ie} yields that, for arbitrary $\gamma\in (0,1]$,
\begin{align}
  \big(\CS_\theta y,  y \big)
  & \le - (1-\gamma) \,\la_1(\theta) \, \|y_1\|^2 + \Big( -\la_1(\infty) + \frac 1 \gamma \|\al\|^2 \Big) \,\|y_2\|^2  \notag \\
  & \le \max \big\{ h_1(\gamma),h_2(\gamma) \big\} \, \| y \|^2 \notag
\end{align}
where we have set 
\[
  h_1(\gamma) :=  - (1-\gamma) \,\la_1(\theta) , \quad h_2(\gamma) :=  -\la_1(\infty) + \frac 1 \gamma \|\al\|^2, \quad \gamma \in (0,1].
\]
If $\|\al\|^2 \ge \la_1(\infty)$, it is not difficult to see that $h_2(\gamma) \ge 0 \ge h_1(\gamma)$ for all $\gamma \in (0,1]$; in this case, the optimal estimate is obtained
for $\gamma=1$, that is,
\[
 \big(\CS_\theta y,  y \big) \le \big( -\la_1(\infty) + \|\al\|^2 \big) \, \|y\|^2.
\]
If $\|\al\|^2 < \la_1(\infty)$, a short calculation shows that the function $\max\{ h_1(\gamma),h_2(\gamma)\}$ attains its minimum at the point $\gamma_0 \in (0,1]$ 
where $h_1$ and $h_2$ intersect and which is given by the relation
\[
  \gamma_0= \frac 1{\la_1(\theta)} \,\bigg(  - \frac{\la_1(\infty)-\la_1(\theta)}2 + \sqrt{ \bigg( \frac{\la_1(\infty) - \la_1(\theta)}2\bigg)^2 + \la_1(\theta) \|\al\|^2}  \bigg);
\]
in this case, the optimal estimate becomes
\begin{align*}
 \big(\CS_\theta y,y \big) \!&\le\! \bigg(\!\! - \frac{\la_1(\infty)+\la_1(\theta)}2 + \sqrt{ \bigg( \frac{\la_1(\infty) - \la_1(\theta)}2\bigg)^2 + \la_1(\theta) \|\al\|^2}  \bigg)  \|y\|^2,
\end{align*}
where we have used that $\la_1(\theta) \le \la_1(\infty)$ for all $\theta \in [0,\infty]$ by Proposition \ref{A_theta} iii).
\qed

\vspace{2mm}

While the first spectral enclosure in Theorem \ref{!!!} only provides some region in the complex plane containing the eigenvalues, 
the next theorem gives more detailed information in terms of the eigenvalues of the self-adjoint operator $\CS_\theta$.

\begin{theorem}
\label{strip}
Let $\theta\in[0,\infty]$. The eigenvalues of $\CA_\theta$ lie in discs of radius $\|\alpha'\|$ around the eigenvalues of the self-adjoint operator $\CS_\theta$:
\begin{equation}
\label{inclusion}
  \sigma(\CA_\theta) \subset \big\{ \la\in \C : \dist\big(\la, \sigma_{\rm p}(\CS_\theta)\big) \le \|\alpha'\| \big\}.
\end{equation}
As a consequence,
\begin{equation}
\label{inclusion2}
  \sigma(\CA_\theta) \subset  
  \big\{ \la \!\in\! \C : \Re \la \le s_\theta, \,  |\Im \la | \le \|\alpha'\| \big\}
  \cup  \big\{ \la \!\in\! \C : | \la - s_\theta| \le  \|\alpha'\| \big\}
\end{equation}  
and 
\begin{equation}
 \la\in \sigma(\CA_\theta) \ \implies \ |\Im \la | \le \|\alpha'\|, \ \ \Re \la \le s_\theta + \|\al'\|=:b_\theta.
\end{equation}
\end{theorem}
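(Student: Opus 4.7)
The plan is to reduce the claim to a standard bounded self-adjoint perturbation result applied to the transformed operator $\CB_\theta = \CS_\theta + \CT_\theta$, and then translate the resulting spectral inclusion back through the quasi-similarity.

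First I would record that, by Theorem~\ref{harz}, the spectrum of $\CA_\theta$ consists only of eigenvalues, and by Proposition~\ref{B_theta-sim} these eigenvalues coincide with those of $\CB_\theta$. Since $\CS_\theta$ is self-adjoint with compact resolvent (Theorem~\ref{B_theta}), one has $\sigma(\CS_\theta)=\sigma_{\rm p}(\CS_\theta)\subset\R$, so showing \eqref{inclusion} amounts to proving that every $\lambda\in\C$ with $\dist(\lambda,\sigma(\CS_\theta))>\|\alpha'\|$ lies in $\rho(\CB_\theta)$.

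The core step is a Neumann series argument. For such $\lambda$, self-adjointness of $\CS_\theta$ gives the standard resolvent bound
\[
  \bigl\|(\CS_\theta-\lambda)^{-1}\bigr\|=\frac{1}{\dist(\lambda,\sigma(\CS_\theta))}<\frac{1}{\|\alpha'\|}.
\]
Together with $\|\CT_\theta\|\le\|\alpha'\|$ from Theorem~\ref{B_theta}, this yields $\|(\CS_\theta-\lambda)^{-1}\CT_\theta\|<1$. Then $I+(\CS_\theta-\lambda)^{-1}\CT_\theta$ is boundedly invertible, whence the factorisation
\[
  \CB_\theta-\lambda=(\CS_\theta-\lambda)\bigl(I+(\CS_\theta-\lambda)^{-1}\CT_\theta\bigr)
\]
shows $\lambda\in\rho(\CB_\theta)$. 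Combined with $\sigma(\CA_\theta)=\sigma_{\rm p}(\CA_\theta)=\sigma_{\rm p}(\CB_\theta)\subset\sigma(\CB_\theta)$, this proves \eqref{inclusion}.

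The consequence \eqref{inclusion2} is purely geometric: by Proposition~\ref{stheta} the set $\sigma_{\rm p}(\CS_\theta)$ is contained in the ray $(-\infty,s_\theta]$, so for any $\lambda=\xi+i\eta\in\sigma(\CA_\theta)$ one has $\dist(\lambda,(-\infty,s_\theta])\le\dist(\lambda,\sigma_{\rm p}(\CS_\theta))\le\|\alpha'\|$. Distinguishing $\xi\le s_\theta$ (in which case $\dist(\lambda,(-\infty,s_\theta])=|\eta|$, giving $|\Im\lambda|\le\|\alpha'\|$) from $\xi>s_\theta$ (in which case $\dist(\lambda,(-\infty,s_\theta])=|\lambda-s_\theta|$) yields exactly the union in \eqref{inclusion2}. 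The final inequalities $|\Im\lambda|\le\|\alpha'\|$ and $\Re\lambda\le s_\theta+\|\alpha'\|$ read off immediately from either component of that union.

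There is no real obstacle here: the only point requiring attention is the bookkeeping that $\sigma(\CA_\theta)$ actually consists of eigenvalues (needed to transfer via Proposition~\ref{B_theta-sim}, which a priori only identifies point spectra), and that $\CT_\theta$ is genuinely bounded on all of $L_2(0,1)\oplus L_2(0,1)$ so that $(\CS_\theta-\lambda)^{-1}\CT_\theta$ is a bounded operator on the whole Hilbert space. Both facts are already established in Theorems~\ref{harz} and~\ref{B_theta}, so the proof reduces to the two short displays above.
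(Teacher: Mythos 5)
Your proof is correct and follows essentially the same route as the paper: the paper simply cites the classical bounded-perturbation theorem for self-adjoint operators (Kato, Theorem V.4.5), and your Neumann series argument is exactly the standard proof of that theorem, combined with the same geometric unpacking of $\sigma(\CS_\theta)\subset(-\infty,s_\theta]$ from Proposition~\ref{stheta}. The bookkeeping points you flag (that $\sigma(\CA_\theta)$ is purely point spectrum, and that $\CT_\theta$ is bounded on the whole space) are indeed the two facts from Theorems~\ref{harz} and~\ref{B_theta} that the paper relies on.
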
 

\begin{proof}
All claims follow from Theorem \ref{B_theta} and Proposition \ref{stheta} by means of classical perturbation theorems for self-adjoint operators (see, e.g., \cite[Theorem~V.4.5]{MR0407617}) if we observe that $\CB_\theta=\CS_\theta + \CT_\theta$ and $\|\CT_\theta\| \le \|\alpha'\|$.
\end{proof}

The following corollary guarantees that $\CA_\theta$ has no eigenvalues in the closed right half-plane; it is an immediate consequence of  Theorem \ref{strip} and of the definition of $s_\theta$ in Proposition \ref{stheta}.

\begin{corollary}
\label{stable2}
Let $\theta\in[0,\infty]$. Then $\CA_\theta$ has no spectrum in the closed right half-plane~if 
\[
   \|\al\| < \sqrt{\la_1(\infty)}, \quad 
   \|\al'\| < \frac{\la_1(\infty)+\la_1(\theta)}2 - \sqrt{ \bigg( \frac{\la_1(\infty) - \la_1(\theta)}2\bigg)^2 \!\!\!+ \la_1(\theta) \|\al\|^2}.
\]
\end{corollary}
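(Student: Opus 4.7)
The plan is to read off the corollary directly from Theorem~\ref{strip} combined with the formula for $s_\theta$ in Proposition~\ref{stheta}. By Theorem~\ref{strip}, every $\la\in\sigma(\CA_\theta)$ satisfies $\Re\la\le b_\theta=s_\theta+\|\al'\|$. Hence $\sigma(\CA_\theta)$ is disjoint from the closed right half-plane as soon as $s_\theta+\|\al'\|<0$, i.e.\ $\|\al'\|<-s_\theta$. The two hypotheses of the corollary are designed to be exactly this condition, and the proof amounts to unpacking them.

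First I would use the assumption $\|\al\|<\sqrt{\la_1(\infty)}$, i.e.\ $\|\al\|^2<\la_1(\infty)$, to place us in the first branch of the definition of $s_\theta$ in Proposition~\ref{stheta}; thus
\[
 -s_\theta \;=\; \frac{\la_1(\infty)+\la_1(\theta)}{2} - \sqrt{\Bigl(\frac{\la_1(\infty)-\la_1(\theta)}{2}\Bigr)^{\!2}+\la_1(\theta)\,\|\al\|^2}.
\]
A short computation (squaring) shows that this quantity is strictly positive precisely when $\|\al\|^2<\la_1(\infty)$, since the inequality $\bigl(\tfrac{\la_1(\infty)+\la_1(\theta)}{2}\bigr)^2>\bigl(\tfrac{\la_1(\infty)-\la_1(\theta)}{2}\bigr)^2+\la_1(\theta)\|\al\|^2$ reduces to $\la_1(\infty)\la_1(\theta)>\la_1(\theta)\|\al\|^2$. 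So the first hypothesis guarantees $s_\theta<0$ and legitimizes the second hypothesis, which is precisely the statement $\|\al'\|<-s_\theta$.

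Combining the two gives $b_\theta=s_\theta+\|\al'\|<0$, and Theorem~\ref{strip} then yields $\Re\la\le b_\theta<0$ for every $\la\in\sigma(\CA_\theta)$, which is the claim. There is no real obstacle: the only thing to check carefully is that the explicit expression for $-s_\theta$ is the right-hand side of the second hypothesis (so that the two conditions really match the appropriate branch of Proposition~\ref{stheta}), and that the first hypothesis is strict, so that $s_\theta<0$ strictly and the inequality $b_\theta<0$ is also strict, giving exclusion from the \emph{closed} right half-plane.
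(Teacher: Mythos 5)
Your proof is correct and is essentially the paper's own argument: the corollary is indeed read off from the bound $\Re\lambda\le b_\theta = s_\theta+\|\al'\|$ in Theorem~\ref{strip}, with $\|\al\|^2<\la_1(\infty)$ selecting the first branch of $s_\theta$ in Proposition~\ref{stheta}, so that the second hypothesis is exactly $\|\al'\|<-s_\theta$, giving $b_\theta<0$. The extra check that $-s_\theta>0$ under the first hypothesis is also recorded in the paper as Remark~\ref{siginf}~i), so nothing is missing.
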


In a similar way as Proposition \ref{local}, the following local result can be proved.

\begin{proposition}
\label{question}
Let $\theta\in [0,\infty]$ and let $\lambda_0\in\sigma(\CS_\theta)$ be an eigenvalue of $\CS_\theta$ with multiplicity $m_0$. Set $\delta_0:= \dist\big( \la_0, \sigma(\CS_\theta) \setminus \{\la_0\} \big)/2$
and denote by $\Gamma_0$ the circle around $\lambda_0$  with radius $\delta_0$. If $\|\alpha'\|<\delta_0$, 
then the operator $\CA_\theta$ has $m_0$ eigenvalues inside~$\Gamma_0$ $($counted with multiplicities$)$; if $m_0=1$, then the unique eigenvalue of $\CA_\theta$ in $\Gamma_0$ is real.
\end{proposition}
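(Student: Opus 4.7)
The plan is to mirror the proof of Proposition \ref{local}, but applied to the decomposition $\CB_\theta = \CS_\theta + \CT_\theta$ from Theorem \ref{B_theta} rather than to $\CQ_\theta+\CR$. By Proposition \ref{B_theta-sim}, the quasi-similarity $\CW_\theta$ induces a bijection between the Jordan chains of $\CA_\theta$ and those of $\CB_\theta$; in particular $\sigma(\CA_\theta)=\sigma(\CB_\theta)$ (both spectra are purely discrete by Theorem \ref{harz} and by Theorem \ref{B_theta} together with the bounded perturbation $\CT_\theta$), and algebraic multiplicities are preserved. It therefore suffices to count the eigenvalues of $\CB_\theta$ inside $\Gamma_0$.

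Since $\CS_\theta$ is self-adjoint, for every $z\in\Gamma_0$ the sharp bound
$$
\bigl\|(\CS_\theta-z)^{-1}\bigr\| = \frac{1}{\dist(z,\sigma(\CS_\theta))} = \frac{1}{\delta_0}
$$
holds. Combined with $\|\CT_\theta\|\le\|\alpha'\|<\delta_0$ from Theorem \ref{B_theta}, this yields $\|\CT_\theta(\CS_\theta-z)^{-1}\|<1$ uniformly on $\Gamma_0$, and a Neumann series argument gives
$$
\bigl(\CB_\theta-z\bigr)^{-1} = \bigl(\CS_\theta-z\bigr)^{-1}\bigl(I+\CT_\theta(\CS_\theta-z)^{-1}\bigr)^{-1}, \quad z\in\Gamma_0.
$$
In particular $\Gamma_0\subset\rho(\CB_\theta)=\rho(\CA_\theta)$.

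For the eigenvalue count I would use a standard homotopy argument. Introduce $\CB_\theta^{(\varepsilon)} := \CS_\theta + \varepsilon\CT_\theta$ for $\varepsilon\in[0,1]$. The Neumann-series estimate above is uniform in $\varepsilon\in[0,1]$, so $\Gamma_0\subset\rho(\CB_\theta^{(\varepsilon)})$ and the Riesz projection
$$
P(\varepsilon) = -\frac{1}{2\pi{\rm i}}\int_{\Gamma_0}\bigl(\CB_\theta^{(\varepsilon)}-z\bigr)^{-1}\,{\rm d}z
$$
depends continuously on $\varepsilon$. By \cite[Lemma~I.4.10, Theorem~IV.3.16]{MR0407617}, $\dim\,{\rm ran}\,P(\varepsilon)$ is constant on $[0,1]$. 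At $\varepsilon=0$, $P(0)$ is the spectral projection of the self-adjoint operator $\CS_\theta$ at $\la_0$, which has rank $m_0$; hence $\CB_\theta=\CB_\theta^{(1)}$, and thus $\CA_\theta$, has exactly $m_0$ eigenvalues inside $\Gamma_0$ counted with algebraic multiplicities.

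Finally, if $m_0=1$, the symmetry of $\sigma(\CA_\theta)$ with respect to $\R$ (Theorem \ref{harz}) forces the unique eigenvalue in the real-symmetric disc bounded by $\Gamma_0$ to coincide with its complex conjugate, i.e., to be real. The main (and essentially only) subtle point is to transfer multiplicities of $\CS_\theta$ via $\CB_\theta$ to algebraic multiplicities of $\CA_\theta$; this is exactly the content of the Jordan-chain correspondence in Proposition \ref{B_theta-sim}, so beyond that the argument is a direct perturbative analogue of the proof of Proposition \ref{local}.
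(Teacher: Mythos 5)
Your proof is correct and fills in exactly the details the paper leaves implicit when it says the result "can be proved in a similar way as Proposition \ref{local}": you replace the decomposition $\CQ_\theta+\varepsilon\CR$ by $\CS_\theta+\varepsilon\CT_\theta$, use the sharp self-adjoint resolvent bound on $\Gamma_0$ and the Jordan-chain correspondence of Proposition \ref{B_theta-sim} to transport multiplicities from $\CB_\theta$ back to $\CA_\theta$, and conclude realness for $m_0=1$ by symmetry of the spectrum. This matches the intended argument.
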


While in Section \ref{section4} the spectrum of the unperturbed operator $\CQ_\theta$ is known, this is not true for the unperturbed operator $\CS_\theta$. 
Therefore, in the remainder of this section, we investigate the spectrum of the semi-bounded self-adjoint operator $\CS_\theta$ by means of variational principles.

The quadratic form version of the min-max characterization of the eigenvalues of self-adjoint operators bounded from below 
(see \cite[Theorem 1, Section 2.2]{MR0477971}, \cite[Theorem~XIII.2]{MR0493421}) shows that
the eigenvalues $\la_k(\CS_\theta)$, $k\in\N$, of $\CS_\theta$ enumerated such that $\la_1(\CS_\theta) \ge \la_2(\CS_\theta) \ge \cdots$ and counted with multiplicities, 
are given~by
\begin{equation}
\label{var}
  \lambda_j(\CS_\theta)=\max_{\efrac{\CL\subset\CQ(\CS_\theta)}{\dim \CL=j}} \, \min_{\efrac{y\in \CL}{\|y\|=1}} \ {\mathfrak s}_\theta[y], \quad j=1,2,\dots.
\end{equation}
Here $\CQ(\CS_\theta)= \CD(A_\theta^{1/2}) \oplus \CD(A_\infty^{1/2})$ is the form domain of $\CS_\theta$ 
and ${\mathfrak s}_\theta$ is the corresponding quadratic form given by
\begin{equation}
\label{form}
  {\mathfrak s}_\theta[y] \!:=\!-\big(A_\theta^{1/2} y_1, A_\theta^{1/2} y_1\big) \!+\! \big(\al y_2, A_\theta^{1/2} y_1\big) 
                              \!+\!\big(A_\theta^{1/2} y_1, \al y_2\big) \!-\!\big(A_\infty^{1/2} y_2, A_\infty^{1/2} y_2\big)\!\!
\end{equation}
for $y=(y_1,y_2)^{\rm t} \in \CQ(\CS_\theta)$ (see \eqref{B2} and \cite[Section~VIII.6]{MR751959}). 

\begin{proposition}
\label{july6}
The number $k_0$ of positive eigenvalues of the operator $\CS_\theta$ in $L_2(0,1)\oplus L_2(0,1)$ coincides with the number of positive eigenvalues of 
the shifted Bessel differential operator $-A_\infty + \alpha^2$ in $L_2(0,1)$, and we have
\begin{equation}
\label{vp}
   0<\lambda_j(\CS_\theta) \le \lambda_j(-A_\infty + \alpha^2),\quad j=1,2,\dots,k_0.
\end{equation}
\end{proposition}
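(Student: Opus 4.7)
The plan is to exploit the operator inequality \eqref{ZB}, namely $\CS_\theta \le T$ with $T := \mathrm{diag}(0,-A_\infty+\alpha^2)$, combined with the min-max characterisation \eqref{var} and an explicit trial subspace.

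First, for the upper bound in \eqref{vp}, I would check at the level of the associated quadratic forms on $\CQ(\CS_\theta)\subset\CQ(T)$ that
\[
  \mathfrak{t}[y] - \mathfrak{s}_\theta[y] \,=\, \big\|A_\theta^{1/2}y_1 - \alpha y_2\big\|^2 \,\ge\, 0,
\]
where $\mathfrak{t}[y] = ((-A_\infty+\alpha^2)y_2,y_2)$ is the form of $T$. Inserting this into \eqref{var} and enlarging the class of admissible subspaces from $\CQ(\CS_\theta)$ to $\CQ(T)$ yields $\lambda_j(\CS_\theta) \le \lambda_j(T)$ for every $j$. Since $T$ is block-diagonal with trivial upper-left block, its positive eigenvalues coincide with those of $-A_\infty+\alpha^2$: if $k$ denotes the number of positive eigenvalues of $-A_\infty+\alpha^2$, then $\lambda_j(T) = \lambda_j(-A_\infty+\alpha^2)$ for $j=1,\dots,k$, while $\lambda_j(T) = 0$ for $j>k$. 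This delivers the inequality in \eqref{vp} on the positive part of the spectrum of $\CS_\theta$ and forces $k_0 \le k$.

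Second, for the matching lower bound $k_0 \ge k$, I would produce explicit $j$-dimensional trial subspaces on which $\mathfrak{s}_\theta$ is strictly positive. Let $\phi_1,\dots,\phi_k \in \CD(A_\infty)$ be orthonormal eigenvectors of $-A_\infty+\alpha^2$ for its positive eigenvalues $\mu_1 \ge \cdots \ge \mu_k > 0$, and set
\[
  \CL_j := \big\{ (A_\theta^{-1/2}\alpha y_2,\, y_2): y_2 \in \mathrm{span}(\phi_1,\dots,\phi_j) \big\}, \quad j=1,\dots,k.
\]
The first component is well defined and lies in $\CD(A_\theta^{1/2})$ because $A_\theta^{-1/2}$ is a bounded bijection $L_2(0,1)\to\CD(A_\theta^{1/2})$ (Proposition~\ref{A_theta}) and multiplication by $\alpha$ is bounded, so $\CL_j \subset \CQ(\CS_\theta)$; linear independence of the $\phi_i$ gives $\dim\CL_j = j$. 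On every element of $\CL_j$ the defect $A_\theta^{1/2}y_1-\alpha y_2$ vanishes identically, hence $\mathfrak{s}_\theta[y] = \mathfrak{t}[y] = ((-A_\infty+\alpha^2)y_2,y_2) \ge \mu_j\|y_2\|^2 > 0$ for $y_2\ne 0$. The min-max formula \eqref{var} then forces $\lambda_j(\CS_\theta) > 0$ for $j=1,\dots,k$, giving $k_0 \ge k$.

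Combining the two inequalities yields $k_0 = k$ together with \eqref{vp}. The point that requires some care is the application of min-max to the auxiliary operator $T$, whose essential spectrum contains $\{0\}$ coming from the trivial upper-left block; one must note that the max-min formula then produces precisely the eigenvalues lying strictly above this level, which is the classical situation covered, e.g., in \cite[Theorem XIII.2]{MR0493421}. The comparison $\lambda_j(\CS_\theta)\le\lambda_j(T)$ is only applied in the regime $j\le k$, where $\lambda_j(T)$ is a genuine isolated positive eigenvalue and the identification with $\lambda_j(-A_\infty+\alpha^2)$ is unambiguous.
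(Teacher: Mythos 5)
Your proof is correct and follows essentially the same route as the paper's: the upper bound via the operator/form inequality $\CS_\theta\le\mathrm{diag}(0,-A_\infty+\alpha^2)$ from \eqref{ZB} combined with the min-max principle \eqref{var}, and the lower bound via the trial subspace of pairs $(A_\theta^{-1/2}\alpha y_2,y_2)^{\rm t}$ on which the defect term vanishes. The only difference is cosmetic: you make the auxiliary block-diagonal operator $T$ and the identity $\mathfrak{t}[y]-\mathfrak{s}_\theta[y]=\|A_\theta^{1/2}y_1-\alpha y_2\|^2$ explicit, and you carefully flag the (harmless) fact that $0\in\sigma_{\rm ess}(T)$, both of which the paper leaves implicit under the phrase "immediate consequences."
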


\begin{proof}
Denote by $\kappa_0$ the number of positive eigenvalues of $-A_\infty+\alpha^2$. 
The inequality $k_0\le \kappa_0$ and the inequalities \eqref{vp} are immediate consequences 
of the right estimate in \eqref{ZB}, which implies that
\[
   {\mathfrak s}_\theta[y] = (\CS_\theta y,y) \le \big( (-A_\infty \!+ \al^2)y_2,y_2 \big)
\]
for $y=(y_1,y_2)^{\rm t} \!\in \CD(\CS_\theta)=\CD(A_\theta)\oplus \CD(A_\infty)$ and of the variational principle \eqref{var}. 

It remains to be shown that $k_0 \ge \kappa_0$. By the definition of $\kappa_0$ and by the variational principle 
\eqref{var} applied to the semi-bounded operator $-A_\infty+\al^2$, there exists a subspace 
$\CL \subset \CD(A_\infty^{1/2}) = \CQ(A_\infty)$ (the form domain of the positive operator~$A_\infty$) such that $\dim \CL = \kappa_0$ and
$$
  \min_{\efrac{x\in\CL}{\|x\|=1}} \big( - \big(A_\infty^{1/2}x, A_\infty^{1/2}x\big) + (\al x, \al x) \big) >0.
$$
Using \eqref{form}, it is easy to see that the subspace 
$
\left\{ \big( A_\theta^{-1/2}\!\alpha\,y_2, y_2 \big)^{\rm t} : y_2 \in \CL \right\} \subset \CQ(\CS_\theta)
$
has dimension $\kappa_0$ and
\begin{equation}
\label{shcy}
  {\mathfrak s}_\theta \biggl[ \binom{A_\theta^{-1/2}\alpha y_2}{y_2}\biggr]
  = \left( (-A_\infty+\al^2)y_2,y_2\right)>0, \quad y_2\in\CL\subset\CD(A_\infty^{1/2}).
\end{equation}
This and the variational principle \eqref{var} show that $\la_{\kappa_0}(\CS_\theta)>0$ and hence $\CS_\theta$ has at least $\kappa_0$ positive eigenvalues.  
\end{proof}

The eigenvalues of the block operator matrix $\CS_\theta$  coincide with the eigenvalues of its Schur complements 
\begin{alignat*}{2}
   S_{1,\theta}(\lambda)&:=-A_\theta -\lambda + A_\theta^{1/2}\alpha(A_\infty + \lambda)^{-1}\alpha A_\theta^{1/2}, \quad && \CD(S_{1,\theta}(\lambda))=\CD(A_\theta), \\
   S_{2,\theta}(\lambda)&:=-A_\infty \!-\lambda + \alpha A_\theta^{1/2}(A_\theta + \lambda)^{-1} A_\theta^{1/2}\alpha, \quad && \CD(S_{2,\theta}(\lambda))=\CD(A_\infty),
\end{alignat*}
which are defined for $\la \in \C \setminus \sigma(-A_\infty)$ and $\la \in \C \setminus \sigma(-A_\theta)$, respectively: 
\[
 \sigma_{\rm p}(\CS_\theta) \setminus \sigma(-A_\infty) = \sigma_{\rm p} (  S_{1,\theta} ), \quad
 \sigma_{\rm p}(\CS_\theta) \setminus \sigma(-A_\theta) = \sigma_{\rm p} (  S_{2,\theta} )
\]
(compare Proposition \ref{F_theta}). This allows us to characterize and estimate the eigenvalues of $\CS_\theta$ 
in the intervals $\big(-\!\lambda_1(\infty),\infty\big)$ and $\big(-\!\lambda_1(\theta),\infty\big)$ 
by variational principles for $S_{1,\theta}$ and $S_{2,\theta}$, respectively (see \cite{MR2068432} and, e.g., \cite[Section~2.10]{book}). 
As an example, we consider the eigenvalues of $\CS_\theta$ in $(-\la_1(\theta),\infty)$.

\begin{lemma}
\label{schur-est}
Let $\theta\in[0,\infty]$. The Schur complement $S_{2,\theta}$ satisfies the estimates
\begin{alignat*}{2} 
  S_{2,\theta}(\la) & \le - A_\infty + \frac{\la_1(\theta)}{\la_1(\theta)+\la} \, \al^2 - \la, \quad & &\la \in \big(-\la_1(\theta),0\big], \\
  S_{2,\theta}(\la) & \le - A_\infty +\al^2 - \la, \quad & &\la \in [0,\infty).
\end{alignat*}
Moreover, the derivative of $S_{2,\theta}$ is strictly negative and satisfies
\[
  S_{2,\theta}'(\la) \le - I, \quad  \la \in \big(-\la_1(\theta),\infty \big).
\]
\end{lemma}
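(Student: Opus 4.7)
The strategy is to use the functional calculus for the self-adjoint positive operator $A_\theta$ to bound the off-diagonal term $\al A_\theta^{1/2}(A_\theta+\la)^{-1}A_\theta^{1/2}\al$ appearing in $S_{2,\theta}(\la)$. The key identity is $A_\theta^{1/2}(A_\theta+\la)^{-1}A_\theta^{1/2}=A_\theta(A_\theta+\la)^{-1}$, which by Proposition \ref{A_theta} ii) has spectrum $\{t/(t+\la):t\in\sigma(A_\theta)\}\subset [0,\infty)$ for $\la>-\la_1(\theta)$.

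For the first bound, I would analyze the real function $t\mapsto t/(t+\la)$ on $[\la_1(\theta),\infty)$. For $\la\ge 0$ this function takes values in $[0,1)$, hence $A_\theta(A_\theta+\la)^{-1}\le I$. For $\la\in(-\la_1(\theta),0]$, its derivative is $\la/(t+\la)^2\le 0$, so it is monotonically decreasing on $[\la_1(\theta),\infty)$ and attains its maximum at $t=\la_1(\theta)$ with value $\la_1(\theta)/(\la_1(\theta)+\la)$; consequently, $A_\theta(A_\theta+\la)^{-1}\le\frac{\la_1(\theta)}{\la_1(\theta)+\la}I$. Sandwiching with the bounded self-adjoint multiplication operator $\al$ preserves the inequality since $\al = \al^*$: from $cI-A_\theta(A_\theta+\la)^{-1}\ge 0$ we get $\al(cI-A_\theta(A_\theta+\la)^{-1})\al\ge 0$, i.e.\ $\al A_\theta^{1/2}(A_\theta+\la)^{-1}A_\theta^{1/2}\al\le c\al^2$. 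Substituting into the definition of $S_{2,\theta}(\la)$ yields the two claimed upper bounds.

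For the derivative estimate, I would differentiate the resolvent to obtain
\[
 S_{2,\theta}'(\la) = -I - \al A_\theta^{1/2}(A_\theta+\la)^{-2}A_\theta^{1/2}\al,
\]
and show that the second term is a nonnegative operator. By functional calculus the operator $A_\theta^{1/2}(A_\theta+\la)^{-2}A_\theta^{1/2}=A_\theta(A_\theta+\la)^{-2}$ has spectrum $\{t/(t+\la)^2:t\in\sigma(A_\theta)\}\subset[0,\infty)$ for $\la>-\la_1(\theta)$, so it is a positive self-adjoint operator; sandwiching with the real-valued $\al$ again preserves positivity, giving $\al A_\theta^{1/2}(A_\theta+\la)^{-2}A_\theta^{1/2}\al\ge 0$ and hence $S_{2,\theta}'(\la)\le -I$.

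The main technical issue is purely bookkeeping: one must check that all products and resolvents involved are well-defined on $\CD(A_\infty)$ (the common form/operator domain on which $S_{2,\theta}(\la)$ acts) so that the operator inequalities actually refer to comparable quadratic forms. This follows from Lemma \ref{diagdom} ii), which ensures $\CD(A_\infty)\subset \CD(A_\theta^{1/2}\al)$, so that each expression above makes sense as a bounded self-adjoint operator plus $-A_\infty-\la$, and the operator inequalities can be read off directly from the spectral-theoretic bounds on $A_\theta(A_\theta+\la)^{-1}$ and $A_\theta(A_\theta+\la)^{-2}$.
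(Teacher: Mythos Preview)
Your argument is correct and follows essentially the same approach as the paper. The only cosmetic difference is that the paper rewrites $A_\theta(A_\theta+\la)^{-1}=I-\la(A_\theta+\la)^{-1}$ and then uses the two-sided bound $0\le(A_\theta+\la)^{-1}\le 1/(\la_1(\theta)+\la)$, whereas you bound the scalar function $t\mapsto t/(t+\la)$ directly by monotonicity; both routes yield exactly the same constants, and the derivative estimate is identical.
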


\begin{proof}
Let $\la \in \big(-\la_1(\theta),\infty\big)$. The Schur complement $S_{2,\theta}$ can be rewritten as
\begin{align*}
  S_{2,\theta}(\la) 
                    &= -A_\infty - \la + \al A_\theta (A_\theta + \la)^{-1} \al \\
                    &= -A_\infty - \la + \al^2 - \la \, \al (A_\theta + \la)^{-1} \al.
\end{align*}
Since $A_\theta \ge \la_1(\theta) > 0$ (see Proposition \ref{A_theta}), the resolvent of $A_\theta$ satisfies the 
two-sided inequality $0\le (A_\theta + \la)^{-1} \le 1/(\la_1(\theta)+\la)$. 
For $\la \in \big(-\la_1(\theta),0\big]$, the right inequality yields the first estimate claimed for $S_{2,\theta}(\la)$, while for $\la \in [0,\infty)$ the
left inequality yields the second one.

The inequality for the derivative of $S_{2,\theta}(\la)$ follows from the identity
\[
  S_{2,\theta}'(\lambda)= - I - \alpha A_\theta^{1/2}(A_\theta + \lambda)^{-2} A_\theta^{1/2}\alpha, \quad  \la \in \big(-\la_1(\theta),\infty \big).
\vspace{-6mm}
\]
\end{proof}

\vspace{2mm}

\begin{proposition}
\label{july11}
Let $\theta \in[0,\infty]$. For every subinterval $[a,\infty) \subset \big(-\!\lambda_1(\theta),\infty\big)$ let
$\la_{k_a}(\CS_\theta) \le \dots \le \la_1(\CS_\theta)$ be the eigenvalues of $\,\CS_\theta$ in $[a,\infty)$ $($counted with multiplicities$)$. 
Then 
\begin{equation}
\label{schur-var}
 \la_j(\CS_\theta) = \min_{\efrac{\CL\subset \CD(A_\infty)}{\dim \CL=j}} \, \max_{\efrac{y_2\in\CL}{\|y_2\|=1}} \ p_{2,\theta}(y_2), \quad j=1,2,\dots, k_a,
\end{equation}
where $p_{2,\theta}(y_2)$ is the $($unique$)$ zero of $\,(S_{2,\theta}(\cdot)y_2,y_2)$ on $[a,\infty)$ if a zero exists and $p_{2,\theta}(y_2):=-\infty$ otherwise for $y_2\in \CD(A_\infty)$. Moreover, the number $k_a$ of eigenvalues of $\CS_\theta$ in $[a,\infty)$ is given by
\[
 k_a = \dim \CL_{[0,\infty)}\big(S_{2,\theta}(a)\big);
\]
here $\CL_I(S_{2,\theta}(\la))$ denotes the spectral subspace of the self-adjoint operator $S_{2,\theta}(\la)$ corresponding to an interval $I \subset \big( -\la_1(\theta),\infty \big)$.
\end{proposition}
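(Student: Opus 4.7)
The plan is to apply the min-max variational principle for self-adjoint analytic operator functions developed in \cite{MR2068432} (see also \cite[Section~2.10]{book}) to the Schur complement $S_{2,\theta}$, which lives on the interval $\bigl(-\lambda_1(\theta),\infty\bigr)\subset\rho(-A_\theta)$ by Proposition~\ref{A_theta}~ii), iii).

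First I would verify the hypotheses of that general principle. On $\bigl(-\lambda_1(\theta),\infty\bigr)$, the operator function $S_{2,\theta}$ takes self-adjoint values on the constant domain $\CD(A_\infty)$; indeed, $\alpha A_\theta^{1/2}(A_\theta+\lambda)^{-1}A_\theta^{1/2}\alpha$ is everywhere defined and bounded by Lemma~\ref{diagdom}, so the domain is that of $A_\infty$. Analyticity of $S_{2,\theta}$ on $\rho(-A_\theta)$ follows from the first resolvent identity, boundedness from above follows from Lemma~\ref{schur-est}, and the strict monotonicity $S_{2,\theta}'(\lambda)\le -I$ is the second statement of Lemma~\ref{schur-est}.

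Next I would analyse the scalar functions $f_{y_2}(\lambda):=\bigl(S_{2,\theta}(\lambda)y_2,y_2\bigr)$ for $y_2\in\CD(A_\infty)$ with $\|y_2\|=1$. From $S_{2,\theta}'(\lambda)\le -I$ we get $f_{y_2}'(\lambda)\le -1$, so $f_{y_2}$ is strictly decreasing; the upper bound in Lemma~\ref{schur-est} gives $f_{y_2}(\lambda)\to -\infty$ as $\lambda\to\infty$. Hence $f_{y_2}$ has at most one zero in $[a,\infty)$, and exactly one precisely when $f_{y_2}(a)\ge 0$, in which case $p_{2,\theta}(y_2)$ is well-defined and attained in $[a,\infty)$; otherwise we set $p_{2,\theta}(y_2):=-\infty$. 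The equivalence
\[
\lambda\in\sigma_{\rm p}(\CS_\theta)\cap[a,\infty)\ \iff\ 0\in\sigma_{\rm p}\bigl(S_{2,\theta}(\lambda)\bigr),
\]
with coinciding multiplicities, is then furnished by Proposition~\ref{F_theta}. Combining this equivalence with the strict monotonicity and the general min-max principle yields the variational formula \eqref{schur-var}.

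Finally, to establish the counting formula $k_a=\dim\CL_{[0,\infty)}\bigl(S_{2,\theta}(a)\bigr)$, I would use the monotonicity once more: enumerating the eigenvalues of the self-adjoint operator $S_{2,\theta}(\lambda)$ in decreasing order as $\mu_1(\lambda)\ge\mu_2(\lambda)\ge\cdots$, each $\mu_j(\lambda)$ is continuous and satisfies $\mu_j'(\lambda)\le -1$ in the weak sense (from $S_{2,\theta}'(\lambda)\le -I$ and the min-max for $S_{2,\theta}(\lambda)$); thus every eigenvalue that is nonnegative at $\lambda=a$ crosses zero exactly once in $[a,\infty)$ and contributes one eigenvalue to $\sigma(\CS_\theta)\cap[a,\infty)$, while eigenvalues that are already negative at $\lambda=a$ stay negative. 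The main obstacle will be justifying the variational principle in this unbounded off-diagonal Schur-complement setting, and in particular handling the maximizing subspaces $\CL$ that contain directions $y_2$ for which $f_{y_2}(a)<0$; this is handled by the convention $p_{2,\theta}(y_2)=-\infty$ and by the standard argument from \cite{MR2068432} showing that such subspaces are suboptimal in the outer minimum of \eqref{schur-var}.
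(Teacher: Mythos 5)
Your proposal follows the same route as the paper: verify that the Schur complement $S_{2,\theta}$ satisfies the hypotheses of the min--max theorem of \cite{MR2068432} on subintervals of $\bigl(-\lambda_1(\theta),\infty\bigr)$ (constant domain $\CD(A_\infty)$, strict monotonicity $S_{2,\theta}'\le -I$ and $(S_{2,\theta}(\lambda)y_2,y_2)\to-\infty$ from Lemma~\ref{schur-est}), invoke Proposition~\ref{F_theta} to identify the eigenvalues of $\CS_\theta$ in $\bigl(-\lambda_1(\theta),\infty\bigr)$ with those of $S_{2,\theta}$, and then read off both the variational formula and the counting formula from that theorem. The argument is correct and matches the paper's proof; your additional remarks on the behaviour of the scalar functions $f_{y_2}$ and the role of the convention $p_{2,\theta}(y_2)=-\infty$ are just an expansion of what the citation already covers.
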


\begin{proof}
The Schur complement $S_{2,\theta}$ satisfies the assumptions of \cite[Theorem~2.1]{MR2068432} on every subinterval of $\big(-\!\lambda_1(\theta),\infty\big)$: $\CD(S_{2,\theta}(\la)) = \CD(A_\infty)$ is independent of $\la$, $S_{2,\theta}$ is strictly decreasing with $S_{2,\theta}'\le -I$, and $(S_{2,\theta}(\la)y_2,y_2) \to -\infty$,  $\la\to\infty$, for all $y_2 \in \CD(A_\infty)$ by Lemma \ref{schur-est}. 
Now all claims follow from the fact that the eigenvalues of $\CS_\theta$ in $\big( -\la_1(\theta),\infty \big)$ coincide with those of $S_{2,\theta}$
by Proposition \ref{F_theta} and from the variational principle in \cite[Theorem~2.1]{MR2068432} applied to $S_{2,\theta}$.
\end{proof}

\begin{remark}
Setting $a=0$ in Proposition \ref{july11}, we obtain another proof of Proposition \ref{july6}. In fact, 
the estimate \eqref{vp} follows from the variational principle \eqref{schur-var} if we observe that $S_{2,\theta}(\la) \le -A_\infty +\al^2 - \la$ for $\la\in[0,\infty)$ by Lemma \ref{schur-est} and hence
\[
  p_{2,\theta}(y_2) \le \frac {\big( (-A_\infty+\al^2)y_2,y_2\big)}{(y_2,y_2)}, \quad y_2 \in \CD(A_\infty),
\]
the right hand side being the zero of the function $\la \mapsto \big( (-A_\infty +\al^2 - \la )y_2,y_2 \big)$.
Further, since $S_{2,\theta}(0)=-A_\infty + \alpha^2$, Proposition \ref{july11} shows that the number $k_0$ of positive eigenvalues of $\CS_\theta$ is given by
\[
 k_0=\dim \CL_{[0,\infty)} \big( S_{2,\theta}(0) \big) = \dim \CL_{[0,\infty)} (-A_\infty + \alpha^2).
\]
\end{remark}

\section{Comparison of the two eigenvalue estimates}
\label{section5}

The two eigenvalue estimates obtained in Theorem \ref{!!!} and Theorem~\ref{strip} show, in particular, that every eigenvalue of the 
dynamo operator $\CA_\theta$ satisfies the inequalities
\[
 |\Im \la| \le \|\al'\|, \quad \Re \la \le \min\{ a_\theta, b_\theta \};
\]
here the uniform bound for the imaginary parts was proved in Theorem \ref{strip} and $a_\theta$ and $b_\theta=s_\theta+\|\al'\|$ are the right bounds for $\sigma(\CA_\theta)$ derived in Theorems~{\rm \ref{!!!}} and~{\rm \ref{strip}}, respectively. In this section we prove that, apart from a small bounded set of values of $\|\al\|$ and $\|\al'\|$, 
we always have $a_\theta < b_\theta$. Hence, in general, a combination of the two eigenvalue estimates from Theorems~{\rm \ref{!!!}} and~{\rm \ref{strip}} yields the best result.

First we present two auxiliary technical lemmas. The graphs of the functions introduced therein are displayed in Figure \ref{k-i} below.

\begin{lemma}
\label{very-hot}
The functions 
\begin{alignat*}{2}
  & k_1\!:\! [0,\!\sqrt{\la_1(\infty)}] \!\to\! [0,\infty), \ \ k_1(t) :=&&\!\! \frac{\la_1(\infty)\!+\!\la_1(\theta)}2 - \sqrt{ \!\bigg( \frac{\la_1(\infty) - \la_1(\theta)}2\bigg)^{\!\!2} \!\!+ \la_1(\theta)\,t^2}, \\ 
  & k_2\!:\!(0,\!\sqrt{\la_1(\infty)}] \!\to\! [0,\infty), \quad && k_2(t):= \frac{\la_1(\infty)\!-t^2}{t} \sqrt{\la_1(\theta)}, \\
  & k_3\!:\!(0,\!\sqrt{\la_1(\infty)+\la_1(\theta)}] \!\to\! [0,\infty), \quad &&  k_3(t):= \frac{\la_1(\infty)\!-\!t^2 + \la_1(\theta)}{t} 2 \sqrt{\la_1(\theta)},
\end{alignat*}
are continuous and strictly decreasing with
\begin{align*}
k_1(\sqrt{\la_1(\infty)})=k_2(\sqrt{\la_1(\infty)})=k_3(\sqrt{\la_1(\infty)+\la_1(\theta)})=0.
\end{align*}
They satisfy the inequalities
\begin{equation}
\label{k_i}
\begin{array}{l}
  k_1 < k_2  \ \text{ on } \ (0,\sqrt{\la_1(\infty)}), \\[1mm]
  k_2 < k_3 \ \text{ on } \ (0,\!\sqrt{\la_1(\infty)+\la_1(\theta)}),
\end{array}
\end{equation}
where we have set $k_2(t):=0$ for $t\in (\sqrt{\la_1(\infty)},\sqrt{\la_1(\infty)+\la_1(\theta)}]$.
Moreover, 
\begin{alignat*}{3}
  &b_\theta \le 0 && \ \iff \ \|\al\| \le \sqrt{\la_1(\infty)}, \ \ && \|\al'\| \le k_1(\|\al\|),\\
  &a_\theta \le 0 && \ \iff \ \|\al\| \le \sqrt{\la_1(\infty)}, \ \ && \|\al'\| \le k_2(\|\al\|),\\
  0 < \,& a_\theta \le \la_1(\theta) && \ \iff \ \|\al\| \le \sqrt{\la_1(\infty)+\la_1(\theta)}, \ \ && \|\al'\| \le k_3(\|\al\|).
\end{alignat*}
\end{lemma}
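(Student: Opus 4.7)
The plan is to handle the three ingredients of the lemma separately: (a) continuity, strict monotonicity, and boundary values of $k_1,k_2,k_3$; (b) the strict inequalities \eqref{k_i}; (c) the three iff-characterizations in terms of $a_\theta$ and $b_\theta=s_\theta+\|\al'\|$.

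For (a) I would differentiate. One obtains $k_1'(t)=-\la_1(\theta)t/\sqrt{((\la_1(\infty)-\la_1(\theta))/2)^2+\la_1(\theta)t^2}$, $k_2'(t)=-\sqrt{\la_1(\theta)}(\la_1(\infty)/t^2+1)$ and $k_3'(t)=-2\sqrt{\la_1(\theta)}((\la_1(\infty)+\la_1(\theta))/t^2+1)$, each strictly negative on the interior of its domain. The boundary values follow by substitution; for $k_1$ one uses that at $t=\sqrt{\la_1(\infty)}$ the radicand simplifies to $((\la_1(\infty)+\la_1(\theta))/2)^2$, so the square root cancels the leading term.

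For (b), the difference $k_3(t)-k_2(t)=\sqrt{\la_1(\theta)}(\la_1(\infty)+2\la_1(\theta)-t^2)/t$ is positive on $(0,\sqrt{\la_1(\infty)+\la_1(\theta)})$ because $t^2<\la_1(\infty)+\la_1(\theta)<\la_1(\infty)+2\la_1(\theta)$; on $[\sqrt{\la_1(\infty)},\sqrt{\la_1(\infty)+\la_1(\theta)})$ we just use the convention $k_2=0<k_3$. For $k_1<k_2$ the cleanest route is to rationalise:
\[
 k_1(t)=\frac{\la_1(\theta)(\la_1(\infty)-t^2)}{A+\sqrt{B^2+\la_1(\theta)t^2}},\quad A:=\tfrac{\la_1(\infty)+\la_1(\theta)}{2},\ B:=\tfrac{\la_1(\infty)-\la_1(\theta)}{2}.
\]
Dividing by $k_2(t)=(\la_1(\infty)-t^2)\sqrt{\la_1(\theta)}/t$ yields $k_1(t)/k_2(t)=t\sqrt{\la_1(\theta)}/(A+\sqrt{B^2+\la_1(\theta)t^2})$, so $k_1<k_2$ reduces to $t\sqrt{\la_1(\theta)}<A$. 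On $(0,\sqrt{\la_1(\infty)})$ this follows from AM--GM: $t\sqrt{\la_1(\theta)}<\sqrt{\la_1(\infty)\la_1(\theta)}\le(\la_1(\infty)+\la_1(\theta))/2=A$.

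For (c), I would read off the three equivalences from Theorem \ref{!!!} and Proposition \ref{stheta}. Since $b_\theta=s_\theta+\|\al'\|$ with $\|\al'\|\ge 0$, the inequality $b_\theta\le 0$ forces $s_\theta\le 0$, which by Remark \ref{siginf}(i) is equivalent to $\|\al\|\le\sqrt{\la_1(\infty)}$; in that regime the first line of Proposition \ref{stheta} gives $-s_\theta=k_1(\|\al\|)$, so $b_\theta\le 0\iff\|\al'\|\le k_1(\|\al\|)$. The bound $a_\theta\le 0$ holds precisely in case~(i) of Theorem \ref{!!!}, whose condition $\|\al\|^2+\|\al\|\|\al'\|/\sqrt{\la_1(\theta)}\le\la_1(\infty)$ rearranges to the two stated conditions on $\|\al\|$ and $\|\al'\|$. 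Finally, $0<a_\theta\le\la_1(\theta)$ is exactly case~(ii) of Theorem \ref{!!!}; its upper bound inequality $\|\al\|^2+\|\al\|\|\al'\|/(2\sqrt{\la_1(\theta)})\le\la_1(\infty)+\la_1(\theta)$ rearranges to $\|\al\|\le\sqrt{\la_1(\infty)+\la_1(\theta)}$ together with $\|\al'\|\le k_3(\|\al\|)$. The principal obstacle is the inequality $k_1<k_2$: without the conjugation step one is comparing two quantities each carrying a square root, and it is the AM--GM estimate that makes the comparison collapse.
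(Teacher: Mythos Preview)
Your proposal is correct and follows essentially the same approach as the paper, which simply declares parts (a) and (b) ``easy to check'' and sketches (c) by pointing to Theorem~\ref{!!!} and Proposition~\ref{stheta}; you supply the explicit computations the paper omits. Your conjugation-plus-AM--GM argument for $k_1<k_2$ is a clean way to handle what the paper leaves implicit.
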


\begin{proof}
The claims for the functions $k_1$, $k_2$, and $k_3$ are easy to check. The last two equivalences follow if we use the formulas for $a_\theta$ in Theorem \ref{!!!} and solve the corresponding inequalities on the left hand sides for $\|\al'\|$. The condition $b_\theta = s_\theta + \|\al'\| \le 0$ is satisfied if and only if $s_\theta\le 0$ and $\|\al'\|<-s_\theta$. It remains to use the respective formula for $s_\theta$ from Proposition \ref{stheta} to obtain the first~equivalence. 
\end{proof}

\begin{lemma}
\label{aux}
Define two functions $k_4^\pm$ implicitly by the equation
\[
  \sqrt{\!\Bigl( \frac{\la_1(\infty)\!-\!\la_1(\theta)}2 \Bigr)^{\!2} \!\!\!+ \!\la_1(\theta) t^2} + k_4^\pm(t) =
  \frac{t^2\!}2 + \sqrt{\!\Bigl( \frac{\la_1(\infty)\!-\!\la_1(\theta)\!-\!t^2\!}2 \Bigr)^{\!2} \!\!\!+ \!\sqrt{\la_1(\theta)} \, t \,k_4^\pm(t)}
\]
for $t\in [0,\sqrt{\la_1(\infty)}]$ and let 
\begin{align*}
  &k_5: \left[\!\sqrt{\la_1(\infty)}, \dfrac{\sqrt{\la_1(\theta)}}2 + \sqrt{\!\la_1(\infty)-\dfrac{3\la_1(\theta)}4} \right] \to \big[0,\infty\big), \\ 
  &k_5(t) := \la_1(\infty)-\la_1(\theta) + t \sqrt{\la_1(\theta)} - t^2. 
\end{align*}
Then the graphs of $\,k_4^-\!$, $k_4^+\!$, and $\,k_5$ form a continuous curve $\Gamma^{\rm ex}$ connecting the~points 
\begin{align*}
 C_1 &:= \Big(\sqrt{\la_1(\infty)},0\Big), \quad C_2:=\Big(\sqrt{\la_1(\infty)},\sqrt{\la_1(\infty)\la_1(\theta)}-\la_1(\theta)\Big),\\
 C_3 &:=\Big( \!\frac{\sqrt{\la_1(\theta)}}2\!+\!\sqrt{\la_1(\infty)\!-\!\dfrac{3\la_1(\theta)}4},0 \Big).
\end{align*}
Let $\Delta^{\rm ex}$ be the open bounded set surrounded by the curve $\Gamma^{\rm ex}$ and the segment $\overline{C_1C_3}$ on the ordinate axis. Then 
\begin{align*}
 \Delta^{\rm ex} \subset \big\{ \big(\|\al\|,\|\al'\|\big) : \, k_2(\|\al\|) < \|\al'\| < k_3(\|\al\|), \ \|\al'\| < \sqrt{\la_1(\infty)\la_1(\theta)}-\la_1(\theta) \big\}
\end{align*}
and, for $0 < a_\theta \le \la_1(\theta)$, we have 
\begin{align*}
 b_\theta = a_\theta \ &\iff \ \big(\|\al\|,\|\al'\|\big) \in \Gamma^{\rm ex}, \\
 b_\theta < a_\theta \ &\iff \ \big(\|\al\|,\|\al'\|\big) \in \Delta^{\rm ex}.
\end{align*}
\end{lemma}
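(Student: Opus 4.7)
The strategy is to express $a_\theta = b_\theta$ as an explicit algebraic relation on the pair $(t,u) = (\|\al\|, \|\al'\|)$, using the formula of case~(ii) of Theorem~\ref{!!!} for $a_\theta$ together with the piecewise formula for $s_\theta$ from Proposition~\ref{stheta}. The break in $s_\theta$ at $t = \sqrt{\la_1(\infty)}$ splits the locus $\{a_\theta = b_\theta\}$ into two pieces, which will be identified with the graphs of $k_4^\pm$ on the left and of $k_5$ on the right.

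For $t \le \sqrt{\la_1(\infty)}$, substituting the square-root formula for $s_\theta$ into $a_\theta = b_\theta$ and collecting the two square-root terms on opposite sides yields precisely the implicit equation in the lemma statement. Squaring once converts it into a quadratic in $u$ with $t$-dependent coefficients; I will check that the discriminant is nonnegative exactly on a subinterval $[t^*, \sqrt{\la_1(\infty)}]$, where the two nonnegative roots coincide at $t^*$ and separate as $t$ increases, producing the branches $k_4^-$ (lower) and $k_4^+$ (upper). A direct evaluation at $t = \sqrt{\la_1(\infty)}$ shows the quadratic factors as $u\bigl(u + \la_1(\theta) - \sqrt{\la_1(\theta)\la_1(\infty)}\bigr) = 0$, with roots $0$ and $\sqrt{\la_1(\infty)\la_1(\theta)} - \la_1(\theta)$, corresponding to the points $C_1$ and $C_2$.

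For $t \ge \sqrt{\la_1(\infty)}$, substituting $s_\theta = -\la_1(\infty) + t^2$ into $a_\theta = b_\theta$, isolating the remaining square root, and squaring, one observes that the quadratic-in-$u$ terms cancel because the two instances of $\bigl((\la_1(\infty) - \la_1(\theta) - t^2)/2\bigr)^2$ that arise coincide. What remains is a linear equation with unique solution $u = k_5(t)$, whose zero gives $C_3$; at the junction $t = \sqrt{\la_1(\infty)}$ one verifies directly that $k_5\bigl(\sqrt{\la_1(\infty)}\bigr) = \sqrt{\la_1(\infty)\la_1(\theta)} - \la_1(\theta) = k_4^+\bigl(\sqrt{\la_1(\infty)}\bigr)$, so the three pieces fit together into a single continuous curve $\Gamma^{\rm ex}$ passing through $C_1$, $C_2$, $C_3$.

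Finally, the inclusion $\Delta^{\rm ex} \subset \{(t,u) : k_2(t) < u < k_3(t), \ u < \sqrt{\la_1(\infty)\la_1(\theta)} - \la_1(\theta)\}$ follows from Lemma~\ref{very-hot}, which characterises the condition $0 < a_\theta \le \la_1(\theta)$ (i.e.\ case~(ii) of Theorem~\ref{!!!}) exactly by $k_2(t) < u \le k_3(t)$; the upper bound on $u$ is the ordinate of the highest point $C_2$ of $\Gamma^{\rm ex}$. The sign statement $b_\theta < a_\theta$ on $\Delta^{\rm ex}$ follows by continuity: $a_\theta - b_\theta$ is continuous on $\Delta^{\rm ex}$ and does not vanish there (since, by construction, $\Gamma^{\rm ex}$ is the zero set of $a_\theta - b_\theta$ in the interior of the case-(ii) region), so its sign is constant, and a single test point, taken just above the segment $\overline{C_1 C_3}$ inside $\Delta^{\rm ex}$, fixes it to be positive. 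The main technical obstacle is the algebraic bookkeeping in the first regime: identifying the branch structure of the quadratic, ensuring that the auxiliary sign condition needed for the squaring step to be faithful is met on both branches, and verifying continuity of $\Gamma^{\rm ex}$ across $t = \sqrt{\la_1(\infty)}$.
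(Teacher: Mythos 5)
Your proof takes essentially the same route as the paper: split at $\|\al\|=\sqrt{\la_1(\infty)}$ (where the formula for $s_\theta$ changes), substitute the case-(ii) expression for $a_\theta$ and $b_\theta=s_\theta+\|\al'\|$ into $a_\theta=b_\theta$, read off the quadratic-in-$u$ branches $k_4^\pm$ on the left, the linear relation $k_5$ on the right, and check the end-point identifications $C_1$, $C_2$, $C_3$. The only stylistic deviation is that you fix the sign of $a_\theta-b_\theta$ on $\Delta^{\rm ex}$ by a connectedness-plus-test-point argument, while the paper obtains the inequality $b_\theta\le a_\theta\iff\|\al'\|\le k_5(\|\al\|)$ (resp.\ $k_4^-\le\|\al'\|\le k_4^+$) directly from the same algebra; both are fine.
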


\begin{proof}
It is easy to see that the graph of the function $k_5$, which lies to the right of the vertical line $\|\al\|=\sqrt{\la_1(\infty)}$, 
is strictly decreasing from the point $C_2$ on this line to $C_3$, the zero of $k_5$. Moreover, if $0 < a_\theta \le \la_1(\theta)$ and 
$\|\al\| \ge \sqrt{\la_1(\infty)}$, the formulas for $a_\theta$ in Theorem \ref{!!!} (ii) and $b_\theta=s_\theta+\|\al'\|=-\la_1(\infty)+\|\al\|^2+\|\al'\|$ (see Proposition~\ref{stheta})
show that $k_5 \ge 0$ on the interval where it is defined,
\[
    b_\theta \le a_\theta \ \iff \ 
                                 \|\al'\| \le k_5(\|\al\|),
\]
and that equality holds on the graph of $k_5$.

By elementary calculations (see Remark \ref{august20} below), one can show that the graphs of the implicitly defined functions $k_4^-$ and $k_4^+$ form an arc from $C_1$ to $C_2$
lying to the left of the vertical line $\|\al\|=\sqrt{\la_1(\infty)}$. Furthermore, 
if $0 < a_\theta \le \la_1(\theta)$ and $\|\al\| \le \sqrt{\la_1(\infty)}$, the formulas for $a_\theta$ in Theorem \ref{!!!} (ii)
and for $b_\theta=s_\theta+\|\al'\|$ with $s_\theta$ as in Proposition \ref{stheta} show that
\[
    b_\theta \le a_\theta \ \iff \ k_4^-(\|\al\|) \le  \|\al'\| \le k_4^+(\|\al\|)
\]
and that equality holds on the graphs of $k_4^\pm$.
\end{proof}

\begin{remark}
\label{august20}
The two functions $k_4^\pm$ in Lemma \ref{aux} can be calculated explicitly:
\begin{align*}
 &k_4^\pm(t)= \frac{t^2}2 + \frac t2 \sqrt{\la_1(\theta)} - \sqrt{ \Bigl( \frac{\la_1(\infty)-\la_1(\theta)}2 \Bigr)^2 \!\! + \la_1(\theta) t^2 }\\
 &             \pm \!\sqrt{ \!\Bigl( \frac{\la_1(\infty)\!-\!\la_1(\theta)\!-\!t^2\!}2 \Bigr)^{\!2} \!\!\!+ \frac{t^2\!}4 \la_1(\theta) - t \sqrt{\la_1(\theta)} 
                            \Bigl( \sqrt{\!\Bigl( \frac{\la_1(\infty)\!-\!\la_1(\theta)}2 \Bigr)^{\!2} \!\!\!+ \la_1(\theta) t^2 } - \frac{t^2}2 \!\Bigr)\! } 
\end{align*}
for $t\in [0,\sqrt{\la_1(\infty)}]$ such that the term under the last square root is positive. The set of these $t$ is a (small) interval of the form $[\mu,\sqrt{\la_1(\infty)}]$. 
A formula for $\mu$ may be found, e.g., with MAPLE, but it is extremely involved; a lower bound for $\mu$ is the point where $k_2$
attains the value $\sqrt{\la_1(\theta)\la_1(\infty)}-\la_1(\theta)$ (the height of~$\Delta^{\rm ex}$):
\begin{equation}
\label{mu}
  \mu \ge- \dfrac{\sqrt{\la_1(\infty)} \!+\! \sqrt{\la_1(\theta)}}2+\sqrt{\Bigl( \dfrac{\sqrt{\la_1(\infty)} \!+\! \sqrt{\la_1(\theta)}}2\Bigr)^2 \!\!\!+\la_1(\infty)}. 
\end{equation}
\end{remark}


%

\begin{proposition}
\label{compare2}
Let $\theta\in[0,\infty]$. Then the right bounds $a_\theta$ and $b_\theta$ for $\sigma(\CA_\theta)$ established 
in Theorems {\rm \ref{!!!}} and {\rm \ref{strip}}, respectively, satisfy
\begin{align*}
 a_\theta < b_\theta \ &\iff \  \big(\|\al\|,\|\al'\|\big) \in \big( [0,\infty) \times [0,\infty) \big) \setminus \Delta^{\rm ex}, \\
 a_\theta = b_\theta \ &\iff \ \big(\|\al\|,\|\al'\|\big) \in \Gamma^{\rm ex},
\end{align*}
where the bounded set $\Delta^{\rm ex}\!$ and the curve $\Gamma^{\rm ex}\!$ are as in Lemma {\rm \ref{aux}}. Moreover, 
$$a_\theta < 0 < b_\theta \ \iff \ \|\al\|^2 < \la_1(\infty), \  \ k_1(\|\al\|) < \|\al'\| <  k_2(\|\al\|);$$
the last condition can be written equivalently as
\[
   \frac{\la_1(\infty)\!+\!\la_1(\theta)}2 - \sqrt{ \bigg( \frac{\la_1(\infty) - \la_1(\theta)}2\bigg)^{\!2} \!\!\!+ \la_1(\theta) \|\al\|^2} < \|\al'\| < \frac{\la_1(\infty)\!-\!\|\al\|^2}{\|\al\|} \sqrt{\la_1(\theta)}.
\]
\end{proposition}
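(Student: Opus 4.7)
The plan is to split the analysis according to the three cases of Theorem~\ref{!!!} for $a_\theta$, which by Lemma~\ref{very-hot} correspond to three regions in the $(\|\al\|,\|\al'\|)$-plane separated by the graphs of $k_2$ and $k_3$. In each region, $a_\theta$ is available in closed form (implicit in Case~(iii)) and $b_\theta = s_\theta + \|\al'\|$ with $s_\theta$ given explicitly by Proposition~\ref{stheta} (itself split into two sub-cases at $\|\al\|^2 = \la_1(\infty)$), so the comparison of $a_\theta$ and $b_\theta$ reduces to direct algebra.

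In Case~(i) one has $\|\al\|^2 \le \la_1(\infty)$, and the formulas for $a_\theta$ and for $s_\theta$ differ only in a single summand under a square root. Setting $A := \bigl(\tfrac{\la_1(\infty)-\la_1(\theta)}{2}\bigr)^{\!2} + \la_1(\theta)\|\al\|^2$ and $B := \|\al\|\,\|\al'\|\sqrt{\la_1(\theta)}$ and rationalising $\sqrt{A+B}-\sqrt{A} = B/(\sqrt{A+B}+\sqrt{A})$, one obtains
\[
  b_\theta - a_\theta \;=\; \|\al'\|\left(1 - \frac{\|\al\|\sqrt{\la_1(\theta)}}{\sqrt{A+B}+\sqrt{A}}\right).
\]
Proposition~\ref{A_theta}~iii) forces $\la_1(\theta)<\la_1(\infty)$, hence $\sqrt{A}>0$, and combined with the elementary bound $\sqrt{A+B}\ge \|\al\|\sqrt{\la_1(\theta)}$ this makes the bracketed factor strictly positive; thus $a_\theta<b_\theta$ throughout Case~(i) whenever $\|\al'\|>0$, consistently with the region lying outside $\overline{\Delta^{\rm ex}}$.

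In Case~(ii) the characterisation is precisely Lemma~\ref{aux}: $b_\theta<a_\theta$ iff $(\|\al\|,\|\al'\|)\in\Delta^{\rm ex}$, and $b_\theta=a_\theta$ iff the point lies on $\Gamma^{\rm ex}$. In Case~(iii) one uses the defining equation $2\sqrt{a_\theta}(a_\theta+\la_1(\infty)-\|\al\|^2)=\|\al\|\,\|\al'\|$ from Theorem~\ref{!!!}~(iii). When $\|\al\|^2\ge \la_1(\infty)$ this rewrites as $a_\theta-s_\theta = \|\al\|\,\|\al'\|/(2\sqrt{a_\theta})$ and hence
\[
  b_\theta - a_\theta \;=\; \|\al'\|\bigl(1 - \|\al\|/(2\sqrt{a_\theta})\bigr),
\]
whose sign one pins down from $a_\theta>\la_1(\theta)$ together with the defining inequality of Case~(iii). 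The sub-case $\|\al\|^2<\la_1(\infty)$ is treated analogously using the other branch of $s_\theta$; in both sub-cases the point $(\|\al\|,\|\al'\|)$ lies outside $\Delta^{\rm ex}$, so the first two claims follow.

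The last statement is obtained by intersecting the equivalences of Lemma~\ref{very-hot}: $a_\theta<0$ iff $\|\al\|<\sqrt{\la_1(\infty)}$ and $\|\al'\|<k_2(\|\al\|)$, while $b_\theta>0$ iff $\|\al\|>\sqrt{\la_1(\infty)}$ or $\|\al'\|>k_1(\|\al\|)$; the constraint $\|\al\|<\sqrt{\la_1(\infty)}$ rules out the first alternative for $b_\theta$, leaving $k_1(\|\al\|)<\|\al'\|<k_2(\|\al\|)$, and inserting the explicit forms of $k_1$ and $k_2$ produces the display at the end of the proposition. The main obstacle is Case~(iii), where the implicit definition of $a_\theta$ forces a careful tracking of the sign of $1-\|\al\|/(2\sqrt{a_\theta})$ on each side of the graph of $k_3$, so as to verify that the locus $\{a_\theta=b_\theta\}$ does not re-enter the Case~(iii) region beyond the meeting point $C_3$ and hence that $\Gamma^{\rm ex}$ together with the segment $\overline{C_1C_3}$ forms the full boundary of $\{a_\theta<b_\theta\}$.
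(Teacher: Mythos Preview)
Your approach is essentially the same as the paper's: split according to the three cases of Theorem~\ref{!!!}, invoke Lemma~\ref{aux} for Case~(ii), invoke Lemma~\ref{very-hot} for the last statement, and argue by direct computation that $a_\theta<b_\theta$ in Cases~(i) and~(iii). The paper itself dispatches Cases~(i) and~(iii) with the phrase ``elementary but lengthy and tedious calculations'', so your explicit rationalisation in Case~(i) actually supplies more than the paper does, and your Case~(iii) setup (reducing to the sign of $1-\|\al\|/(2\sqrt{a_\theta})$) is at the same level of completeness as the paper's sketch.

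One small correction: your justification of $\sqrt{A}>0$ via ``Proposition~\ref{A_theta}~iii) forces $\la_1(\theta)<\la_1(\infty)$'' fails when $\theta=\infty$, where $\la_1(\theta)=\la_1(\infty)$. The conclusion $\sqrt{A}>0$ still holds, but via $A\ge\la_1(\theta)\|\al\|^2>0$ (since $\al\not\equiv 0$), which you should use instead.
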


Figure \ref{k-i} illustrates the sets occurring in Proposition~\ref{compare2}:
the exceptional set $\Delta^{\rm ex}$ where $b_\theta < a_\theta$ is the highlighted small bounded set between the graphs of $k_2$ and $k_3$;
the set where $a_\theta < 0 < b_\theta$ is the highlighted unbounded set enclosed by the graphs of $k_1$ and $k_2$. 

\begin{figure}[h]
\parbox{7cm}
{\epsfysize=7cm
\epsfbox{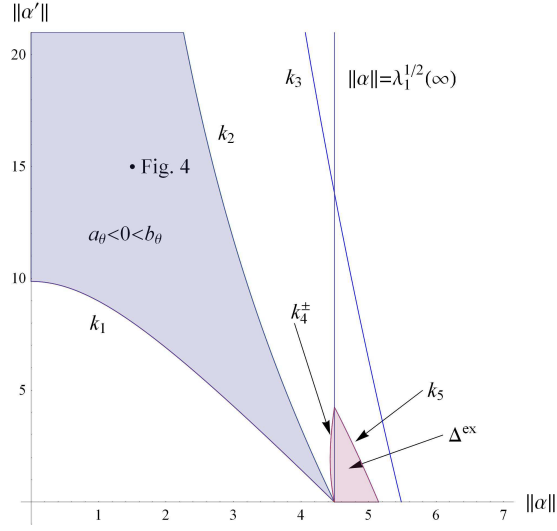}}
\caption{\label{k-i} The graphs of the functions $k_1$, $k_2$, $k_3$, $k_4^\pm$, and $k_5$.}
\end{figure} 

\begin{remark} 
Proposition \ref{compare2} shows, in particular, that we always have $a_\theta < b_\theta$ if one of the following \vspace{2mm} holds:
\begin{hlist}
 \item[i)] $\|\al'\|> \sqrt{\la_1(\infty)\la_1(\theta)}-\la_1(\theta)$,\vspace{1mm}
 \item[ii)] $\|\al\| < - \dfrac{\sqrt{\la_1(\infty)} \!+\! \sqrt{\la_1(\theta)}}2+\sqrt{\Bigl( \dfrac{\sqrt{\la_1(\infty)} \!+\! \sqrt{\la_1(\theta)}}2\Bigr)^2 \!\!\!+\la_1(\infty)}$,
 \item[iii)] $\|\al\| >  \dfrac{\sqrt{\la_1(\theta)}}2 + \sqrt{\la_1(\infty)-\dfrac{3\la_1(\theta)}4}$.
\vspace{2mm}
\end{hlist}
This follows from Lemma~\ref{aux} if we observe that the lower bound in i) is the maximal value of $\|\al'\|$ in $\Delta^{\rm ex}$, the upper bound in ii) is the lower bound for the left end-point
$\mu$ of the domain of definition of the functions $k_4^\pm$ (see Remark \ref{august20}), and  the lower bound in iii) is the right end-point of the domain of definition of the function $k_5$.
\end{remark}

{\bf Proof of Proposition \ref{compare2}.}
In order to compare $a_\theta$ and $b_\theta$, we have to distinguish the cases (i), (ii), and (iii)  in Theorem \ref{!!!} for $a_\theta$  
and the two cases $\|\al\|\le \sqrt{\la_1(\infty)}$ and $\|\al\| > \sqrt{\la_1(\infty)}$ for $b_\theta=s_\theta+\|\al'\|$ according to the definition of $s_\theta$ in Proposition \ref{stheta}.

Elementary but lengthy and tedious calculations show that in the cases $a_\theta \le 0$ (Theorem \ref{!!!} (i)) and
$a_\theta > \la_1(\theta)$ (Theorem \ref{!!!} (iii)), the equation $a_\theta=b_\theta$ has no solution 
and $a_\theta < b_\theta$. Moreover, Lemma \ref{very-hot} implies that the cases $a_\theta<0$ and $b_\theta>0$ appear simultaneously 
if and only if $\|\al\|< \sqrt{\la_1(\infty)}$ and  $k_1(\|\al\|) < \|\al'\| <  k_2(\|\al\|)$ since we have $k_1 < k_2$ by \eqref{k_i}.

In the case $0< a_\theta \le \la_1(\theta)$ (Theorem \ref{!!!} (ii)), Lemma \ref{aux} shows that
$a_\theta=b_\theta$ on the curve $\Gamma^{\rm ex}$ and $a_\theta < b_\theta$ if and only if $(\|\al\|,\|\al'\|) \notin \Delta^{\rm ex}$.  \qed

\vspace{2mm}

\begin{figure}[h]
\vspace{1mm}
\parbox{6cm}
{\epsfysize=6cm
\epsfbox{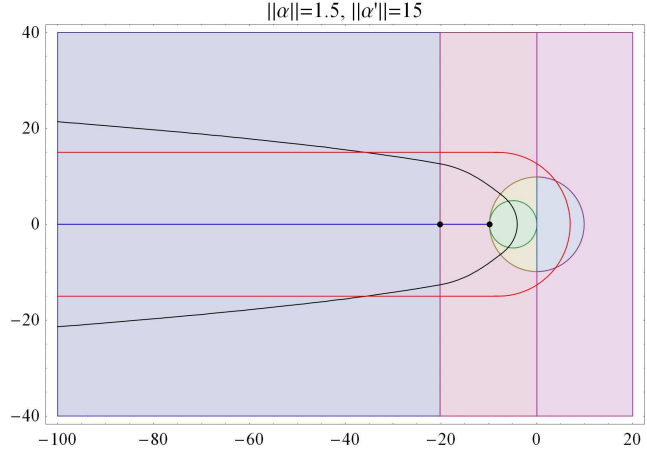}}
\caption{\label{stab} Spectral enclosures of Theorems \ref{!!!} and \ref{strip} ($a_\theta \!<\! 0 \!<\! b_\theta$).}
\end{figure} 

In Figure \ref{stab} the boundaries of the two spectral enclosures from Theorem \ref{!!!} and Theorem \ref{strip} are displayed together; 
the set $\Sigma_\theta$ from Theorem \ref{!!!}  is the set with unbounded imaginary part and smaller real part $\le a_\theta$, the enclosing set from Theorem~\ref{strip} 
is the set with bounded imaginary part and larger real part $\le b_\theta$. Here the values $\|\al\| = 1.5$ and $\|\al'\|=15$ are chosen such that $a_\theta < 0 < b_\theta$; the corresponding point 
$(\|\al\|,\|\al'\|)=(1.5,15)$ is marked in Figure \ref{k-i} by a black dot.
 
Proposition \ref{compare2} shows that, in general, the upper bound of Theorem \ref{!!!} improves the estimate of Theorem~\ref{strip}; 
only for values $(\|\al\|,\|\al'\|)\in \Delta^{\rm ex}$, the enclosure of Theorem~\ref{strip} is better. The following corollary is a direct consequence
of all these results.

\begin{corollary}
Let $\theta \in [0,\infty]$, let $a_\theta$, $b_\theta$ be the bounds established in Theorems~\ref{!!!} and \ref{strip}, respectively, and let $\Delta^{\rm ex} \subset [0,\infty) \times [0,\infty)$ be the
bounded set defined in~Lemma \ref{aux}. Then every eigenvalue $\la$ of the dynamo operator $\CA_\theta$  satisfies
\[
 | \Im \la | \le \|\al'\|, \quad \Re \la \le \left\{ \begin{array}{ll} a_\theta \ & \mbox{ if } \ \big(\|\al\|,\|\al'\|\big) \in \big([0,\infty) \times [0,\infty)\big) \setminus \Delta^{\rm ex}, \\[1mm] 
                                                                       b_\theta \ & \mbox{ if } \ \big(\|\al\|,\|\al'\|\big) \in \Delta^{\rm ex}. 
                                                     \end{array} \right.
\]
\end{corollary}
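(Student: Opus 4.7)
The plan is to assemble this corollary directly from the three preceding results: Theorem \ref{!!!}, Theorem \ref{strip}, and Proposition \ref{compare2}. No new calculation is needed; the corollary is a packaging statement.

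First I would invoke Theorem \ref{strip} to get two pieces of information at once: the uniform horizontal strip $|\Im\la|\le\|\al'\|$ for every $\la\in\sigma(\CA_\theta)$, and the bound $\Re\la\le s_\theta+\|\al'\|=b_\theta$. The strip bound is unconditional and provides the imaginary-part inequality in the corollary. Second, I would invoke Theorem \ref{!!!} to get the alternative real-part bound $\Re\la\le a_\theta$, which is valid for every eigenvalue as well.

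Thus every eigenvalue satisfies $\Re\la\le\min\{a_\theta,b_\theta\}$, and the only remaining task is to identify which of $a_\theta$, $b_\theta$ is the smaller one in terms of $(\|\al\|,\|\al'\|)$. This is exactly what Proposition~\ref{compare2} provides: on the complement of $\Delta^{\rm ex}$ one has $a_\theta\le b_\theta$ (with equality on the curve $\Gamma^{\rm ex}$), while on $\Delta^{\rm ex}$ one has $b_\theta<a_\theta$. Substituting this dichotomy into $\Re\la\le\min\{a_\theta,b_\theta\}$ yields the case split stated in the corollary.

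There is no real obstacle; the only point worth making explicit is that, since $\Gamma^{\rm ex}\subset[0,\infty)\times[0,\infty)\setminus\Delta^{\rm ex}$, the boundary case $a_\theta=b_\theta$ is absorbed into the first alternative, so the case split is well-posed and covers all $(\|\al\|,\|\al'\|)\in[0,\infty)\times[0,\infty)$. The proof therefore reduces to two lines referencing Theorems~\ref{!!!} and~\ref{strip} for the eigenvalue bounds and Proposition~\ref{compare2} for choosing the sharper of the two bounds on the real part.
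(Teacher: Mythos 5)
Your proof is correct and matches the paper's approach: the paper offers no explicit proof, remarking only that the corollary ``is a direct consequence of all these results,'' meaning Theorems~\ref{!!!} and~\ref{strip} together with Proposition~\ref{compare2}. You have spelled this out accurately — Theorem~\ref{strip} supplies both the unconditional strip bound $|\Im\la|\le\|\al'\|$ and $\Re\la\le b_\theta$, Theorem~\ref{!!!} supplies $\Re\la\le a_\theta$, and Proposition~\ref{compare2} tells you which of the two is the smaller (hence sharper) real-part bound depending on whether $(\|\al\|,\|\al'\|)$ lies in $\Delta^{\rm ex}$ or its complement — and your observation that $\Gamma^{\rm ex}$ falls in the complement of the open set $\Delta^{\rm ex}$ correctly handles the boundary case $a_\theta=b_\theta$.
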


The case that the bound for the real part is less than $0$ is of particular interest from the physical point of view since the onset of the dynamo effect requires supercritical
modes, that is, eigenvalues with real part greater than $0$.

Proposition \ref{compare2} shows that, if $a_\theta<0$, then we always have $a_\theta< b_\theta$.
In particular, this shows that Corollary \ref{stable} is stronger than Corollary \ref{stable2}; for the special case $\theta=l$ this gives:

\vspace{2mm}

\noindent
{\bf Anti-dynamo theorem.}
The dynamo operator $\CA_{l}$ has no spectrum in the closed right half-plane if 
\begin{equation}
\label{stable1} 
  \|\alpha\|^2 + \dfrac{\|\al\|\|\alpha'\|}{\sqrt{\lambda_1(l)}} < \lambda_1(\infty).
\end{equation}


Note that, since the above result has been obtained by operator theoretic estimates,
a violation of the condition \eqref{stable1} 
provides only a necessary, not a sufficient, condition for the existence of supercritical 
dynamo~regimes.

\begin{remark}
By Lemma \ref{may}, we know that $\la_1(l)$ and $\la_1(\infty)$ are the first non-zero zeros of the Bessel functions $J_{l-1/2}(\sqrt{\la})$ and $J_{l+1/2}(\sqrt{\la})$, respectively.
If we indicate this dependence on the parameter $l\in \N$ for a moment and write $\la_1(l)=\la_1(l;l)$ and $\la_1(\infty)=\la_1(\infty,l)$, we see that
$\la_1(l_1;l_1) < \la_1(l_2;l_2)$ and $\la_1(\infty;l_1) < \la_1(\infty;l_2)$~for $l_1,l_2 \in \N$ with $l_1 < l_2$ (see \cite[9.5.2]{abramowitz+stegun}).
Therefore, if condition \eqref{stable1} is satisfied for $l_1\in\N$, then it is also satis\-fied for $l_2 \in \N$ with $l_2 > l_1$.

This provides an explanation of the numerical observation that, for increasing $\|\alpha\|$ and $\|\alpha'\|$, criticality usually starts from dipole modes ($l=1$) and quadrupole and higher-degree modes ($l>1$) become supercritical only for larger values of $\|\alpha\|$ and~$\|\alpha'\|$ (see \cite{stefani-2003-67}).
\end{remark}

\section{Examples}
\label{section-last}

In this section we illustrate our results by applying them to the physical dynamo problem \eqref{diffsyst}, \eqref{boundcond} and to the idealized dynamo problem \eqref{diffsyst}, \eqref{boundcond-id}. 
Here we have to set $\theta=l$ and $\theta=\infty$, respectively, in the previous statements.

The simplest case of constant $\alpha$, where all eigenvalues are real, already shows that the physical problem is far more complex than the idealized problem: for the latter explicit formulas for the eigenvalues are known, whereas the eigenvalues of the former are given only implicitly as the zeros of an equation involving 4 different Bessel~functions. 

\begin{example}
\label{id-ex}
Let $\al\equiv \alpha_0$ be constant and consider the idealized dynamo problem \eqref{diffsyst}, \eqref{boundcond-id} for which $\theta=\infty$.
%
The eigenvalues of this problem, or equivalently of the operator $\CA_\infty$, can be calculated explicitly (see, e.g., \cite{MR2252699}).
In fact, since $\alpha$ is constant and hence $\alpha'\equiv 0$, the operator $\CA_\infty$ is given  by
\[
 \CA_\infty=\matrix{rc}{-A_\infty & \alpha_0 \\ \al_0 A_\infty & - A_\infty}, \quad \CD(\CA_\infty) = \CD(A_\infty) \oplus \CD(A_\infty).
\]
A point $\la\in\C$ is an eigenvalue of $\CA_\infty$ if and only if there is a $y=(y_1,y_2)^{\rm t} \in \CD(\CA_\infty)$, $y\ne 0$, such that
\begin{align*}
 (-A_\infty-\la) y_1 + \al_0 y_2 = 0,\\
 \al_0 A_\infty y_1 + (-A_\infty-\la) y_2 = 0.
\end{align*}
Since $y_2 \in \CD(A_\infty)$, the first equation yields $y_1 \in \CD(A_\infty^2)$ and thus 
the eigen\-value equations are equivalent to the relations
\begin{align*}
   \big( (A_\infty + \la)^2 -\al_0^2 A_\infty \big) y_1 = 0, \quad \al_0 y_2 = (A_\infty+\la) y_1.
\end{align*}
Hence $\la\in \sigma_{\rm p}(\CA_\infty)$ if and only if $0\in \sigma_{\rm p}\big((A_\infty + \la)^2 -\al_0^2 A_\infty\big)$. 
Now the spectral mapping theorem shows that 
\[
  \sigma(\CA_\infty) = \big\{ \!-\!\la_n(\infty) \pm \al_0 \sqrt{\la_n(\infty)}: n\in \N \big\},
\]
where $\la_n(\infty)$, $n\in\N$, are the eigenvalues of the Bessel operator $A_\infty$ with~Dirichlet boundary conditions
(that is, $\la_n(\infty)$ is the $n$-th zero of $\la \mapsto J_{l+1/2}(\sqrt{\la})$; see Definition \ref{A_theta-def} and Remark \ref{may} i)).
Thus we have
\begin{align}
\label{eig-val-id}
  \max \sigma(\CA_\infty) &= -\la_1(\infty) + |\al_0| \sqrt{\la_1(\infty)} \\
\intertext{and the estimate}
\label{eig-val-est}
  \max \sigma(\CA_\infty) &= - \left( \sqrt{\la_1(\infty)} - \frac{|\al_0|}2 \right)^2 + \frac{\al_0^2}4 \le \frac{\al_0^2}4.
\end{align}

The abstract eigenvalue estimates in Theorem \ref{!!!} and in Theorem \ref{strip}, combined with Remark \ref{siginf} iii), 
show that, for $\theta=\infty$ and $\|\al'\|=0$,
\begin{equation}
\label{eig-val-id-est}
  \max \sigma(\CA_\infty) 
  \le a_\infty=b_\infty = - \la_1(\infty) + |\al_0| \max \big\{ |\al_0|, \sqrt{\la_1(\infty)}\,\big\};
\end{equation}
note that $\CB_\infty=\CS_\infty$ and thus $b_\infty=s_\infty$ because $\alpha$ is constant (see Proposition \ref{B_theta-sim} and Theorem \ref{B_theta}). 

Comparing \eqref{eig-val-id} and \eqref{eig-val-id-est}, we see that the abstract upper bounds $a_\infty$ and $b_\infty$ for $\sigma(\CA_\infty)$ 
in Theorems \ref{!!!} and \ref{strip} are sharp for the case $\|\al\| \le \sqrt{\la_1(\infty)}$.
Moreover, \eqref{eig-val-est} coincides with the abstract estimate proved in Proposition \ref{alpha-const} for constant~$\alpha$.
\end{example}

For physical boundary conditions, even in the case of constant $\alpha$, only implicit formulas are known for the eigenvalues 
$\la_n(\CA_l)$ of $\CA_l$ (compare \cite[Section 14, in particular, (14.39), (14.41), (14.42)]{MR668520}). 

\begin{example}
\label{phys-ex} 
Let $\al\equiv \alpha_0$ be constant and consider the physical dynamo problem \eqref{diffsyst}, \eqref{boundcond} for which $\theta=l$. 
There are no explicit formulas known for the eigenvalues of this problem, or equivalently of the operator $\CA_l$ given by
\[
 \CA_l=\matrix{rc}{-A_l & \alpha_0 \\ \al_0 A_l & - A_\infty}, \quad \CD(\CA_l) = \CD(A_l) \oplus \CD(A_\infty).
\]
Note that here the operators $A_l$ and $A_\infty$ are given by the same differential expression $\tau = -\partial_r^2 + l(l+1)/r^2$, but, unlike the idealized case, they
are equipped with different boundary conditions (see Definition \ref{A_theta}). 

A point $\la\!\in\!\C$ is an eigenvalue of $\CA_l$ if and only if there exists a $y\!=\!(y_1,y_2)^{\rm t}\! \in\! L_2(0,1)\oplus L_2(0,1)$, $y\ne 0$, with
$y_i, y_i' \!\in {\rm AC_{loc}}((0,1]), \ \tau y_i \in L_2(0,1)$ for $i=1,2$, such that
\begin{align}
\label{tau1}
 (-\tau-\la) y_1 + \al_0 y_2 = 0,\\
\label{tau2}
 \al_0 \tau y_1 + (-\tau-\la) y_2 = 0,
\end{align}
and
\begin{equation} 
\label{tau0}
  y_1'(1) + ly_1(1)=0, \quad y_2(1)=0.
\end{equation}
Clearly, the equations \eqref{tau1}, \eqref{tau2} are equivalent to the relations
\begin{align}
\label{tau4}
  &\big( (\tau + \la)^2 -\al_0^2 \tau \big) y_1 = 0,\\
\label{tau3}
  &\al_0 y_2 = (\tau+\la) y_1.
\end{align}
By Proposition \ref{alpha-const}, we already know that for constant $\al$ all eigenvalues are contained in the interval $(-\infty,\al_0^2/4)$. For $\la \le \al_0^2/4$, we denote by
\[
 v_\pm(\la) := -\la + \frac{\al_0^2}2 \pm \al_0 \sqrt{\frac{\al_0^2}4 - \la} = \bigg( \frac{\al_0}2 \pm \sqrt{\frac{\al_0^2}4-\la}\bigg)^2 \ge 0
\]
the two solutions of the quadratic equation $(v+\la)^2 - \al_0^2 v = 0$. Then we can factorize the differential expression in \eqref{tau4} as
\begin{equation}
\label{factorize}
  (\tau + \la)^2 -\al_0^2 \tau = \big(\tau - v_+(\la) \big) \big(\tau - v_-(\la) \big).
\end{equation}
Since the differential expression $\tau$ is in limit point case at $0$ (see the proof of Pro\-posi\-tion~\ref{A_theta}),
every fundamental system of a differential equation $(\tau - \mu) x = 0$ has exactly one solution in $L_2(0,1)$. This and the relation \eqref{factorize} imply that 
every fundamental system of the differential equation \eqref{tau4} has exactly two solutions in $L_2(0,1)$, which may be chosen as the solutions $y_{1,l}^\pm \in L_2(0,1)$ of 
$\big(\tau - v_\pm(\la) \big) x = 0$ given by (see \eqref{RiccBessel})
\[
 y_{1,l}^\pm(r,\la) = \sqrt{\frac\pi 2} \sqrt{r \,k_\pm(\la)} \, J_{l+1/2}(r \,k_\pm(\la)), \quad r\in[0,1],
\]
with
\begin{equation}
\label{kpm}
  k_\pm(\la):= \sqrt{v_\pm(\la)} = \frac{\al_0}2 \pm \sqrt{\frac{\al_0^2}4-\la}.
\end{equation}
Note that $k_+(\la)+k_-(\la)=\al_0$ and $k_+(\la)k_-(\la)=\la$, that is, $k_\pm(\la)$ are the so\-lutions of the quadratic equation $k^2-\,\al_0 k + \la =0$.
By definition of $v_\pm(\la)$ and $k_\pm(\la)$, we have $(v_\pm(\la)+\la)^2 = \al_0^2 v_\pm(\la) = \al_0^2 k_\pm(\la)^2$. Thus
equation \eqref{tau3} implies~that
\[
 y_{2,l}^\pm(r,\la) = \frac 1{\al_0} (\tau+\la) y_{1,l}^\pm(r,\la)  =  \frac 1{\al_0} \big(v_\pm(\la)+\la \big) y_{1,l}^\pm(r,\la) = k_\pm(\la) y_{1,l}^\pm(r,\la).
\]
Therefore a fundamental system $\{y_l^+,y_l^-\}$ in $L_2(0,1) \oplus L_2(0,1)$ of the system \eqref{tau1}, \eqref{tau2} is given by
\[
  y_l^+(r,\la) = \binom 1{k_+(\la)} y_{1,l}^+(r,\la), \quad y_l^-(r,\la) = \binom 1{k_-(\la)} y_{1,l}^-(r,\la).
\]
If we take into account the boundary conditions and use relation \eqref{Bessel-diff}, we see that there exists a non-zero solution of the boundary eigenvalue problem
\eqref{tau1}, \eqref{tau2}, \eqref{tau0} if and only if
\[
 \left| \begin{array}{ll} 
 k_+(\la) y_{1,l-1}^+(1) & k_-(\la) y_{1,l-1}^-(1) \\ k_+(\la) y_{1,l}^+(1) & k_-(\la) y_{1,l}^-(1)
 \end{array} \right| = 0
\]
or, equivalently, 
\begin{equation}
\label{eig-val-phys}
 J_{l-1/2}\big(k_+(\la)\big) J_{l+1/2}\big(k_-(\la)\big) - J_{l+1/2}\big(k_+(\la)\big) J_{l-1/2}\big(k_-(\la)\big) = 0
\end{equation}
with $k_\pm(\la)$ given by \eqref{kpm}.

This shows that the eigenvalues of the physical dynamo problem \eqref{diffsyst}, \eqref{boundcond} are only given implicitly as the solutions 
of the equation \eqref{eig-val-phys}. As a consequence, even in the simplest case of constant $\al$, it is difficult to obtain any analytic information 
about the eigenvalues of $\CA_l$. 

The estimates in Theorems \ref{!!!} and \ref{strip}, however, provide the global~bound 
\[
  \max \sigma(\CA_l) \le \min \{ a_l,b_l \} = b_l
\]
where 
\[
  b_l = \left\{ \begin{array}{ll}
  \displaystyle{- \frac{\la_1(\infty)\!+\!\la_1(l)}2 \!+\! \sqrt{ \!\bigg( \!\frac{\la_1(\infty) \!-\! \la_1(l)}2\!\bigg)^{\!\!2} \!\!\!+\! \la_1(l) |\al_0|^2\!}}
                                     \ \ & \text{ if } \ |\al_0|^2 \!\le\! \la_1(\infty),  \\[2mm]    
  -\la_1(\infty) + |\al_0|^2  \ \ & \text{ if } \  |\al_0|^2 \!>\! \la_1(\infty).
  \end{array} \right.
\]
In addition, the estimate in Proposition \ref{alpha-const} for constant $\al$ yields the bound
\[
 \max \sigma(\CA_l)  \le \frac{|\al_0|^2}4,
\]
which was already used to establish the eigenvalue relation \eqref{eig-val-phys}.

Moreover, since $\CB_l=\CS_l$ for constant $\al$, Proposition \ref{july6} yields that the number $k_0$ of positive eigenvalues of $\CA_l$ coincides with the number of positive eigenvalues of the operator $-A_\infty + \al_0^2$ or, equivalently, 
\[
  \# \,\big( \sigma(\CA_l) \cap (0,\infty) \big) = \# \,\big( \sigma(A_\infty) \cap (0,\al_0^2) \big);
\]
if we enumerate the eigenvalues of $\CA_l$ and of $A_\infty$ as $0< \la_{k_0}(\CA_l) \le \dots \le \la_1(\CA_l)$ and $\la_1(\infty) \le \dots \le \la_{k_0}(\infty) < |\al_0|^2$, respectively, 
we have the estimate
\[
  0 < \la_j(\CA_l) \le -\la_j(\infty) + |\al_0|^2, \quad j=1,2,\dots, k_0.
\]
\end{example}

Finally, we consider the physical dynamo problem with non-constant helical turbulence function $\al$. 
As shown for the first time in \cite{stefani-2003-67}, there exist special $\alpha$-profiles which provide  dipole-dominated criticality 
for oscillatory dynamo regimes (that is, non-real eigenvalues passing from the left to the right half-plane first for dipole modes ($l=1$) and later for quadrupole and higher-degree modes ($l>1$)). 
Such regimes are of high physical interest due to their close relation to polarity reversal processes of the magnetic field (see \cite{stefani:184506}, \cite{2006E&PSL.243..828S}).

\begin{example}
\label{frank}
Consider the physical dynamo problem \eqref{diffsyst}, \eqref{boundcond} for $l=1$ with $\alpha$ given by
\begin{equation}
\label{frank-alpha}
  \al(r)=C(-21.46+426.41\,r^2-806.73\,r^3+392.28 r^4), \quad r\in[0,1],
\end{equation}
where $C\ge 0$ is a constant. For $C=0$ the eigenvalues are  real and they coincide with the 
interlacing sequences of eigenvalues of the Bessel operators $-A_\infty$ and $-A_l$. According to the numerical computations in
\cite{stefani-2003-67}, if $C$ increases, the largest two eigenvalues merge at $C=0.818$ and form a complex conjugate pair for $0.818 < C < 1.097$.
In between, at $C=1$ this pair crosses the imaginary axis.

According to the abstract result of Corollary \ref{meet}, the largest two eigenvalues can only meet if condition \eqref{meet-cond} is violated. 
Indeed, for $l=1$, we have
\[
  \la_1(l) = \pi^2 \approx 9.87, \quad \la_1(\infty) \approx 20.19.
\]
and, for $\al$ given by \eqref{frank-alpha} with $C=0.818$, 
\[
 \|\al\| = \max_{r\in[0,1]} |\al(r)| \approx 17.55, \quad \|\al'\| \approx \max_{r\in[0,1]} |\al'(r)| \approx 71.36,
\]
and hence condition \eqref{meet-cond} does not hold since
\[
 \|\al\|^2 + \frac{\|\al\|\,\|\al'\|}{\sqrt{\la_1(\infty)}} \approx 586.72 > 1.32 \approx \frac {\big(\la_1(\infty)-\la_1(\theta) \big)^2}{4\la_1(\infty)}.
\]
\end{example}

} 

\bibliographystyle{alpha-abbrv}
\bibliography{dynamo}

\end{document}